\theoremstyle{plain}
\newtheorem{theorem}{Theorem}[section]
\newtheorem{corollary}[theorem]{Corollary}
\newtheorem{lemma}[theorem]{Lemma}
\newtheorem{proposition}[theorem]{Proposition}
\theoremstyle{definition}
\newtheorem{definition}[theorem]{Definition}
\theoremstyle{remark}
\newtheorem{remark}[theorem]{Remark}
\newtheorem*{example}{Example}
\numberwithin{equation}{section}
\DeclareMathOperator{\ind}{\mathrm{index}}
\DeclareMathOperator{\End}{\mathrm{End}}
\DeclareMathOperator{\Ker}{\mathrm{Ker}}
\DeclareMathOperator{\Coker}{\mathrm{Coker}}
\DeclareMathOperator{\rank}{\mathrm{rank}}
\DeclareMathOperator{\sign}{\mathrm{sign}}
\newcommand{\N}{\mathbb{N}}
\newcommand{\Z}{\mathbb{Z}}
\newcommand{\R}{\mathbb{R}}
\newcommand{\C}{\mathbb{C}}
\newcommand{\T}{\mathbb{T}}
\newcommand{\supp}{\mathrm{supp}}
\newcommand{\I}{\mathcal{I}}
\newcommand{\Corner}{\mathrm{Corner}}
\newcommand{\BE}{\mathrm{BE}}
\newcommand{\id}{\mathrm{id}}
\newcommand{\A}{\mathrm{A}}
\newcommand{\AIII}{\mathrm{A\hspace{-.1em}I\hspace{-.1em}I\hspace{-.1em}I}}
\newcommand{\e}{{\bm e}}
\newcommand{\HH}{\mathcal{H}}
\newcommand{\HHa}{\HH^{\alpha}}
\newcommand{\HHb}{\HH^{\beta}}
\newcommand{\HHab}{\hat{\HH}^{\alpha,\beta}}
\newcommand{\HHs}{\check{\HH}^{\alpha,\beta}}
\newcommand{\TT}{\mathcal{T}}
\newcommand{\TTa}{\TT^{\alpha}}
\newcommand{\TTb}{\TT^{\beta}}
\newcommand{\TTab}{\hat{\TT}^{\alpha,\beta}}
\newcommand{\TTs}{\check{\TT}^{\alpha,\beta}}
\newcommand{\cA}{\mathcal{A}}
\newcommand{\cB}{\mathcal{B}}
\newcommand{\cC}{\mathcal{C}}
\newcommand{\cCa}{\cC^\alpha}
\newcommand{\cCb}{\cC^\beta}
\newcommand{\cCs}{\check{\cC}^{\alpha,\beta}}
\newcommand{\Sab}{\mathcal{S}^{\alpha, \beta}}
\newcommand{\sigmaa}{\sigma^\alpha}
\newcommand{\sigmab}{\sigma^\beta}
\newcommand{\gammaa}{\check{\gamma}^\alpha}
\newcommand{\gammab}{\check{\gamma}^\beta}
\newcommand{\Pa}{P^{\alpha}}
\newcommand{\Pb}{P^{\beta}}
\newcommand{\Pab}{\hat{P}^{\alpha,\beta}}
\newcommand{\Ps}{\check{P}^{\alpha,\beta}}
\newcommand{\cP}{\mathcal{P}}
\newcommand{\tcP}{\tilde{\mathcal{P}}}
\newcommand{\cD}{\mathcal{D}}
\newcommand{\cR}{\mathcal{R}}
\begin{document}

\title[Concave corner Toeplitz operators and corner states]{Toeplitz operators on concave corners and topologically protected corner states}
\author[S. Hayashi]{Shin Hayashi}
\address{Mathematics for Advanced Materials-OIL c/o AIMR Tohoku University, National Institute of Advanced Industrial Science and Technology, 2-1-1 Katahira, Aoba, Sendai 980-8577, Japan}
\email{{\tt shin-hayashi@aist.go.jp}}
\keywords{Toeplitz operators on concave corners, Topologically protected corner states, Bulk-edge and corner correspondence, $K$-theory and index theory}
\subjclass[2010]{Primary 19K56; Secondary 47B35, 81V99.}

\maketitle

\begin{abstract}
We consider Toeplitz operators defined on a concave corner-shaped subset of the square lattice.
We obtain a necessary and sufficient condition for these operators to be Fredholm.
We further construct a Fredholm concave corner Toeplitz operator of index one.
By using this, a relation between Fredholm indices of quarter-plane and concave corner Toeplitz operators is clarified.
As an application, topological invariants and corner states for some bulk-edges gapped Hamiltonians on two-dimensional (2-D) class AIII and 3-D class A systems with concave corners are studied.
Explicit examples clarify that these topological invariants depend on the shape of the system.
We discuss the Benalcazar--Bernevig--Hughes' 2-D Hamiltonian and see that there still exists topologically protected corner states even if we break some symmetries as long as the chiral symmetry is preserved.
\end{abstract}

\setcounter{tocdepth}{2}
\tableofcontents

\section{Introduction}

Toeplitz operators and its index theory, which have been intensively studied in mathematics, are known to play an important role also in condensed matter physics.
In this paper, we consider Toeplitz operators defined on a concave corner-shaped subset of the square lattice and study its index theory.
We then apply these results to the study of topologically protected corner states on systems with codimension-two convex and concave corners.
Benalcazar--Bernevig--Hughes' 2-D model, which leads to the recent active study of higher-order topological insulators, is also studied from this viewpoint.

The topology of gapped Hamiltonians is known to be interesting from a physical point of view \cite{PS16}.
One important aspect of {\em topological insulators} is the existence of topologically protected edge states while its bulk is gapped.
For a quantum Hall system, its topological invariant, known as the TKNN number \cite{TKNN82}, is defined as the first Chern number of the complex vector bundle (called the Bloch bundle) over the two-dimensional torus (called the Brillouin torus).
Such edge states appear corresponding to this topology.
This relation is proved by Hatsugai \cite{Hat93b} and is called the {\em bulk-edge correspondence}.
Kellendonk--Richter--Schulz-Baldes explained this correspondence as an index theory for Toeplitz operators \cite{KRSB02,KRSB00} and generalized it to disordered systems by using the noncommutative geometric technique developed by Connes and Bellissard \cite{BvES94,Connes94}.
Specifically, $K$-theory and index theory applied to the Toeplitz extension of the rotation $C^*$-algebra explains the bulk-edge correspondence for quantum Hall systems.

Apart from these studies, Toeplitz algebras associated with subsemigroups of abelian groups have been much studied \cite{BS06,CD71,Cu17,Do73}.
A cone of the square lattice is an example of such subsemigroups.
Toeplitz operators defined on cones that appear as an intersection of two half-planes are called {\em quarter-plane Toeplitz operators} \cite{DH71,Ji95,Pa90,Sim67}.
Douglas--Howe studied these operators on a quarter-plane of a special shape by using the tensor product structure of the quarter-plane Toeplitz algebra \cite{DH71}.
In this special case, Coburn--Douglas--Singer obtained an index formula to express a Fredholm index of a quarter-plane Toeplitz operator in a topological manner  \cite{CDS72}.
Park further developed Douglas--Howe's technique to the case of general quarter-planes \cite{Pa90}.
Combined with Jiang's construction of Fredholm quarter-plane Toeplitz operators \cite{Ji95}, boundary homomorphisms of $K$-theory for $C^*$-algebras associated with Park's short exact sequence are computed.
In this paper, we regard these cones (quarter-planes) as models of {\em convex corners}.

Since real materials have various shapes, to study the topology of Hamiltonians on systems of various shapes is a natural direction for further research.
In \cite{Hayashi2}, the index theory for quarter-plane Toeplitz operators is applied to the topological study of some gapped Hamiltonians on systems with codimension-two convex corners.
It is shown that for gapped Hamiltonians that are gapped not just on the bulk but also on two edges, there exists a topological invariant that is related to corner states.
In this paper, we refer this relation to the {\em bulk-edge and corner correspondence} \cite{Hayashi2}.
These results are obtained by applying $K$-theory for $C^*$-algebras for the following quarter-plane Toeplitz extension obtained by Douglas--Howe and Park in \cite{DH71,Pa90} (all symbols are defined in the main body of this paper):
\begin{equation}\label{seq1}
0 \to K(\HHab) \to \TTab \overset{\hat{\gamma}}{\to} \Sab \to 0.
\end{equation}
The topological invariant for such a gapped bulk-edges Hamiltonian is defined as an element of some $K$-group of a $C^*$-algebra, and a boundary homomorphism of the six-term exact sequence associated with some short exact sequence of $C^*$-algebras relates these two.
Moreover, in \cite{Hayashi2}, a nontrivial example is obtained by using some tensor product construction.

Recently, topologically protected corner states are intensively studied in condensed matter physics \cite{BBH17a,HWK17,KPVW18} under the name of {\em higher-order topological insulators} \cite{Frank}.
A trigger seems to be the Benalcazar--Bernevig--Hughes' paper \cite{BBH17a}.
They considered a specific 2-D (resp. 3-D) Hamiltonian on a square (resp. cube)-shaped domain.
This system has four codimension-two (resp. eight codimension-three) convex corners of the special shape.
It turns out that this system has corner states.
In order to characterized these higher order phases, they proposed topological quantities named {\em nested Wilson loops}.
On these studies, a role of some spatial symmetries is rather stressed \cite{BBH17a,HF18,Frank}

In this paper, we first study Toeplitz operators defined on a {\em concave corner-shaped subset} of the square lattice $\Z^2$.
Such a concave corner appears as a union of two half-planes.
We consider the $C^*$-algebra $\TTs$ generated by the Toeplitz operators obtained by compressing the translation operators on $\Z^2$ onto the concave corner-shaped subset and show an extension of the following form (Theorem~\ref{main}):
\begin{equation*}
0 \to K(\HHs) \to \TTs \overset{\check{\gamma}}{\to} \Sab {\to} 0.
\end{equation*}
As a result, a necessary and sufficient condition for Fredholmness of concave corner Toeplitz operators is obtained (Theorem~\ref{Fredholm}).
Further, we construct a nontrivial example of Fredholm concave corner Toeplitz operators of index one (Theorem~\ref{construction}).
Comparing them with Jiang's result \cite{Ji95}, a relation between index theory for Toeplitz operators on convex corners (quarter-planes) and concave corners is clarified (Corollary~\ref{relation}).
This result leads to a Coburn--Douglas--Singer-type index formula for Fredholm concave corner Toeplitz operators when the concave corner is of a special shape (Corollary~\ref{concaveCDS}).
In the case of quarter-planes, a linear splitting of the sequence (\ref{seq1}) is constructed by compressing half-plane Toeplitz operators onto quarter-planes \cite{Pa90}.
However, when we study concave corners, they are a subset of neither half-planes nor subsemigroups of $\Z^2$, so compressions do not, at least directly, give a linear splitting.
This is one technical difference between convex and concave cases, and so we adopt a slightly different approach although some discussions of previous results \cite{Ji95,Pa90} still technically apply in concave cases.
We first construct explicitly a rank-one projection as an element of the algebra $\TTs$ and show that the compact operator algebra is contained in this algebra (Proposition~\ref{contain}).
We then show that the quotient algebra $\TTs/K(\HHs)$ is isomorphic to the algebra $\Sab$ (Proposition~\ref{prop2}).
The surjectivity of the homomorphism $\check{\gamma}$ is proved by using the surjectivity of the homomorphism $\hat{\gamma}$ proved in \cite{Pa90} and specifying a dense subalgebra of $\Sab$ (Lemma~\ref{dense}).

We next apply these results to the study of topologically protected corner states.
In \cite{Hayashi2}, only $3$-D class A systems with codimension-two convex corners are discussed, where the short exact sequence of Theorem~\ref{main} enables us to examine such corner states of systems with concave corners.
In this paper, we mainly study $2$-D class AIII systems with codimension-two (convex and concave) corners.
We consider Hamiltonians on the square lattice and assume that they are gapped at zero, not just on the bulk but also on two edges.
For such gapped Hamiltonians, we define a topological invariant as an element of some $K$-group (Definition~\ref{gappedinvAIII}).
We also define another topological invariant for a corner Hamiltonian that is related to corner states (Definition~\ref{gaplessinvAIII}) and show a relation between these two invariants (Theorem~\ref{BECCAIII}).
Integer-valued numerical corner invariants are defined by using traces on $K(\HHab)$ and $K(\HHs)$.
When we consider two edges, we can associate convex and concave corners (see Fig.~\ref{convconc}).
Correspondingly, we can define two numerical corner invariants under our assumption.
We show that these two numerical corner invariants are different by the multiplication by $-1$ (Theorem~\ref{minusAIII}).
Through this relation, the Coburn--Douglas--Singer index formula \cite{CDS72} and its concave corner analogue (Corollary~\ref{concaveCDS}) gives a topological method to compute numerical corner invariants from gapped bulk-edges Hamiltonians.
We also see that if the rank of the space of the internal degree of freedom is two, then our corner topological invariants are necessarily zero (Proposition~\ref{remrank}).
Thus, in order to find a nontrivial example, its rank must be greater than or equal to four.

We further give a construction of explicit examples by using tensor products, as in \cite{Hayashi2}.
We construct some gapped Hamiltonians from two Hamiltonians of $1$-D class AIII (conventional) topological insulators, and the numerical convex corner invariant is given as a product of topological numbers of these two (Theorem~\ref{prodthmAIII}).
By using this construction, we provide an explicit example of Hamiltonians with nontrivial convex and concave corner invariants (Sect.~$5$).
This example clarifies that these corner invariants may change depending on the shape of the system.
Actually, the example discussed there corresponds to the 2-D Hamiltonian discussed by Benalcazar--Bernevig--Hughes in \cite{BBH17a} (Equation (6) of \cite{BBH17a}.
We refer this model to the {\em 2-D BBH model}) when we take parameters in some specific way.
Based on the chiral symmetry, we define an integer-valued topological invariant for the 2-D BBH model and compute it.
The bulk-edge and corner correspondence gives another explanation of the existence of topologically protected corner states for this model.
While a role of spatial symmetries is much discussed in studies of higher-order topological insulators \cite{BBH17a}, our method does not require any spatial symmetry.
Through an example, we see that topologically protected corner states remain even if we break some symmetries which the 2-D BBH model originally have as long as the chiral symmetry is preserved.
Some corresponding results in the case of 3-D class A systems are also collected in Sect.~$4.2$.

This paper is organized as follows.
In Sect.~$2$, we define concave corner Toeplitz operators and introduce the $C^*$-algebra $\TTs$ generated by these operators. In this section, we show a short exact sequence and obtain a necessary and sufficient condition for concave corner Toeplitz operators to be Fredholm.
In Sect.~$3$, we construct an explicit example of a concave corner Fredholm Toeplitz operator of index one and collects some of its consequences.
In Sect.~$4$, we apply these result to the study of topologically protected corner states.
We mainly treat 2-D class AIII systems, though the results for 3-D class A systems are also collected.
In Sect.~$5$, we consider an explicit example of 2-D class AIII Hamiltonian whose corner invariant is nontrivial on a system with a codimension-two (convex and concave) corner.
We also discuss the 2-D BBH model from our viewpoint there.

\section{Concave corner Toeplitz algebras and their extension}
In this paper, we mainly consider concave corners, that is, corners whose angles are strictly greater than $\pi$.
In particular, we study an index theory for Toeplitz operators defined on concave corners.
In this section, we define such operators and study their properties.
Specifically, we consider a $C^*$-algebra generated by concave corner Toeplitz operators and show a short exact sequence that clarifies a necessary and sufficient condition for these operators to be Fredholm.
In this paper, we use only basics about $K$-theory for $C^*$-algebras.
Details can be found in \cite{Bl98,HR00,Mur90,RLL00}, for example.

\subsection{Setup}
Let $\HH$ be the Hilbert space $l^2(\Z^2)$.
For a pair of integers $(m, n)$, let ${ e_{m,n}}$ be the element of $\HH$ that is $1$ at $(m,n)$ and $0$ elsewhere.
For $(m, n) \in \Z^2$, let $M_{m,n} \colon \HH \to \HH$ be the translation operator defined by $(M_{m,n}\varphi)(k,l) = \varphi(k-m, l-n).$\footnote{Note that our choice of translation direction is the same as \cite{Ji95} and different from \cite{Pa90}. In our definition, $M_{m,n}{  e_{s,t}} = {  e_{s+m, t+n}}$ holds.}
We choose real numbers $\alpha < \beta$, and let $\HHa$ and $\HHb$ be the closed subspaces of $\HH$ spanned by $\{ {  e_{m,n}} \mid -\alpha m + n \geq 0 \}$ and $\{ {  e_{m,n}} \mid -\beta m + n \leq 0 \}$, respectively.
$\HHa$ and $\HHb$ model half-planes distinguished by lines $y = \alpha x$ and $y = \beta x$ (see the left-hand side of Fig.~\ref{convconc}).
We here consider two models of spaces with codimension-two boundaries, which we call {\em corners}.
One is an intersection of two half-planes, and the other is a union of these two.
We refer to these two as a {\em convex corner} and a {\em concave corner}, respectively\footnote{The square lattice $\Z^2$ is naturally embedded in the Euclidean space $\R^2$. As a subset of $\R^2$, what we called convex corners are {\em not} convex sets. We here use the words {\em convex} and {\em concave} just to distinguish the two models of corners indicated in Fig.~\ref{convconc}.} (see Fig.~\ref{convconc}).
Specifically, let $\hat{\Sigma} := \{ (x,y) \in \Z^2 \mid -\alpha x + y \geq 0 \ \text{and} -\beta x + y \leq 0 \}$,
and let $\HHab$ be the closed subspace of $\HH$ spanned by elements in the set $\{ \e_{x,y} \mid (x,y) \in \hat{\Sigma} \}$.
Note that the Hilbert space $\HHab$ is intersection $\HHa \cap \HHb$ of $\HHa$ and $\HHb$.
We regard $\HHab$ as a model of a convex corner.
Let $\Pab$ be the orthogonal projection of $\HH$ onto $\HHab$.
Note that $\Pab = \Pa \Pb = \Pb \Pa$.
Let $\check{\Sigma} := \{ (x,y) \in \Z^2 \mid -\alpha x + y \geq 0 \ \text{or} -\beta x + y \leq 0 \}$,
and let $\HHs$ be the closed subspace of $\HH$ spanned by elements in the set $\{ \e_{x,y} \mid (x,y) \in \check{\Sigma} \}$.
We regard $\HHs$ as a model of a concave corner.
Let $\Ps$ be the orthogonal projection of $\HH$ onto $\HHs$.
Note that $\Ps = \Pa + \Pb - \Pa\Pb$.
In what follows, we consider operators on these Hilbert spaces.
The real numbers $\alpha$ and $\beta$ correspond to the slope of two edges (Fig.~\ref{convconc}).
We can take $\alpha = - \infty$ or $\beta = +\infty$, but not both (if $\alpha = - \infty$ and $\beta = +\infty$, the ``corner'' will be the ``edge'').
If we fix $\alpha$ and $\beta$, we can consider two types of corners, that is, convex and concave corners.
In this paper, we treat both of these cases\footnote{In order to distinguish these two cases, we use {\em hat} ``$\wedge$'' for objects associated with convex corners and {\em check} ``$\vee$'' for those with concave corners (e.g., $\HHab$ and $\HHs$).}.

\begin{remark}\label{othercases}
In the main body of this paper, we just treat the case in which the corner (or edges) includes lattice points on lines $y = \alpha x$ and $y = \beta x$.
We can consider variants that do not contain these points.
For these cases, the results of this paper still hold.
Some results in these cases are collected in the appendix of this paper.
\end{remark}
\begin{figure}
\centering
\includegraphics[width=110mm]{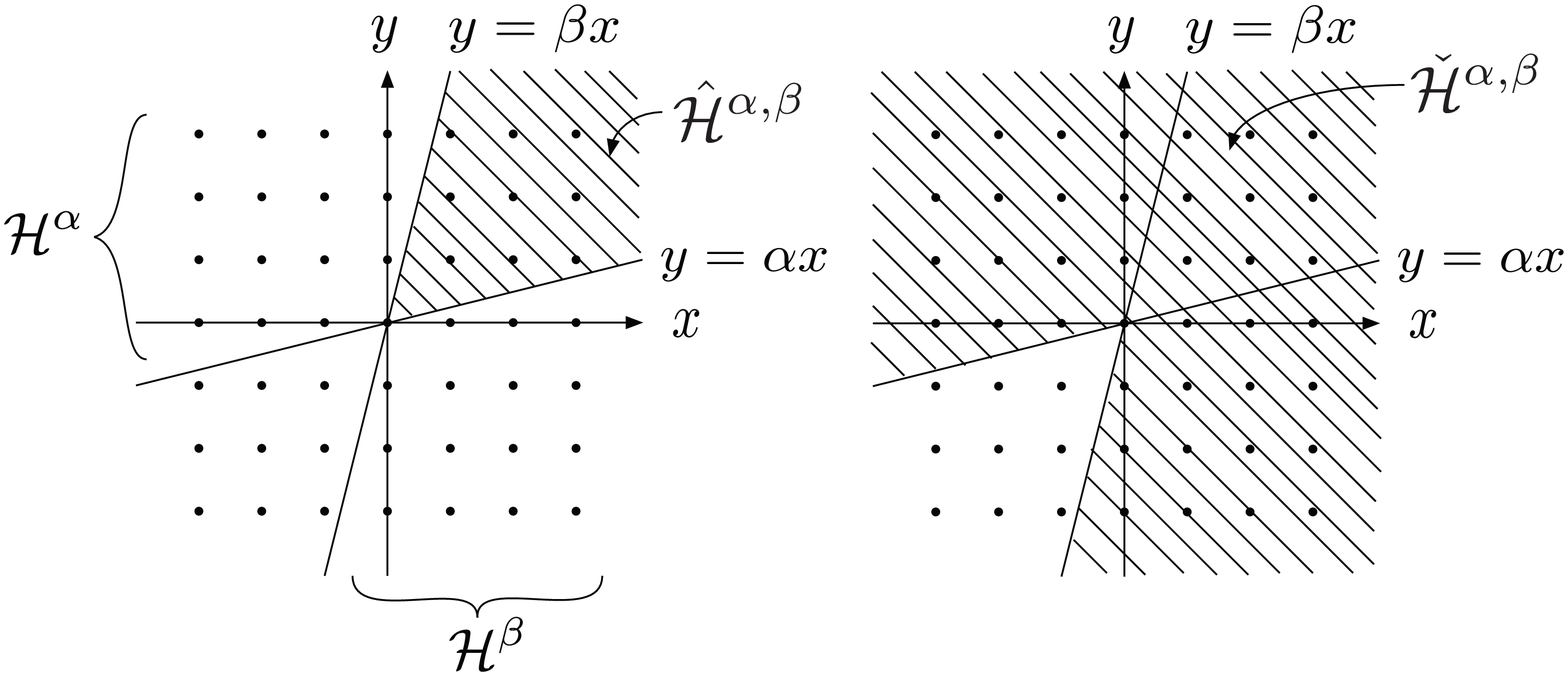}
\caption{A convex corner (left) and a concave corner (right) correspond to shaded area}
\label{convconc}
\end{figure}
The {\em quarter-plane Toeplitz $C^*$-algebra} \cite{DH71,Pa90} is defined to be the $C^*$-subalgebra $\TTab$ of $B(\HHab)$ generated by $\{ \Pab M_{m,n} \Pab \mid (m,n) \in \Z^2 \}$.
Similarly, we define the {\em concave corner Toeplitz $C^*$-algebra} to be the $C^*$-subalgebra $\TTs$ of $B(\HHs)$ generated by $\{ \Ps M_{m,n} \Ps \mid (m,n) \in \Z^2 \}$.
We also define the {\em half-plane Toeplitz $C^*$-algebras} $\TTa$ and $\TTb$ to be $C^*$-subalgebras of $B(\HHa)$ and $B(\HHb)$ generated by $\{ P^\alpha M_{m,n} P^\alpha \mid (m,n) \in \Z^2 \}$ and $\{ P^\beta M_{m,n} P^\beta \mid (m,n) \in \Z^2 \}$, respectively.
Let $\cCa$, $\cCb$ and $\cCs$ be the commutator ideals of $\TTa$, $\TTb$ and $\TTs$, respectively.
As is shown in \cite{CD71}, we have surjective $*$-homomorphisms $\sigmaa \colon\TTa \to C(\T^2)$ and $\sigmab \colon \TTb \to C(\T^2)$ that map $P^\alpha M_{m,n} P^\alpha$ to $\chi_{m,n}$ and $P^\beta M_{m,n} P^\beta$ to $\chi_{m,n}$, respectively, where $\chi_{m,n}(\xi,\eta) = \xi^m \eta^n$.
As in \cite{Pa90}, we define a $C^*$-algebra $\mathcal{S}^{\alpha, \beta}$ to be the pullback of $\TTa$ and $\TTb$ along $C(\T^2)$, that is,
$\Sab := \{ (T^\alpha, T^\beta) \in \TTa \oplus \TTb \mid \sigmaa(T^\alpha) = \sigmab(T^\beta) \}$.
As is shown in \cite{Pa90}, we have surjective $*$-homomorphisms $\hat{\gamma}^\alpha \colon \TTab \to \TTa$ and $\hat{\gamma}^\beta \colon \TTab \to \TTb$ that map $\Pab M_{m,n} \Pab$ to $\Pa M_{m,n} \Pa$ and $\Pab M_{m,n} \Pab$ to $\Pb M_{m,n} \Pb$, respectively.
By using these two, we obtain surjective $*$-homomorphism $\hat{\gamma} \colon \TTab \to \Sab$ given by $\hat{\gamma}(T) = (\hat{\gamma}^\alpha(T), \hat{\gamma}^\beta(T))$.
We write $p^\alpha \colon \Sab \to \TTa$ and $p^\beta \colon \Sab \to \TTb$ for the $*$-homomorphisms given by projections onto each component.
\vspace{-2mm}
\begin{equation}\label{Sab}
\vcenter{
\xymatrix{
\Sab \ar[r]^{p^\beta} \ar[d]_{p^\alpha}& \TTb \ar[d]^{\sigmab}
\\
\TTa \ar[r]^{\sigmaa} & C(\T^2).
}}\vspace{-1mm}
\end{equation}
We write $\sigma$ for the composition $\sigmaa \circ p^\alpha = \sigmab \circ p^\beta$.

Note that the dense subalgebras of $\TTa$, $\TTb$, $\TTab$ and $\TTs$ consist of the following operators:
\vspace{-2mm}
\begin{equation}\label{densea}
	\text{For} \ \ \TTa  \ \colon \ \sum_{i=1}^l c_i \Pa M_{m_{i0}, n_{i0}} \biggl( \prod_{j=1}^{k_i} \Pa M_{m_{ij}, n_{ij}} \biggl) \Pa,
\end{equation}
\vspace{-2mm}
\begin{equation}\label{denseb}
	\text{For} \ \ \TTb \ \colon \ \sum_{i=1}^l c_i \Pb M_{m_{i0}, n_{i0}} \biggl( \prod_{j=1}^{k_i} \Pb M_{m_{ij}, n_{ij}} \biggl) \Pb,
\end{equation}
\vspace{-2mm}
\begin{equation}\label{denseab}
	\text{For} \ \ \TTab \ \colon \ \sum_{i=1}^l c_i \Pab M_{m_{i0}, n_{i0}} \biggl( \prod_{j=1}^{k_i} \Pab M_{m_{ij}, n_{ij}} \biggl) \Pab,
\end{equation}
\vspace{-2mm}
\begin{equation}\label{denses}
	\text{For} \ \ \TTs \ \colon \ \sum_{i=1}^l c_i \Ps M_{m_{i0}, n_{i0}} \biggl( \prod_{j=1}^{k_i} \Ps M_{m_{ij}, n_{ij}} \biggl) \Ps,
\end{equation}
where $c_i \in \C$.

\begin{lemma}\label{dense}
A dense subalgebra of $\Sab$ consists of the pairs of operators of the following form:
\begin{equation}\label{densess}
	\biggl( \sum_{i=1}^l \hspace{-0.5mm} c_i \Pa \hspace{-0.5mm} M_{m_{i0}\hspace{-0.3mm}, n_{i0}} \hspace{-0.5mm} \biggl(\prod_{j=1}^{k_i} \hspace{-0.5mm} \Pa \hspace{-0.5mm} M_{m_{ij} \hspace{-0.3mm},n_{ij}} \hspace{-1mm} \biggl) \hspace{-0.5mm} \Pa \hspace{-0.5mm}, \hspace{-0.5mm} \sum_{i=1}^l \hspace{-0.5mm} c_i \Pb \hspace{-0.5mm} M_{m_{i0}\hspace{-0.3mm}, n_{i0}} \hspace{-0.5mm} \biggl( \prod_{j=1}^{k_i} \hspace{-0.5mm} \Pb \hspace{-0.5mm} M_{m_{ij}\hspace{-0.3mm}, n_{ij}} \biggl) \Pb \hspace{-0.5mm} \biggl)
\end{equation}
where $c_i \in \C$.
\end{lemma}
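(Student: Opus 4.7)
The plan is to deduce the claim directly from the surjectivity of the $*$-homomorphism $\hat{\gamma} \colon \TTab \to \Sab$ established in \cite{Pa90}, rather than trying to approximate the two components of a pair in $\Sab$ separately. First, I would check that every pair of the form (\ref{densess}) actually lies in $\Sab$: since $\sigmaa(\Pa M_{m,n} \Pa) = \chi_{m,n} = \sigmab(\Pb M_{m,n} \Pb)$ and both $\sigmaa$, $\sigmab$ are $*$-homomorphisms, applying them to the first and second component respectively of a pair (\ref{densess}) yields the same trigonometric polynomial $\sum_{i=1}^l c_i \chi_{m_{i0}, n_{i0}} \prod_{j=1}^{k_i} \chi_{m_{ij}, n_{ij}}$ in $C(\T^2)$. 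Hence the pairs (\ref{densess}) form a $*$-subalgebra of $\Sab$.

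Next, I would observe that $\hat{\gamma}$ carries the dense subalgebra (\ref{denseab}) of $\TTab$ exactly onto this subalgebra of pairs. This is immediate from the fact that $\hat{\gamma}(\Pab M_{m,n} \Pab) = (\Pa M_{m,n} \Pa, \Pb M_{m,n} \Pb)$ together with the $*$-algebra homomorphism property of $\hat{\gamma}$: a finite linear combination of products of generators of $\TTab$ as in (\ref{denseab}) is sent to precisely the corresponding pair of the form (\ref{densess}), with identical coefficients $c_i$ and identical index tuples $(m_{ij}, n_{ij})$.

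Finally, to obtain density I would combine the previous observation with the fact that $\hat{\gamma}$ is surjective (by \cite{Pa90}) and automatically norm-contractive as a $*$-homomorphism between $C^*$-algebras, so in particular continuous and open onto its image. Concretely, given $(T^\alpha, T^\beta) \in \Sab$ and $\varepsilon > 0$, surjectivity provides a lift $T \in \TTab$ with $\hat{\gamma}(T) = (T^\alpha, T^\beta)$; density of (\ref{denseab}) in $\TTab$ yields an element $A$ of that form with $\|T - A\| < \varepsilon$; then $\hat{\gamma}(A)$ has the required form (\ref{densess}) and satisfies $\|(T^\alpha, T^\beta) - \hat{\gamma}(A)\| \leq \|T - A\| < \varepsilon$. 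I do not anticipate any real obstacle, since the only deep input—the surjectivity of $\hat{\gamma}$—is imported from \cite{Pa90}, and the remainder is the standard principle that a continuous surjection sends a dense subset onto a dense subset.
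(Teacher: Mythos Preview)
Your proposal is correct and follows essentially the same approach as the paper: both arguments invoke the surjective $*$-homomorphism $\hat{\gamma}\colon \TTab \to \Sab$ from \cite{Pa90}, compute $\hat{\gamma}$ on the dense subalgebra (\ref{denseab}) of $\TTab$, and conclude that the resulting pairs of the form (\ref{densess}) are dense in $\Sab$. Your write-up adds a few extra details (explicit verification that the pairs lie in $\Sab$, an $\varepsilon$-argument), but the underlying idea is identical.
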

\begin{proof}
As is shown in \cite{Pa90}, we have a surjective $*$-homomorphism $\hat{\gamma} \colon \TTab \rightarrow \Sab$.
An image of a dense subalgebra of the algebra $\TTab$ under the surjective $*$-homomorphism $\hat{\gamma}$ is a dense subalgebra of $\Sab$.
A dense subalgebra of $\TTab$ consists of operators of the form (\ref{denseab}).
For an operator $\hat{T}$ of the form (\ref{denseab}),
\begin{gather*}
	\hat{\gamma}(\hat{T}) = (\hat{\gamma}^\alpha(\hat{T}), \hat{\gamma}^\beta(\hat{T}))=\\
		\biggl( \sum_{i=1}^l \hspace{-0.5mm} c_i \Pa \hspace{-0.5mm} M_{m_{i0}\hspace{-0.3mm}, n_{i0}} \hspace{-0.5mm} \biggl(\prod_{j=1}^{k_i} \hspace{-0.5mm} \Pa \hspace{-0.5mm} M_{m_{ij} \hspace{-0.3mm},n_{ij}} \hspace{-1mm} \biggl) \hspace{-0.5mm} \Pa \hspace{-0.5mm}, \hspace{-0.5mm} \sum_{i=1}^l \hspace{-0.5mm} c_i \Pb \hspace{-0.5mm} M_{m_{i0}\hspace{-0.3mm}, n_{i0}} \hspace{-0.5mm} \biggl( \prod_{j=1}^{k_i} \hspace{-0.5mm} \Pb \hspace{-0.5mm} M_{m_{ij}\hspace{-0.3mm}, n_{ij}} \biggl) \Pb \hspace{-0.5mm} \biggl).\vspace{-2mm}
\end{gather*}
Thus, the pairs of operators of this form compose a dense subalgebra of $\Sab$.
\end{proof}

\subsection{Surjective $*$-homomorphisms from $\TTs$ to $\TTa$ and $\TTb$}
In this subsection, we construct $*$-homomorphisms from $\TTs$ to $\TTa$ and $\TTb$.
We basically follow the proof of Proposition~$1.2$ of \cite{Pa90}, which treats convex corners, but some points should be modified in our concave case.
We first prepare the following lemma.

\begin{lemma}\label{lem2}
Let $\{ (m_i, n_i ) \}$ be a finite collection of pairs of integers.
Then, there exists a pair of integers $(r,s)$ such that, for all $i$,
\begin{itemize}
	\item $-\alpha (m_i - r)  + (n_i - s) \geq 0$ if and only if $-\alpha m_i + n_i \geq 0$,
	\item $-\beta(m_i - r) + (n_i - s) > 0$.
\end{itemize}
\end{lemma}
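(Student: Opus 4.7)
The plan is to reformulate the two desired conditions in terms of the real parameters $R := \alpha r - s$ and $S := \beta r - s$, and then realize them using integer $(r,s)$ by a density argument on $\alpha$. Setting $a_i := -\alpha m_i + n_i$ and $b_i := -\beta m_i + n_i$, the first condition becomes
\begin{equation*}
  R + a_i \geq 0 \iff a_i \geq 0 \quad \text{for all } i,
\end{equation*}
which is guaranteed by $R \in [0, U)$ with $U := \min\{-a_i \mid a_i < 0\}$ (taking $U := +\infty$ if no $a_i$ is negative); note that $U > 0$ since $a_i < 0$ forces $-a_i > 0$. The second condition becomes $S > M$ where $M := \max_i(-b_i)$.

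I would then produce such $(r,s)$ by setting $s := \lfloor \alpha r \rfloor$, so that $R = \{\alpha r\} \in [0, 1)$ and $S = R + (\beta - \alpha) r$. The task reduces to finding a positive integer $r$ with $\{\alpha r\} < U$ and $(\beta - \alpha) r > M$; the latter is fulfilled for all sufficiently large $r$ because $\beta - \alpha > 0$. If $\alpha \in \mathbb{Q}$, write $\alpha = p/q$ in lowest terms and take $r = kq$ for large $k$, so $\{\alpha r\} = 0 < U$. If $\alpha \notin \mathbb{Q}$, the sequence $\{r\alpha\}_{r \in \Z_{>0}}$ is equidistributed in $[0,1)$ by Weyl's theorem, so infinitely many $r$ satisfy $\{\alpha r\} < U$, and among these one can be chosen larger than $M/(\beta - \alpha)$.

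The remaining cases $\alpha = -\infty$ and $\beta = +\infty$ are easier: the corresponding boundary line is vertical and decouples the two conditions. For example, if $\alpha = -\infty$ the $\alpha$-condition reduces to $m_i - r \geq 0 \iff m_i \geq 0$, which is trivially satisfied by $r = 0$, after which $s$ is chosen sufficiently negative to meet the $\beta$-condition. The main obstacle is thus the irrational case of $\alpha$: since then there is no integer pair with $\alpha r = s$ exactly, one cannot land on $R = 0$ and must instead invoke a Diophantine approximation input to place $\{\alpha r\}$ inside the narrow window $[0, U)$ while simultaneously keeping $r$ large enough for the $\beta$-estimate.
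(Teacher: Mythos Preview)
Your argument is correct and follows the same route as the paper: both reduce the claim to finding integers $(r,s)$ with $0 \le \alpha r - s$ small and $\beta r - s$ large, then invoke a Diophantine approximation input. The only difference is the input itself---the paper quotes the Dirichlet-type approximation from Hardy--Wright (yielding $0 \le \alpha - s/r < 1/r^2$ with $r$ arbitrarily large in one stroke), whereas you take $s=\lfloor \alpha r\rfloor$, split into rational/irrational $\alpha$, and use Weyl equidistribution in the irrational case; either works, though Weyl is heavier than needed (the pigeonhole argument behind Dirichlet already suffices).
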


\begin{proof}
We choose $\epsilon > 0$, $M > 0$ so that $\epsilon < \min \{ \alpha m_i - n_i \mid -\alpha m_i + n_i < 0 \}$ and $-M \leq \min \{ -\beta m_i + n_i \}$.
Then, it suffices to show that there exist some integers $r$ and $s$ such that
\begin{equation*}
	0 \leq \alpha r -s < \epsilon \qquad \text{and} \qquad -\beta r + s < -M.
\end{equation*}

As in \cite{Pa90}, we here use the following result contained in \cite{HW08}: there exists a positive integer $r$ and an integer $s$ such that
\begin{equation*}
	0 \leq \alpha - \frac{s}{r} < \frac{1}{r^2} \qquad \text{and} \qquad r > \max \biggl\{ \frac{1}{\epsilon}, \frac{M}{\beta - \alpha} \biggl\}.
\end{equation*}
For such $r$ and $s$, we have
\vspace{-2mm}
\begin{equation*}
	0 \leq \alpha r - s < \frac{1}{r} < \epsilon,
	\vspace{-1mm}
\end{equation*}
and
\begin{equation*}\vspace{-1mm}
	-\beta r + s = -(\beta - \alpha)r + (-\alpha r + s) \leq -(\beta - \alpha) r \leq -M.
\end{equation*}
as desired.
\end{proof}

\begin{proposition}\label{prop1}
There exists surjective $*$-homomorphisms
\begin{equation*}
	\gammaa \colon \TTs \to \TTa, \ \ \gammab \colon \TTs \to \TTb.
\end{equation*}
\end{proposition}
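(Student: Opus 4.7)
The plan is to define $\gammaa$ on the dense subalgebra (\ref{denses}) by the natural formula
\begin{equation*}
\gammaa\Bigl(\sum_{i=1}^l c_i \Ps M_{m_{i0},n_{i0}} \prod_{j=1}^{k_i} \Ps M_{m_{ij},n_{ij}} \Ps\Bigr) := \sum_{i=1}^l c_i \Pa M_{m_{i0},n_{i0}} \prod_{j=1}^{k_i} \Pa M_{m_{ij},n_{ij}} \Pa,
\end{equation*}
and to extend by continuity. Multiplicativity and the $*$-property are immediate from the formula, and surjectivity is clear once $\gammaa$ is in hand: its image already contains the generators $\Pa M_{m,n} \Pa$ of $\TTa$. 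The real content is therefore the estimate $\|\gammaa(\check{T})\|_{\TTa} \leq \|\check{T}\|_{\TTs}$ on the dense subalgebra, which yields well-definedness and continuity simultaneously.

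To obtain this estimate I would, for each $\varphi \in \HHa$ of finite support, construct a translation $(r,s) \in \Z^2$ satisfying
\begin{equation*}
\check{T}\, M_{-r,-s} \varphi \;=\; M_{-r,-s}\, \gammaa(\check{T})\, \varphi.
\end{equation*}
Unitarity of $M_{-r,-s}$ then gives $\|\gammaa(\check{T})\varphi\| \leq \|\check{T}\|\,\|\varphi\|$, and density of finitely supported vectors in $\HHa$ completes the bound. The pair $(r,s)$ is supplied by Lemma~\ref{lem2} applied to the finite set $F$ of \emph{intermediate positions} traversed while $\gammaa(\check{T})\varphi$ is evaluated step by step, concretely the lattice points $(p,q) + \sum_{k=j}^{k_i}(m_{ik},n_{ik})$ for $(p,q) \in \supp \varphi$, $1 \leq i \leq l$, and $0 \leq j \leq k_i+1$ (with the convention that the empty sum vanishes).

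The consequence of the lemma that drives the argument is that, for every $(X,Y) \in F$, the translate $(X-r,Y-s)$ lies strictly above the line $y = \beta x$ (so it is not in $\hat{\Sigma}^\beta$) and belongs to $\hat{\Sigma}^\alpha$ precisely when $(X,Y)$ does. Combining the two, $(X-r,Y-s) \in \check{\Sigma}$ if and only if $(X,Y) \in \hat{\Sigma}^\alpha$, which is exactly the condition under which $\Pa$ fixes $\e_{X,Y}$. A short induction on the number of $\Ps$-factors then yields the displayed identity: after the global shift by $(-r,-s)$, each application of $\Ps$ on the translated side acts as the identity precisely when the corresponding $\Pa$ acts as the identity on the un-translated side, so the two computations remain in lockstep.

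The homomorphism $\gammab$ is constructed by the symmetric argument, using a variant of Lemma~\ref{lem2} that pushes intermediate positions strictly \emph{below} the line $y = \alpha x$; there $\Pa$ vanishes and the decomposition $\Ps = \Pa + \Pb - \Pa\Pb$ collapses to $\Pb$. The principal conceptual obstacle—and the source of the divergence from the convex treatment of \cite{Pa90}, where Park translates points \emph{into} $\hat{\Sigma}$ so that $\Pab$ becomes the identity—is that here one must translate vectors \emph{out of} one of the two half-planes in order to make $\Ps$ collapse to a single half-plane projection. The practical work then reduces to the bookkeeping of the set $F$, which depends on both $\varphi$ and $\check{T}$, but Lemma~\ref{lem2} applies to any prescribed finite set of lattice points, so this presents no essential difficulty.
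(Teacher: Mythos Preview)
Your proposal is correct and follows essentially the same route as the paper: define $\gammaa$ on the dense subalgebra by replacing $\Ps$ with $\Pa$, then prove the norm inequality $\|\gammaa(\check{T})\|\leq\|\check{T}\|$ by translating a finitely supported $\varphi\in\HHa$ via Lemma~\ref{lem2} applied to the finite set of intermediate positions, so that on the translated side each $\Ps$ collapses to $\Pa$. The paper records the key identity as the pair $\gammaa(\check{T})M_{-r,-s}f=M_{-r,-s}\gammaa(\check{T})f$ and $\gammaa(\check{T})M_{-r,-s}f=\check{T}M_{-r,-s}f$, which together yield your single displayed equation; otherwise the arguments coincide.
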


\begin{proof}
For $\check{T} = \sum_{i=1}^l c_i \Ps M_{m_{i0}, n_{i0}} \bigl( \prod_{j=1}^{k_i} \Ps M_{m_{ij}, n_{ij}} \bigl) \Ps$, we set
\vspace{-2mm}
\begin{equation}\label{image}
	\gammaa(\check{T}) := \sum_{i=1}^l c_i \Pa M_{m_{i0}, n_{i0}} \biggl( \prod_{j=1}^{k_i} \Pa M_{m_{ij}, n_{ij}} \biggl) \Pa.
\vspace{-2mm}
\end{equation}
and
\vspace{-2mm}
\begin{equation*}
	\gammab(\check{T}) := \sum_{i=1}^l c_i \Pb M_{m_{i0}, n_{i0}} \biggl( \prod_{j=1}^{k_i} \Pb M_{m_{ij}, n_{ij}} \biggl) \Pb.
\end{equation*}
To show that $\gammaa$ and $\gammab$ are well-defined and extend to $*$-homomorphisms on $\TTs$, it is sufficient to show $\| \gammaa(\check{T}) \| \leq \| \check{T} \|$ and $\| \gammab(\check{T}) \| \leq \| \check{T} \|$.
We here discuss $\gammaa$ only. The result for $\gammab$ is proved in almost the same way.

Let $\epsilon > 0$.
We take $f \in \HHa$ such that $f$ has a finite support, $\| f \|= 1$ and $\| \gammaa(\check{T}) \| \leq \| \gammaa(\check{T}) f \| + \epsilon$.
Let $S$ be the union of the set $\supp(f)$ and the following set
\vspace{-2mm}
\begin{equation*}
	\left\{ \biggl( m_0 + \sum_{j=N}^{k_i} m_{ij}, n_0 + \sum_{j=N}^{k_i} n_{ij} \biggl) \ \Biggl\vert
\begin{array}{ll}
 (m_0, n_0) \in \supp(f),\\
  \ 0 \leq i \leq l, 0 \leq N \leq k_i
 \end{array}
\right\}.
\end{equation*}
The set $S$ is a finite subset of $\Z^2$.
Applying Lemma~\ref{lem2} to the set $S$, we obtain a pair $(r, s)$ of integers such that for any $(m, n) \in S$, we have
\begin{itemize}
	\item $-\alpha (m-r)  + (n-s) \geq 0$ if and only if $-\alpha m + n \geq 0$,
	\item $-\beta(m-r) + (n - s) > 0$.
\end{itemize}
This leads to the following relation:
\begin{itemize}
	\item $\gammaa(\check{T})M_{-r, -s} f = M_{-r, -s} \gammaa(\check{T}) f$,
	\item $\gammaa(\check{T})M_{-r, -s} f = \check{T} M_{-r, -s} f$.
\end{itemize}
By using this, we have
\begin{align*}
	\| \gammaa(\check{T}) \| &\leq \| \gammaa(\check{T})f \| + \epsilon 
	  = \| M_{-r,-s} \gammaa(\check{T}) f \| + \epsilon \\
	  &= \| \gammaa(\check{T}) M_{-r,-s} f \| + \epsilon 
	  = \| \check{T} M_{-r,-s} f \| + \epsilon \\
	  &\leq \| \check{T} \| \| M_{-r,-s} \| \| f \| + \epsilon
	  = \| \check{T} \| + \epsilon.
\end{align*}
Thus, $\| \gammaa(\check{T}) \| \leq \| \check{T} \|$ holds.

Since $\gammaa$ is a $*$-homomorphism and operators of the form $(\ref{image})$ compose a dense subalgebra of $\TTa$, the map $\gammaa$ is surjective.
\end{proof}

Since $\gammaa \circ \sigmaa = \gammab \circ \sigmab$, we have a $*$-homomorphism
$\check{\gamma} \colon \TTs \to \Sab$ given by $\check{\gamma}(\check{T}) = (\gammaa(\check{T}), \gammab(\check{T}))$.
By Lemma~\ref{dense}, the map $\check{\gamma}$ is surjective.

\vspace{-2mm}
\subsection{$K(\HHs) \subset \TTs$}
For $(x,y) \in \check{\Sigma}$, let $p_{x,y}$ be the orthogonal projection of $\HHs$ onto $\C \e_{x,y}$. 
In this subsection, we show the following proposition by constructing explicit rank-one projections contained in the algebra $\TTs$.
\begin{proposition}\label{contain}
$K(\HHs) \subset \TTs$.
Moreover, $K(\HHs)$ is contained in $\Ker \check{\gamma}$.
\end{proposition}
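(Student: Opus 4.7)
The plan is to construct an explicit rank-one projection $p_0 = p_{x_0, y_0}$ in $\TTs$ at some lattice point $(x_0, y_0) \in \check{\Sigma}$ near the apex of the concave corner, and then bootstrap from there.

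For any $v \in \Z^2$, a direct calculation shows that $\Ps M_v \Ps \cdot \Ps M_{-v} \Ps$ is the diagonal projection in the basis $\{\e_{x,y}\}_{(x,y)\in\check{\Sigma}}$ onto the span of those $\e_{x,y}$ for which $(x,y)-v \in \check{\Sigma}$. Consequently $I - \Ps M_v \Ps \cdot \Ps M_{-v} \Ps$ projects onto the ``$v$-boundary'' $\partial^v\check{\Sigma} := \{(x,y)\in\check{\Sigma} : (x,y)-v \notin \check{\Sigma}\}$, a set determined by the two edges $y = \alpha x$ and $y = \beta x$. The central step is to choose three integer vectors $v_A, v_B, v_C$ adapted to $(\alpha,\beta)$ so that $\partial^{v_A}\check{\Sigma} = \partial^{v_B}\check{\Sigma} \sqcup \partial^{v_C}\check{\Sigma} \sqcup \{(x_0, y_0)\}$: geometrically, $v_A$ probes both edges (including the apex), while $v_B$ and $v_C$ each probe only one edge away from the apex. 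In the illustrative case $\alpha < 0 < \beta$, the triple $v_A = (1, 0)$, $v_B = (0, -1)$, $v_C = (0, 1)$ does the job with $(x_0, y_0) = (0, 0)$. Since the three operators are simultaneously diagonal they commute, and
\begin{equation*}
p_0 := \bigl(I - \Ps M_{v_A}\Ps \cdot \Ps M_{-v_A}\Ps\bigr) - \bigl(I - \Ps M_{v_B}\Ps \cdot \Ps M_{-v_B}\Ps\bigr) - \bigl(I - \Ps M_{v_C}\Ps \cdot \Ps M_{-v_C}\Ps\bigr)
\end{equation*}
is itself a diagonal projection with range $\C \e_{x_0, y_0}$, giving $p_0 = p_{x_0, y_0} \in \TTs$.

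The inclusion $K(\HHs) \subset \TTs$ then follows by the usual bootstrap: for any $(x, y), (x', y') \in \check{\Sigma}$, the matrix unit $|\e_{x', y'}\rangle\langle\e_{x, y}|$ equals $(\Ps M_{(x', y')-(x_0, y_0)} \Ps) \cdot p_0 \cdot (\Ps M_{(x_0, y_0)-(x, y)} \Ps) \in \TTs$, so $\TTs$ contains every finite-rank operator and, by norm closure, all of $K(\HHs)$. For the moreover part, apply $\gammaa$ term-by-term to the explicit formula for $p_0$: each $\Ps M_v \Ps \cdot \Ps M_{-v} \Ps$ becomes $\Pa M_v \Pa \cdot \Pa M_{-v} \Pa$ on $\HHa$, a diagonal projection whose complement projects onto $\partial^v \HHa := \{(x, y) \in \HHa : (x, y)-v \notin \HHa\}$. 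Because $\HHa$ is translation-invariant along its single edge, a direct calculation shows that two of the three half-plane $v$-boundaries coincide (both equal the full edge of $\HHa$) while the third is empty, so the alternating sum telescopes: $\gammaa(p_0) = 0$. Symmetrically $\gammab(p_0) = 0$, whence $\check{\gamma}(p_0) = 0$. Since $\Ker \check{\gamma}$ is a closed ideal of $\TTs$ containing $p_0$, the same matrix-unit argument yields $K(\HHs) \subset \Ker \check{\gamma}$.

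The main technical obstacle is the uniform construction of $(v_A, v_B, v_C)$ and verification of the combinatorial identity $\partial^{v_A}\check{\Sigma} = \partial^{v_B}\check{\Sigma} \sqcup \partial^{v_C}\check{\Sigma} \sqcup \{(x_0, y_0)\}$ together with the half-plane cancellations, for arbitrary $\alpha < \beta$. Which integer shifts preserve $\check{\Sigma}$ depends on the signs of $\alpha, \beta$, so the specific triple must be adapted to the geometry; for irrational slopes one may invoke Lemma~\ref{lem2} to produce integer vectors closely aligned with the edges when no nontrivial lattice points lie on them.
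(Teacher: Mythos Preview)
Your overall strategy---construct one explicit rank-one projection in $\TTs$ from diagonal projections of the form $\Ps M_v \Ps M_{-v} \Ps$, then bootstrap to all matrix units and pass to the closure---is exactly the paper's approach, and your bootstrap and kernel arguments are fine once a rank-one projection is in hand. The gap is in the construction itself.

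Your claimed triple $v_A=(1,0)$, $v_B=(0,-1)$, $v_C=(0,1)$ does \emph{not} work for all $\alpha<0<\beta$. Take $\alpha=-\tfrac12$, $\beta=2$. The hole (complement of $\check\Sigma$) is $\{(x,y): 2x<y<-x/2\}$; in column $x=-1$ it is $\{(-1,-1),(-1,0)\}$. One computes that $(-1,1)\in\partial^{(0,1)}\check\Sigma$ (since $(-1,0)$ is in the hole and $(-1,1)\in\check\Sigma$), but $(-1,1)\notin\partial^{(1,0)}\check\Sigma$ (since $(-2,1)$ satisfies $1\ge -\tfrac12\cdot(-2)=1$, hence lies in $\check\Sigma$). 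Thus $\partial^{v_C}\not\subset\partial^{v_A}$, your combinatorial identity fails, and your $p_0$ has a $-1$ diagonal entry at $(-1,1)$: it is not even a projection. The triple happens to work when $|\alpha|=|\beta|=1$ because then shifting right by one along either edge is the same as shifting vertically by one, but for generic slopes the ``right boundary'' of the hole (one lattice point per row) bears no simple relation to its ``top'' and ``bottom'' boundaries (one lattice point per column). Your final paragraph suggests Lemma~\ref{lem2} might patch this, but that lemma produces translations moving points far from one edge while respecting the other; it does not obviously yield the delicate three-way decomposition you need.

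The paper sidesteps this by first applying an $SL(2,\Z)$ transformation to normalize to $0<\alpha\le\tfrac12$, $1\le\beta<\infty$, and then uses a different combination: setting $\check\cP_{m,n}=\Ps M_{m,n}\Ps M_{-m,-n}\Ps$, it shows that
\[
\tcP_k \;=\; (1-\check\cP_{-1,0})\bigl(\check\cP_{-k-1,0}-\check\cP_{-k,0}\bigr)
\]
projects onto the lattice points in the box $B_k=\{0\le -\alpha x+y<\alpha,\ k\beta<-\beta x+y\le (k+1)\beta\}$, and then proves by a five-case analysis (depending on whether $\alpha=\tfrac1N$ or $\tfrac1{N+1}<\alpha<\tfrac1N$ and whether $\beta=1$ or $\beta>1$) that a specific $k\in\{1,\dots,N\}$ makes $B_k$ a singleton. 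The point is that after normalization the $\alpha$-edge is nearly horizontal, so the strip $0\le -\alpha x+y<\alpha$ meets each column in at most one lattice point, and one then only needs to isolate a single column---which the second factor accomplishes. Your three-vector idea could perhaps be salvaged along similar lines after normalization, but as written the construction is incomplete.
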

To show this proposition, we employ a trick by Jiang \cite{Ji95}.
We consider the action of $SL(2, \Z)$ onto $\Z^2$.
An action of $g \in SL(2,\Z)$ maps a line through the origin whose slope is $s$ to the line through the origin of possibly different slope. We write $g(s)$ for its slope.
It is shown in Sect.~$1$ of \cite{Ji95} that there is a $g \in SL(2,\Z)$ such that $0 < g(\alpha) \leq \frac{1}{2}$ and $1 \leq g(\beta) < + \infty$.
The action of $g$ induces a unitary isomorphism between Hilbert spaces $\HHs$ and $\check{\HH}^{g(\alpha),g(\beta)}$ and thus induces an isomorphism between $C^*$-algebras $\TTs$ and $\check{\TT}^{g(\alpha),g(\beta)}$ without changing their Fredholm index theory.
Thus, we assume the following condition without loss of generality:
\begin{equation}
	0 < \alpha \leq \frac{1}{2} , \quad 1 \leq \beta < + \infty. \tag{$\dagger$}
\end{equation}
\vspace{-2mm}

For $(m,n) \in \Z^2$, let
$\check{\cP}_{m,n} := \Ps M_{m,n} \Ps M_{-m,-n} \Ps$.
The operator $\check{\cP}_{m,n}$ is a projection contained in $\TTs$ (e.g., the projection $1 - \check{\cP}_{-1,0}$ is explained in Fig. \ref{P-10}).
For $k \in \{ 1, 2, \cdots \}$, let 
\begin{equation*}
	\tcP_k := (1-\check{\cP}_{-1,0})\check{\cP}_{-k-1,0} - (1- \check{\cP}_{-1,0})\check{\cP}_{-k,0} \in \TTs,
\end{equation*}
and let $B_k := \{ (x, y) \in \Z^2 \mid 0 \leq -\alpha x + y < \alpha \ \text{and} \ k\beta < -\beta x + y \leq (k+1)\beta \}$.
Then, $\tcP_k$ is the orthogonal projection of $\HHs$ onto the closed subspace spanned by elements in the set $\{ \e_{x,y} \mid (x, y) \in B_k \}$
(the projection $\tcP_3$ is explained in Fig. \ref{N3}).
\begin{figure}
\centering
    \includegraphics[width=80mm]{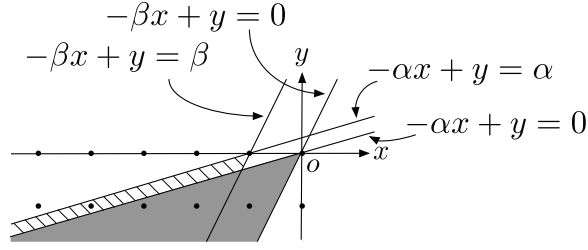}
	\caption{The case of $\frac{1}{4} < \alpha < \frac{1}{3}$ and $1 < \beta < \infty$. $1- \check{\cP}_{-1,0}$ is the orthogonal projection onto closed subspace corresponding to lattice points contained in the shaded area}
	\label{P-10}
\end{figure}
\begin{figure}
\centering
    \includegraphics[width=90mm]{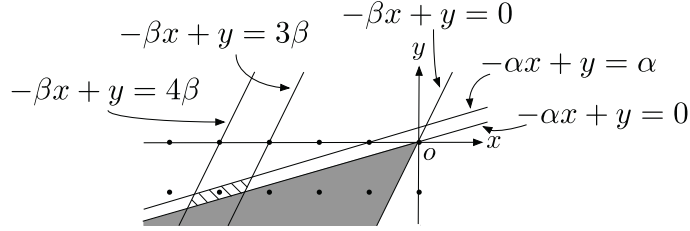}
	\caption{The case of $\frac{1}{4} < \alpha < \frac{1}{3}$ and $1 < \beta < \infty$. $\tcP_3$ is the orthogonal projection $p_{-4,-1}$ onto closed subspace $\C \e_{-4,-1}$.
The set $B_3$ contains just one element, $(-4,-1)$, which corresponds to the lattice point contained in the shaded area}
	\label{N3}	
\end{figure}
For $\alpha$ satisfying the condition ($\dagger$), there exists a unique $N \in \{ 2,3, \cdots \}$ such that $\frac{1}{N+1} < \alpha \leq \frac{1}{N}$.
We show some $\tcP_k$ is a rank-one projection.
The statement is divided into five cases corresponding to the values of $\alpha$ and $\beta$.
\begin{lemma}\label{Ngeneral}
Let $N \geq 2$.
\begin{enumerate}
\renewcommand{\labelenumi}{\arabic{enumi})}
	\item When $\alpha = \frac{1}{2}$ and $\beta = 1$, we have $\tcP_1 = p_{-4,-2}$.
	\item When $N \geq 3$, $\alpha = \frac{1}{N}$ and $\beta = 1$, we have $\tcP_{N-2} = p_{-N,-1}$.
	\item When $\alpha = \frac{1}{N}$ and $1 < \beta < \infty$, we have $\tcP_{N-1} = p_{-N,-1}$.
	\item When $\frac{1}{N+1} < \alpha < \frac{1}{N}$ and $\beta = 1$, we have $\tcP_{N-1} = p_{-N-1,-1}$.
	\item When $\frac{1}{N+1} < \alpha < \frac{1}{N}$ and $1 < \beta < \infty$, we have $\tcP_N = p_{-N-1,-1}$.
\end{enumerate}
\end{lemma}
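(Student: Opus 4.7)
The plan is to use the fact, established just before the statement, that $\tcP_k$ is the orthogonal projection of $\HHs$ onto the closed span of $\{\e_{x,y} \mid (x,y) \in B_k\}$. Consequently, proving each of the five identities $\tcP_k = p_{x_0, y_0}$ reduces to showing that, for the choice of $k$ and $(x_0, y_0)$ indicated, the set $B_k$ consists of the single lattice point $(x_0, y_0)$.

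Geometrically, $B_k$ is the intersection of two diagonal strips. The first, $\{(x,y) \mid 0 \leq -\alpha x + y < \alpha\}$, is a strip of vertical width $\alpha$ sitting just above the line $y = \alpha x$; under the standing assumption $(\dagger)$ we have $\alpha \leq \tfrac{1}{2}$, so this strip contains at most one integer $y$ for each fixed $x$. The second strip, $\{(x,y) \mid k\beta < -\beta x + y \leq (k+1)\beta\}$, is the region strictly above $y = \beta(x+k)$ and weakly below $y = \beta(x+k+1)$. For each case I would (a) substitute the candidate point and check directly that both defining inequalities of $B_k$ hold, and (b) show that no other lattice point lies in $B_k$. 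Step (a) is a one-line arithmetic verification; for instance in Case 3, with $\alpha = \tfrac{1}{N}$ and $k = N-1$, the point $(-N,-1)$ gives $-\alpha x + y = 0 \in [0,\alpha)$ and $-\beta x + y = N\beta - 1$, which lies in $((N-1)\beta, N\beta]$ exactly because $1 < \beta < \infty$.

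For step (b), one enumerates the integer points in the first strip and then restricts by the $\beta$-range. When $\alpha = \tfrac{1}{N}$ the integer points in the first strip are precisely $(Nm, m)$ for $m \in \Z$, so the second inequality solves to a unique $m$ (in Cases 1--3 this directly forces the claimed $m$). When $\tfrac{1}{N+1} < \alpha < \tfrac{1}{N}$, an analogous but slightly longer analysis of small $|x|$ shows that the only integer point in the first strip with $x \in [-(N+2), 0]$ is $(-(N+1), -1)$ together with the irrelevant $(0,0)$, and the $\beta$-condition then isolates $(-(N+1), -1)$. For $|x|$ much larger one uses that $(x,y)$ in the first strip forces $y = \alpha x + O(1)$, so $-\beta x + y = (-\beta + \alpha)x + O(1)$, and since $\beta - \alpha > 0$ the $\beta$-range $((k)\beta, (k+1)\beta]$ of length $\beta$ selects at most one $x$-value.

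The main obstacle is really the bookkeeping across the five cases rather than any single difficulty: the boundary situations $\alpha = \tfrac{1}{N}$ and $\beta = 1$ are precisely those in which the boundary lines of the strips pass through lattice points, so whether an endpoint is included in the inequalities defining $B_k$ decides both which index $k$ produces a singleton $B_k$ and which lattice point that singleton is. Keeping these inclusions straight explains the small shifts between the five cases in both the index $k$ and the location of the rank-one point; once that is done, each individual verification is elementary.
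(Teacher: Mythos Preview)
Your reduction to showing that each $B_k$ is a singleton is exactly what the paper does; the paper likewise treats only one representative case in detail (case 5 with $N\ge 3$) and leaves the rest as parallel computations. So the overall strategy matches.

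The execution differs, and the paper's is tidier. Rather than enumerating lattice points in the $\alpha$-strip and then imposing the $\beta$-condition, the paper eliminates $x$ from the two strip inequalities to obtain
\[
-(N+1)\,\frac{\alpha\beta}{\beta-\alpha}\ \le\ y\ <\ -(N-1)\,\frac{\alpha\beta}{\beta-\alpha}\ <\ 0,
\]
checks that under the hypotheses of case 5 with $N\ge 3$ the left endpoint exceeds $-2$, so $y=-1$, and then substitutes back to get $-1-1/\alpha < x \le -1/\alpha$, forcing $x=-(N+1)$. This handles all $x$ at once with no separate ``large $|x|$'' argument.

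Your sketch has one soft spot worth flagging. The claim that for large $|x|$ the $\beta$-range ``selects at most one $x$-value'' does not follow from the estimate $-\beta x + y = (\alpha-\beta)x + O(1)$ alone: the $\beta$-interval has length $\beta$, while moving $x$ by one integer changes $(\alpha-\beta)x$ by only $\beta-\alpha<\beta$, so several consecutive $x$ could a priori land in the range. What actually saves you is that most integers $x$ admit \emph{no} integer $y$ in the $\alpha$-strip (its vertical width is $\alpha<1$), but turning that into a clean uniqueness statement is precisely the bookkeeping you were hoping to shortcut. The paper's eliminate-$x$-first trick sidesteps this entirely and is the cleaner way to close the argument.
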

\begin{proof}
It is sufficient to show that, in each case, the set $B_k$ contains just one element $(x, y)$, where $k$ and $(x, y)$ correspond to subscripts of $\tcP_k$ and $p_{x,y}$ indicated above.
The proof of 1) $\sim$ 5) goes almost in the same way, but note that it is convenient to distinguish the cases of $N = 2$ and $N \geq 3$ also in the case of 3) $\sim$ 5).
We here present just the proof of 5) for the case of $N \geq 3$.

Let $N \geq 3$. We calculate the set $B_N$, i.e., all values of $(x, y) \in \Z^2$ satisfying inequalities $0 \leq -\alpha x + y < \alpha$ and $N\beta < -\beta x + y \leq (N+1)\beta$, and show that it is just one point, $(-N-1, -1)$.
From these two inequalities, we obtain the following inequality:
\vspace{-1mm}
\begin{equation*}
	(-N-1)\frac{\alpha \beta}{\beta - \alpha} \leq y <  (-N+1)\frac{\alpha \beta}{\beta - \alpha} < 0.
\vspace{-1mm}
\end{equation*}
Since $N \geq 3$, $\alpha < \frac{1}{N}$ and $1 < \beta$, the left-hand side is strictly greater than $-2$.
Thus, an integer $y$ should be $-1$.
When $y = -1$, we have $-1 - \frac{1}{\alpha} < x \leq - \frac{1}{\alpha}$.
Since $\frac{1}{N+1} < \alpha < \frac{1}{N}$, $x$ should be $-N-1$.
The point $(-N-1, -1)$ satisfies the desired inequality, and so $B_N = \{ (-N-1, -1) \}$.
\end{proof}
{\em Proof of Proposition~\ref{contain}} \
By using Lemma~\ref{Ngeneral}, we see that the algebra $\TTs$ contains at least a rank-one projection $p_{x,y}$ for some $(x,y) \in \check{\Sigma}$.
For any $(u,v) \in \check{\Sigma}$, we have
\begin{equation*}
	p_{u,v} = ( \Ps M_{u-x,v-y} \Ps) p_{x,y} (\Ps M_{u-x,v-y} \Ps )^* \in \TTs.
\end{equation*}
Thus, $\TTs$ contains rank-one projections $p_{u,v}$ for any $(u,v) \in \check{\Sigma}$ and thus contains operators of the form $\Ps M_{u-x,v-y} \Ps p_{x,y}$ for any $(u,v), (x,y) \in \check{\Sigma}$.
By using this, we can see that every rank-one projection on $\HHs$ is contained in $\TTs$, and thus $\TTs$ contains all finite-rank operators on $\HHs$.
Thus, the inclusion $K(\HHs) \subset \TTs$ holds.

To further show that $K(\HHs)$ is contained in $\Ker \check{\gamma}$, it is sufficient to show that $\check{\gamma}(\tcP_{k}) = (\gammaa(\tcP_{k}), \gammab(\tcP_{k})) = 0$ for $k \geq 1$. 
We have
\begin{gather*}
	\gammaa(\tcP_{k}) = \gammaa(1-\check{\cP}_{-1,0}) \gammaa(\check{\cP}_{-k-1,0} - \check{\cP}_{-k,0}) =\\
		(1 - \Pa M_{-1,0} \Pa M_{1,0} \Pa)(\Pa M_{-k-1,0}\Pa M_{k+1,0}\Pa - \Pa M_{-k,0}\Pa M_{k,0}\Pa).
\end{gather*}
$1 - \Pa M_{-1,0} \Pa M_{1,0} \Pa$ and $\Pa M_{-k-1,0}\Pa M_{k+1,0}\Pa - \Pa M_{-k,0}\Pa M_{k,0}\Pa$ are projections onto closed subspaces spanned by sets $\{ \e_{x,y} \mid 0 \leq -\alpha x +y < \alpha \}$ and $\{ \e_{x,y} \mid k \leq -\alpha x +y < (k+1)\alpha \}$, respectively.
Thus, for $k \geq 1$, we have $\gammaa(\tcP_{k}) = 0$.
We also have $\gammab(\tcP_{k}) = 0$ since $\gammab(1-\check{\cP}_{-1,0}) = 0$.\qed

\subsection{Concave corner Toeplitz extension}
The following is the main theorem of this paper.
\begin{theorem}\label{main}
	There is the following short exact sequence of $C^*$-algebras:
\vspace{-1mm}
\begin{equation}\label{exact}
0 \to K(\HHs) \to \TTs \overset{\check{\gamma}}{\to} \Sab \to 0,
\vspace{-1mm}
\end{equation}
where $K(\HHs)$ is the $C^*$-algebra of compact operators on $\HHs$.
\end{theorem}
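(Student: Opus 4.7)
The plan is to stitch together three pieces, two of which the author has already established. The map $K(\HHs)\hookrightarrow\TTs$ is meaningful by Proposition~\ref{contain}. Surjectivity of $\check{\gamma}$ onto $\Sab$ is immediate from the discussion preceding the theorem: applying $\check{\gamma}$ to an operator of the form~(\ref{denses}) yields the matching pair of the form~(\ref{densess}), so by Lemma~\ref{dense} the image of $\check{\gamma}$ contains a dense subalgebra of $\Sab$; since the range of a $*$-homomorphism between $C^*$-algebras is automatically closed, $\check{\gamma}$ is surjective. The inclusion $K(\HHs)\subset\Ker\check{\gamma}$ was also proved as part of Proposition~\ref{contain}.

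The substantive remaining step is the reverse inclusion $\Ker\check{\gamma}\subset K(\HHs)$, which is exactly the content of the forthcoming Proposition~\ref{prop2} identifying $\TTs/K(\HHs)$ with $\Sab$. My plan for it is to show directly that any $\check{T}$ in the dense subalgebra~(\ref{denses}) with $\check{\gamma}(\check{T})=0$ is a finite-rank operator on $\HHs$, after which a density argument combined with the closedness of $K(\HHs)\subset\TTs$ extends the conclusion to all of $\Ker\check{\gamma}$. For such $\check{T}$ we have $\gammaa(\check{T})=0$ in $\TTa$ and $\gammab(\check{T})=0$ in $\TTb$ simultaneously. For a basis vector $\e_{s,t}$ with $(s,t)\in\HHa$ at distance at least $R$ from the boundary line $y=\alpha x$, where $R$ depends only on the finite polynomial data of $\check{T}$, every intermediate point appearing in the computation of $\check{T}\e_{s,t}$ lies inside $\HHa\subset\HHs$; there $\Ps$ acts as $\Pa$, and consequently $\check{T}\e_{s,t}=\gammaa(\check{T})\e_{s,t}=0$. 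The symmetric argument near the $\beta$-edge gives $\check{T}\e_{s,t}=0$ whenever $(s,t)\in\HHb$ is at distance at least $R$ from $y=\beta x$. The remaining basis vectors form a finite corner region, and on this region $\check{T}$ produces outputs in a bounded range, so $\check{T}$ is finite-rank and lies in $K(\HHs)\subset\TTs$ by Proposition~\ref{contain}.

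The principal obstacle, as the paper's introduction emphasises, is the absence of a linear splitting analogous to Park's construction in the convex case~\cite{Pa90}: since $\HHs$ is neither a subspace of a single half-plane nor a subsemigroup of $\Z^2$, no natural compression of half-plane Toeplitz operators produces a section of $\check{\gamma}$. The localisation argument sketched above circumvents this by comparing $\check{T}$ with $\gammaa(\check{T})$ on the $\alpha$-side of the corner and with $\gammab(\check{T})$ on the $\beta$-side, while the explicit corner-localised rank-one projections of Lemma~\ref{Ngeneral} ensure that the residual finite-rank contribution genuinely sits inside $\TTs$ itself. With Propositions~\ref{contain} and~\ref{prop2} in hand, Theorem~\ref{main} is the assembly of these ingredients together with the already-established surjectivity of $\check{\gamma}$.
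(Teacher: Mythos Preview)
Your assembly of the pieces is correct, and the localisation idea---comparing $\check{T}$ with $\gammaa(\check{T})$ on one side and with $\gammab(\check{T})$ on the other---is exactly the geometric heart of the matter. However, the argument as written has two gaps, one minor and one genuine.

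The minor one: the ``remaining basis vectors'' do \emph{not} form a finite set under your dichotomy. A point $(s,t)\in\HHa\setminus\HHb$ lying within distance $R$ of the line $y=\alpha x$ belongs to an infinite strip (take $s\to-\infty$ under~($\dagger$)), and your Case~A as stated does not cover it. This is repairable: on that strip every intermediate point of the computation of $\check{T}\e_{s,t}$ lies either in $\HHa$ or in the complement of $\HHs$, and in both regions $\Ps$ agrees with $\Pa$, so $\check{T}\e_{s,t}=\gammaa(\check{T})\e_{s,t}$ still holds. The paper handles this by splitting $\check{\Sigma}$ along the midline $y=\tfrac{\alpha+\beta}{2}x$ into two pieces $\cA$, $\cB$, which is the clean formulation of the same idea.

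The genuine gap is the density step. Knowing that every $\check{T}$ in the dense subalgebra with $\check{\gamma}(\check{T})=0$ is finite-rank does \emph{not} yield $\Ker\check{\gamma}\subset K(\HHs)$. An arbitrary $\check{T}\in\Ker\check{\gamma}$ is a limit of elements $\check{T}_n$ of the dense subalgebra, but $\check{\gamma}(\check{T}_n)$ need only tend to zero, not vanish, so you cannot conclude that the $\check{T}_n$ are compact. Injectivity of a $*$-homomorphism on a dense $*$-subalgebra does not imply injectivity on the whole $C^*$-algebra (restriction $C[0,1]\to C[0,\tfrac12]$ is injective on polynomials but not globally). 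What is actually required is the quantitative estimate
\[
\bigl\|[\check{T}]\bigr\|_{\TTs/K(\HHs)}\ \le\ \bigl\|\check{\gamma}(\check{T})\bigr\|_{\Sab}
\]
for $\check{T}$ in the dense subalgebra; this extends by continuity and gives injectivity of the induced map $\TTs/K(\HHs)\to\Sab$. That inequality is precisely the content of Proposition~\ref{prop2}: the paper obtains it by writing the quotient norm as $\lim_n\|\check{T}(1-P_n)\|$ for an approximate unit $\{P_n\}$ of $K(\HHs)$, then using the $\cA/\cB$ decomposition to bound $\|\check{T}(1-P_n)f\|$ by $\max(\|\gammaa(\check{T})\|,\|\gammab(\check{T})\|)\cdot\|f\|$ for large $n$. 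Your localisation argument is the right raw material, but it must be deployed to produce this norm bound rather than the bare finite-rank statement.
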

In this subsection, we give a proof of this theorem.

\begin{proposition}\label{prop2}
There is a $*$-isomorphism
$\theta \colon \Sab \to \TTs/K(\HHs)$,
that is, on the dense subalgebra of $\Sab$ obtained from Lemma~\ref{dense}, of the following form:
\begin{gather*}
	\theta \biggl( \sum_{i=1}^l \hspace{-0.5mm} c_i \Pa \hspace{-0.5mm} M_{m_{i0}\hspace{-0.3mm}, n_{i0}} \hspace{-0.5mm} \biggl(\prod_{j=1}^{k_i} \hspace{-0.5mm} \Pa \hspace{-0.5mm} M_{m_{ij} \hspace{-0.3mm},n_{ij}} \hspace{-1mm} \biggl) \hspace{-0.5mm} \Pa \hspace{-0.5mm}, \hspace{-0.5mm} \sum_{i=1}^l \hspace{-0.5mm} c_i \Pb \hspace{-0.5mm} M_{m_{i0}\hspace{-0.3mm}, n_{i0}} \hspace{-0.5mm} \biggl( \prod_{j=1}^{k_i} \hspace{-0.5mm} \Pb \hspace{-0.5mm} M_{m_{ij}\hspace{-0.3mm}, n_{ij}} \biggl) \Pb \hspace{-0.5mm} \biggl)
		\\
		=
			\biggl[ \sum_{i=1}^l c_i \Ps M_{m_{i0}, n_{i0}} \biggl( \prod_{j=1}^{k_i} \Ps M_{m_{ij}, n_{ij}} \biggl) \Ps \biggl].
\end{gather*}
Its inverse is the $*$-homomorphism induced by $\check{\gamma}$.
\end{proposition}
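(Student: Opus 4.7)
The plan is to show that $\check{\gamma}$ factors to a $*$-isomorphism $\bar{\check{\gamma}} \colon \TTs/K(\HHs) \to \Sab$ whose inverse is the map $\theta$ given by the stated formula. By Proposition~\ref{contain}, $K(\HHs) \subseteq \Ker \check{\gamma}$, and by Proposition~\ref{prop1} together with Lemma~\ref{dense}, $\check{\gamma}$ is surjective; so $\check{\gamma}$ descends to a surjective $*$-homomorphism $\bar{\check{\gamma}}$, and the content of the proposition becomes the injectivity of $\bar{\check{\gamma}}$, equivalently the well-definedness of $\theta$.

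I would first define $\theta$ on the dense $*$-subalgebra of $\Sab$ described in Lemma~\ref{dense} by the stated formula. On this level, linearity, multiplicativity, and compatibility with the involution follow directly from the structure of the formula once well-definedness is established. Well-definedness reduces, by linearity, to the following key lemma: if $\hat{T} \in \TTab$ of the form~(\ref{denseab}) satisfies $\hat{\gamma}^\alpha(\hat{T}) = 0$ and $\hat{\gamma}^\beta(\hat{T}) = 0$, then the corresponding $\check{T} \in \TTs$ of the form~(\ref{denses}), obtained by replacing each $\Pab$ by $\Ps$, is compact.

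To prove the key lemma I would analyze $\check{T}\e_{m,n}$ for $(m,n) \in \check{\Sigma}$. With $R$ a bound on the partial-sum magnitudes $|m_{i0} + \cdots + m_{ij}| + |n_{i0} + \cdots + n_{ij}|$ appearing in the expression, the geometric fact that the strip $\HHab$ has vertical width $(\beta - \alpha)|m|$ at column $x = m$ ensures that for $(m,n)$ sufficiently far from the origin the collection of intermediate lattice points reached when applying $\check{T}$ to $\e_{m,n}$ is contained in one of three regions: (i) $\HHa \cup \check{\Sigma}^c$, (ii) $\HHb \cup \check{\Sigma}^c$, or (iii) $\HHs$. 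In case~(i), since $\check{\Sigma}^c \cap \HHa = \emptyset$ and $\HHa \subseteq \HHs$, the projections $\Ps$ and $\Pa$ act identically on each intermediate point, whence $\check{T}\e_{m,n} = \hat{\gamma}^\alpha(\hat{T})\e_{m,n} = 0$. Case~(ii) is analogous, yielding $\check{T}\e_{m,n} = \hat{\gamma}^\beta(\hat{T})\e_{m,n} = 0$. In case~(iii), $\Ps$ acts as the identity on every intermediate point, so $\check{T}\e_{m,n}$ equals the pure translation action $\sum c_i \e_{m+a_i, n+b_i}$ where $(a_i, b_i) = (m_{i0} + \sum_j m_{ij},\, n_{i0} + \sum_j n_{ij})$, which vanishes since $\sigmaa(\hat{\gamma}^\alpha(\hat{T})) = \sum c_i \chi_{a_i, b_i} = 0$ forces $\sum c_i M_{a_i, b_i} = 0$ on $\HH$. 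Thus $\check{T}\e_{m,n}$ vanishes outside a finite subset of $\check{\Sigma}$, and $\check{T}$ is of finite rank, hence compact.

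With the key lemma established, $\theta$ extends by continuity to a $*$-homomorphism $\Sab \to \TTs/K(\HHs)$ since any $*$-homomorphism from a dense $*$-subalgebra of a $C^*$-algebra into another $C^*$-algebra is automatically contractive. The identities $\bar{\check{\gamma}} \circ \theta = \id$ and $\theta \circ \bar{\check{\gamma}} = \id$ are verified directly on the dense subalgebras from the defining formulas and extend to the full algebras by continuity. The main obstacle is the geometric case analysis in the key lemma, in particular making precise how far from the origin is "sufficiently far" and verifying that the three orbit configurations genuinely exhaust all cases in that range.
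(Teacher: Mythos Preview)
Your key lemma is correct, and its geometric proof is sound (in fact cases~(i) and~(ii) already cover all $(m,n)$ far enough from the origin, so case~(iii) is redundant). The gap is in the extension step: the assertion that ``any $*$-homomorphism from a dense $*$-subalgebra of a $C^*$-algebra into another $C^*$-algebra is automatically contractive'' is false. Automatic contractivity requires the \emph{domain} to be a $C^*$-algebra; for a mere dense $*$-subalgebra it can fail badly (e.g.\ evaluation of polynomials at $2$ gives a $*$-homomorphism from the dense polynomial subalgebra of $C[0,1]$ to $\C$ which is unbounded). Consequently, knowing that $\theta$ is a well-defined $*$-homomorphism on the dense subalgebra does not by itself let you extend it to $\Sab$, and equivalently your key lemma (injectivity of $\bar{\check{\gamma}}$ on the \emph{dense} image) does not imply injectivity of $\bar{\check{\gamma}}$ on all of $\TTs/K(\HHs)$.

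The paper closes exactly this gap by proving the norm inequality $\|[\check{T}]\| \le \|(T^\alpha,T^\beta)\|$ directly: using an increasing sequence $P_n$ of finite-rank projections with $\|[\check{T}]\| = \lim_n \|\check{T}(1-P_n)\|$, it splits $\check{\Sigma}$ along the line $y = \tfrac{\alpha+\beta}{2}x$ into two halves $\cA,\cB$ and observes that for $n$ large (uniformly in $f$), $\check{T}(1-P_n)f|_\cA = T^\alpha(1-P_n)f|_\cA$ and $\check{T}(1-P_n)f|_\cB = T^\beta(1-P_n)f|_\cB$, giving $\|\check{T}(1-P_n)f\| \le \max\{\|T^\alpha\|,\|T^\beta\|\}$. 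This is precisely your geometric observation---far from the origin $\check{T}$ agrees with $T^\alpha$ or $T^\beta$---but applied to arbitrary $(T^\alpha,T^\beta)$ rather than only to the case $T^\alpha = T^\beta = 0$. Your argument can be repaired by carrying out this same strengthening.
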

\begin{proof}
Let $\check{T}$ be an element of $\TTs$ of the form (\ref{denses}) and $(T^\alpha, T^\beta)$ be an element of $\Sab$ of the form (\ref{densess}).
By Lemma~\ref{dense}, such operators form a dense subalgebra of $\Sab$.
We define $\theta(T^\alpha, T^\beta) := [\check{T}] \in \TTs/K(\HHs)$.
To show the well-definedness of $\theta$ and that $\theta$ extends to a $*$-homomorphism on $\Sab$, it suffices to show the following inequality:
\vspace{-1mm}
\begin{equation*}
	\| \theta(T^\alpha, T^\beta) \| \leq \| (T^\alpha, T^\beta) \|.
\end{equation*}

We relabel the set $\check{\Sigma}$ as in Fig. \ref{relabel}. This gives an order on the set $\{ e_{x,y} \mid (x,y) \in \check{\Sigma} \}$. Let $P_n$ be the orthogonal projection onto the span of the first $n$ elements. Then, for $[\check{T}] \in \TTs/K(\HHs)$, we have
\begin{equation*}
	\| [\check{T}] \| = \inf_{C \in K(\HHs)} \| \check{T} + C \| = \lim_n \| \check{T}(1 - P_n) \|.
\end{equation*}
The last equality follows since $\{ P_n \}_{n=0}^{\infty}$ is an approximate unit for $K(\HHs)$ (see Theorem $1.7.4$ of \cite{HR00}, for example).
\begin{figure}
\centering
    \includegraphics[width=60mm]{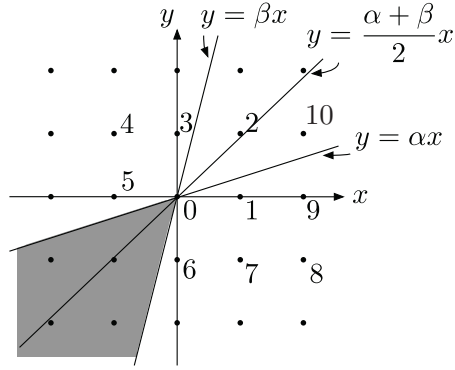}
	\caption{Relabel lattice points as $0,1,2, \cdots$. Divide the set $\check{\Sigma}$ into two parts}
	\label{relabel}
\end{figure}
Further, we divide the set $\check{\Sigma}$ into two parts $\cA$ and $\cB$ by the line $y = \frac{\alpha + \beta}{2}x$, as in Fig. \ref{relabel}.
Specifically, let $\cA := \{ (x,y) \in \check{\Sigma} \mid -\frac{\alpha + \beta}{2}x + y \geq 0 \}$ and $\cB := \check{\Sigma} \setminus \cA$.
Let $M = \| (T^\alpha, T^\beta) \| = \max \{ \| T^\alpha \|, \| T^\beta \| \}$, and let $f \in l^2(\check{\Sigma})$, which has a finite support and satisfies $\| f \| = 1$.
There exists $n_0 \in \N$ such that for any $n \geq n_0$, we have
$\check{T}(1-P_n) f|_\cA = T^\alpha (1-P_n) f|_\cA$ and $\check{T}(1-P_n) f|_\cB = T^\beta (1-P_n) f|_\cB$.
Since the operator $\check{T}$ is of the form (\ref{denses}), we can take such $n_0$ uniformly with respect to $f$.
Thus, for $n \geq n_0$, we have
\begin{align*}
	\| \check{T}(1-P_n) f \| &\leq \| \check{T}(1-P_n) f|_\cA \| +  \| \check{T}(1-P_n) f|_\cB \| \\
		&= \|T^\alpha (1-P_n) f|_\cA \| + \| T^\beta (1 - P_n) f|_\cB \|   \\
		&\leq \|T^\alpha \| \| f|_\cA \| + \| T^\beta \| \| f|_\cB \|   \\
		&\leq M (\| f|_\cA \| + \| f|_\cB \|)
		= M \| f \|
		= M.
\end{align*}
Thus, we have $\| \check{T}(1-P_n) \| \leq M$. By taking $n \to \infty$,  we have $\| \theta(T^\alpha, T^\beta) \| = \| [\check{T}] \| \leq M = \| (T^\alpha, T^\beta) \|$, as desired.

By Proposition~\ref{contain}, $\check{\gamma}$ induces a $*$-homomorphism $\TTs/K(\HHs) \to \Sab$.
By computing on dense subalgebras of $\TTs/K(\HHs)$ and $\Sab$, we can check that this map is an inverse of $\theta$.
Thus, $\theta$ is an isomorphism.
\end{proof}
{\em Proof of Theorem~\ref{main}} \
By Proposition~\ref{contain}, we have a short exact sequence
$0 \to K(\HHs) \to \TTs \to \TTs/K(\HHs) \to 0$.
By Proposition~\ref{prop2}, we have the isomorphism $\theta \colon \Sab \to \TTs/K(\HHs)$.
Combined with these results, we obtain the desired result.\qed
\vspace{2mm}

\noindent
By Theorem~\ref{main}, a necessary and sufficient condition for concave corner Toeplitz operators to be Fredholm is obtained.
\begin{theorem}\label{Fredholm}
An operator $\check{T} \in \TTs$ is Fredholm if and only if $\check{\gamma}(\check{T})$ is invertible in $\Sab$ or, equivalently, if and only if $\gammaa(\check{T})$ and $\gammab(\check{T})$ are invertible in $\TTa$ and $\TTb$, respectively.
\end{theorem}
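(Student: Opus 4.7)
The plan is to deduce the theorem by combining the short exact sequence (\ref{exact}) from Theorem~\ref{main} with Atkinson's theorem, and then to unpack invertibility in the pullback $\Sab$ using diagram (\ref{Sab}).

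First, by Atkinson's theorem, an operator $\check{T} \in B(\HHs)$ is Fredholm if and only if its class $[\check{T}]$ is invertible in the Calkin algebra $B(\HHs)/K(\HHs)$. Since $\TTs$ contains the identity operator $\Ps$ of $B(\HHs)$ as well as all compact operators (Proposition~\ref{contain}), the natural inclusion $\TTs/K(\HHs) \hookrightarrow B(\HHs)/K(\HHs)$ is a unital $*$-homomorphism of unital $C^*$-algebras. By spectral permanence for $C^*$-subalgebras, $[\check{T}]$ is invertible in the larger algebra if and only if it is invertible in $\TTs/K(\HHs)$. Theorem~\ref{main} identifies this quotient with $\Sab$ via the map induced by $\check{\gamma}$, so $\check{T}$ is Fredholm if and only if $\check{\gamma}(\check{T})$ is invertible in $\Sab$.

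Next I would verify that invertibility of an element $(a,b) \in \Sab$ is equivalent to invertibility of $a$ in $\TTa$ and $b$ in $\TTb$ separately. The forward implication is immediate: applying the coordinate projections $p^\alpha$ and $p^\beta$ from diagram (\ref{Sab}), which are unital $*$-homomorphisms, sends an inverse of $(a,b)$ to inverses of $a$ and $b$. For the reverse implication, suppose $a$ and $b$ are invertible in $\TTa$ and $\TTb$ respectively. Applying $\sigmaa$ and $\sigmab$ to the identities $a a^{-1} = 1$ and $b b^{-1} = 1$ and using the pullback compatibility $\sigmaa(a) = \sigmab(b)$ gives
\begin{equation*}
\sigmaa(a^{-1}) = \sigmaa(a)^{-1} = \sigmab(b)^{-1} = \sigmab(b^{-1}),
\end{equation*}
so $(a^{-1}, b^{-1})$ is a well-defined element of $\Sab$ and is clearly the two-sided inverse of $(a,b)$. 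Specializing to $(a,b) = \check{\gamma}(\check{T}) = (\gammaa(\check{T}), \gammab(\check{T}))$ yields the second stated equivalence.

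There is no substantial new difficulty here, since all the structural content has been packaged into Theorem~\ref{main}; the proof is essentially a bookkeeping exercise. The one place that warrants a careful sentence is the spectral permanence step, which requires that $\TTs$ and $B(\HHs)$ share a common unit. This holds because $\Ps = \Ps M_{0,0} \Ps$ lies in $\TTs$ and acts as the identity on $\HHs$, so the inclusion $\TTs \hookrightarrow B(\HHs)$ is unital and $\TTs/K(\HHs)$ sits inside the Calkin algebra as a unital $C^*$-subalgebra, justifying the appeal to spectral permanence.
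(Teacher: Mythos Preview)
Your proof is correct and follows essentially the same approach as the paper. The paper does not spell out a proof of this theorem at all; it simply records it as an immediate consequence of Theorem~\ref{main}, and your argument via Atkinson's theorem, spectral permanence, and the pullback description of $\Sab$ is exactly the standard way to unpack that consequence.
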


\section{A Fredholm operator of index one and an index formula}
In this section, we study further concave corner Toeplitz operators from the viewpoint of index theory.
We explicitly construct a Fredholm Toeplitz operator associated with a concave corner whose index is one.
By using this result, we compute some $K$-groups associated with concave corners and boundary homomorphisms associated with the extension (\ref{exact}) of Theorem~\ref{main}.
Moreover, a relation with index theory for quarter-plane Toeplitz operators \cite{Ji95,Pa90} is obtained.
By using this relation, we show some corresponding results obtained previously for quarter-plane Toeplitz operators \cite{CDS72,DH71}.
Especially, a Coburn--Douglas--Singer-type index formula for Fredholm concave corner Toeplitz operators is obtained.

\subsection{A Fredholm operator of index one}
We first construct a Fredholm concave corner Toeplitz operator of index one and compute $K$-groups of some $C^*$-algebras associated with concave corners.

As in \cite{Ji95}, by using the action of $SL(2, \Z)$ onto $\Z^2$, we assume the condition ($\dagger$) without loss of generality.
In this section, we consider the following operator:
\begin{equation*}
	\check{A} := \check{\cP}_{0,1} + M_{1,1}(1 - \check{\cP}_{-1,0}) + M_{1,0}(\check{\cP}_{-1,0} - \check{\cP}_{0,1}).
\end{equation*}
Since $\check{A} = \check{\cP}_{0,1} + \Ps M_{1,1} \Ps (1 - \check{\cP}_{0,-1}) + \Ps M_{1,0} \Ps (\check{\cP}_{-1,0} - \check{\cP}_{0,1})$, the operator $\check{A}$ is an element of the algebra $\TTs$.
The following theorem is the main theorem of this section.
\begin{theorem}\label{construction}
$\check{A}$ is a surjective Fredholm operator whose Fredholm index is $1$.
Its kernel is given as follows:
\begin{enumerate}
\renewcommand{\labelenumi}{\arabic{enumi})}
	\item When $\alpha = \frac{1}{2}$ and $\beta = 1$, $\Ker \check{A} = \C (\e_{-3,-1} -\e_{-2, -1})$.
	\item When $0 < \alpha < \frac{1}{2}$ and $\beta = 1$, $\Ker \check{A} = \C (\e_{-2, 0} - \e_{-1,0})$.
	\item When $\alpha = \frac{1}{2}$ and $1 < \beta < \infty$, $\Ker \check{A} = \C(\e_{-1,0} - \e_{0,0})$.
	\item When $0 < \alpha < \frac{1}{2}$ and $1 < \beta < \infty$, $\Ker \check{A} = \C(\e_{-1,0} - \e_{0,0})$.
\end{enumerate}
Moreover, we have $\check{A} - 1 \in \cCs$.
\end{theorem}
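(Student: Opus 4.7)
The plan is to deduce the three assertions---Fredholmness of index one (including surjectivity), the explicit description of the kernel in four cases, and $\check{A}-1 \in \cCs$---from a single structural observation: under condition $(\dagger)$ one has $\check{\cP}_{0,1} \leq \check{\cP}_{-1,0}$, so the three operators $\check{\cP}_{0,1}$, $1 - \check{\cP}_{-1,0}$, $\check{\cP}_{-1,0}-\check{\cP}_{0,1}$ form a complete system of pairwise orthogonal projections on $\HHs$. Geometrically they partition the standard basis of $\HHs$ according to whether the lattice point $(x,y) \in \check{\Sigma}$ has $(x, y-1) \in \check{\Sigma}$, has $(x+1,y) \notin \check{\Sigma}$, or falls in the remaining region. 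I would first verify the inequality $\check{\cP}_{0,1} \leq \check{\cP}_{-1,0}$ by a direct case distinction on which of $\HHa$, $\HHb$ contains $(x,y)$ and $(x,y-1)$, using $(\dagger)$.

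To check Fredholmness I would apply Theorem~\ref{Fredholm} and compute $\gammaa(\check{A})$ and $\gammab(\check{A})$ explicitly. The projections $\gammaa(\check{\cP}_{0,1})$ and $\gammaa(\check{\cP}_{-1,0})$ are the orthogonal projections of $\HHa$ onto the closed subspaces spanned by $\{\e_{x,y} : -\alpha x + y \geq 1\}$ and $\{\e_{x,y} : -\alpha x + y \geq \alpha\}$ respectively, so the three ranges decompose $\HHa$ into the strips $\{-\alpha x + y \geq 1\}$, $\{0 \leq -\alpha x + y < \alpha\}$, $\{\alpha \leq -\alpha x + y < 1\}$, on which $\gammaa(\check{A})$ acts as the identity, $M_{1,1}$, and $M_{1,0}$ respectively. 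A brief computation of $-\alpha(x+1)+(y+1)$ and $-\alpha(x+1)+y$ shows that the three translated strips disjointly cover $\HHa$, so $\gammaa(\check{A})$ is a unitary on $\HHa$, and symmetrically $\gammab(\check{A})$ is unitary on $\HHb$; Fredholmness follows from Theorem~\ref{Fredholm}. Moreover each $\check{\cP}_{m,n}$ has $(\sigmaa \circ \gammaa)$-symbol $\chi_{m,n}\chi_{-m,-n} = 1$, so the two translation summands of $\check{A}$ vanish under the symbol map and $(\sigma \circ \check{\gamma})(\check{A}) = 1$. Since the abelianization $\TTs/\cCs$ factors the surjection $\sigma \circ \check{\gamma}$ onto $C(\T^2)$, established in parallel with the half-plane case, this gives $\check{A}-1 \in \cCs$.

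The main computation is the kernel and image on $\HHs$. On the decomposition $\HHs = H_1 \oplus H_2 \oplus H_3$ induced by the three projections, $\check{A}$ acts as identity, $M_{1,1}$, $M_{1,0}$ respectively, so its image is $H_1 \cup (H_2 + (1,1)) \cup (H_3 + (1,0))$. A lattice-combinatorial check under $(\dagger)$ shows that these three images lie inside $\check{\Sigma}$ and cover it, giving surjectivity, and that the only collision between them occurs at a single lattice point $q \in \check{\Sigma}$, where the identity contribution from $H_1$ coincides with the $M_{1,0}$-image of a unique $p \in H_3$. Writing $f = f_1 + f_2 + f_3$ with $f_i \in H_i$, the equation $\check{A}f = 0$ forces every coefficient to vanish except for the single relation $f_1(q) + f_3(p) = 0$, giving $\Ker \check{A} = \C(\e_p - \e_q)$. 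The main obstacle is the case analysis: the position of $(p,q)$ depends on whether $\alpha = 1/2$ and whether $\beta = 1$, since these extremal values place additional lattice points directly on the boundary lines $y = \alpha x$ and $y = \beta x$, shifting $(p, q)$. Enumerating the lattice points near the apex of the concave corner in each of the four subcases of $(\dagger)$ identifies $(p,q)$ as stated in the theorem; in each case $\check{A}(\e_p - \e_q) = \e_q - \e_q = 0$ by inspection, so $\dim \Ker \check{A} = 1$ and the Fredholm index equals $1$.
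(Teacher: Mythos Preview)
Your proposal is correct and follows essentially the same approach as the paper: the three-piece decomposition $H_1, H_2, H_3$ coincides with the paper's $\cD_3, \cD_1, \cD_2$ (defined there by explicit inequalities rather than via the projection inequality $\check{\cP}_{0,1}\leq\check{\cP}_{-1,0}$), and the determination of the kernel via the single collision point is carried out identically, by the same case analysis on whether $\alpha=\tfrac12$ and $\beta=1$. Your separate verification of Fredholmness through Theorem~\ref{Fredholm} is an extra (harmless but redundant) step the paper omits, since surjectivity together with a one-dimensional kernel already gives it; for $\check{A}-1\in\cCs$ the paper invokes precisely the short exact sequence $0\to\cCs\to\TTs\to C(\T^2)\to 0$ you allude to, so your symbol computation is exactly the intended argument.
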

\begin{proof}
To examine the operator $\check{A}$, it is convenient to divide its domain and range as follows.
We divide the set $\check{\Sigma}$ into three parts $\check{\Sigma} = \cD_1 \sqcup \cD_2 \sqcup \cD_3$, where
\begin{equation*}
	\cD_1 = \{ (x,y) \in \check{\Sigma} \mid 0 \leq -\alpha x + y < \alpha \ \text{and} \ 1 < - \beta x + y \},
\end{equation*}
\begin{equation*}
	\cD_2 = \{ (x,y) \in \check{\Sigma} \mid \alpha \leq -\alpha x + y < 1 \ \text{and} \ \beta <  -\beta x + y \},
\end{equation*}
\begin{equation*}
	\cD_3 = \check{\Sigma} \setminus (\cD_1 \sqcup \cD_2).
\end{equation*}
Note that it can be checked that the set $\{ (x,y) \in \check{\Sigma} \mid 0 \leq -\alpha x + y < \alpha \ \text{and} \ 1 < -\beta x + y \leq \beta\}$ is empty under the assumption ($\dagger$).

We also divide $\check{\Sigma}$ in the following way,
$\check{\Sigma} = \cR_1 \sqcup \cR_2 \sqcup \cR_3$, where
\begin{equation*}
	\cR_1 = \{ (x,y) \in \check{\Sigma} \mid 0 \leq -\alpha x + y < 1 - \alpha, \ \text{and} \ 1 < - \beta x + y \},
\end{equation*}
\begin{equation*}
	\cR_2 = \{ (x,y) \in \check{\Sigma} \mid 1 - \alpha \leq -\alpha x + y < 1 \ \text{and} \ 1 <  -\beta x + y \},
\end{equation*}
\begin{equation*}
	\cR_3 = \check{\Sigma} \setminus (\cR_1 \sqcup \cR_2).
\end{equation*}
Note that $\cD_3 = \cR_3$ (see Fig.~\ref{domain} and Fig.~\ref{range}).
\begin{figure}
\centering
    \includegraphics[width=110mm]{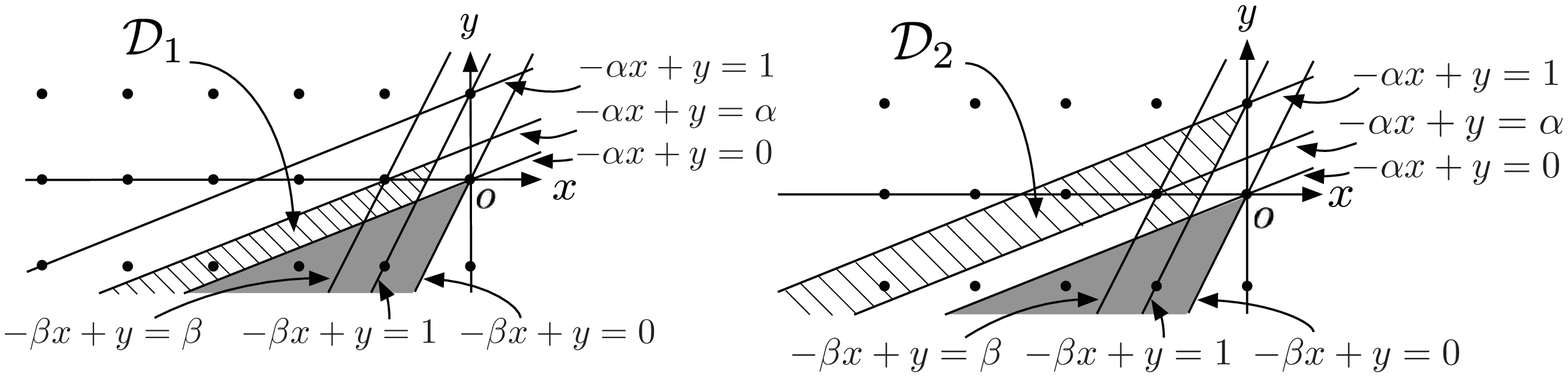}
	\caption{$\cD_1$ and $\cD_2$}
	\label{domain}
\end{figure}
\begin{figure}
\centering
    \includegraphics[width=110mm]{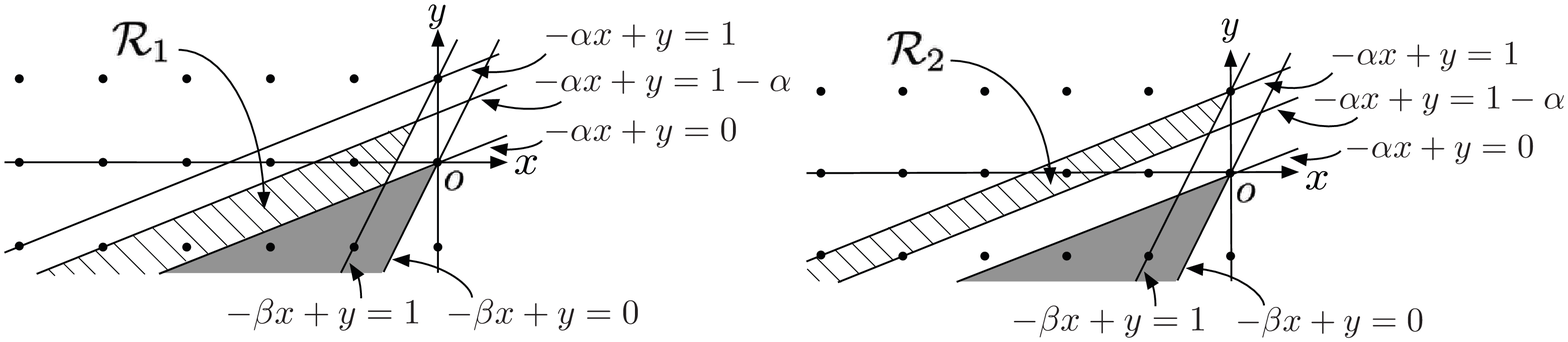}
	\caption{$\cR_1$ and $\cR_2$}
	\label{range}
\end{figure}
First, we have
\begin{equation*}
\check{A} \e_{x,y} =
\left\{
\begin{aligned}
\e_{x+1, y+1} & \hspace{3mm} \text{if} \ (x, y) \in \cD_1, \\
\e_{x+1,y} \ & \hspace{3mm} \text{if} \ (x, y) \in \cD_2, \\
\e_{x,y} \ \ & \hspace{3mm} \text{if} \ (x, y) \in \cD_3. \\
\end{aligned}
\right.
\end{equation*}
Actually, if $(x, y) \in \cD_2$, we have
\begin{align*}
	\check{A} \e_{x,y} &= \check{\cP}_{0,1} \e_{x,y} + M_{1,1}(1 - \check{\cP}_{0,-1}) \e_{x,y}  + M_{1,0}(\check{\cP}_{-1,0} - \check{\cP}_{0,1}) \e_{x,y}\\
		&= 0 + 0 + M_{1,0} \e_{x,y} = \e_{x+1, y},
\end{align*}
and the other cases follow via a similar computation.
$\check{A}$ is surjective since
\begin{equation*}
\e_{x,y} =
\left\{
\begin{aligned}
\check{A} \e_{x-1,y} \ & \hspace{3mm} \text{if} \ (x, y) \in \cR_1,\\
\check{A} \e_{x-1, y-1} & \hspace{3mm} \text{if} \ (x, y) \in \cR_2,\\
\check{A} \e_{x,y} \ \ & \hspace{3mm}\text{if} \ (x, y) \in \cR_3. \\
\end{aligned}
\right.
\end{equation*}
Actually, if $(x, y) \in \cR_2$, that is, $1 - \alpha \leq -\alpha x + y < 1$ and $1 < -\beta x + y$, then $0 \leq -\alpha(x-1) + (y-1) < \alpha$ and $\beta < -\beta (x-1) + (y-1)$.
Thus, $(x-1, y-1) \in \cD_1$, and we have $\check{A} \e_{x-1, y-1} = \e_{x,y}$;
the other cases follow via similar computations.

We next show the following results:
\begin{itemize}
	\item There exist two points $(x_0, y_0)$ and $(x_1, y_1)$ in the set $\check{\Sigma}$ such that $\check{A} \e_{x_0,y_0} = \check{A} \e_{x_1, y_1} = \e_{x_1, y_1}$.
	\item For any point $(u,v) \in \check{\Sigma} \setminus \{ (x_1,y_1) \}$, there exists unique point $(x, y) \in \check{\Sigma} \setminus \{ (x_0, y_0), (x_1,y_1) \}$ such that $\check{A} \e_{x,y} = \e_{u,v}$.
\item These $(x_0, y_0)$ and $(x_1, y_1)$ take the following values:
\begin{enumerate}
\renewcommand{\labelenumi}{\arabic{enumi})}
	\item When $\alpha \hspace{-0.5mm} = \hspace{-0.5mm} \frac{1}{2}$ and $\beta \hspace{-0.5mm} =\hspace{-0.5mm} 1$, $(x_0, y_0) \hspace{-0.5mm}=\hspace{-0.5mm} (-3, -1)$ and $(x_1, y_1) \hspace{-0.5mm}=\hspace{-0.5mm} (-2,-1)$,
	\item When $0 \hspace{-0.5mm}<\hspace{-0.5mm} \alpha \hspace{-0.5mm}<\hspace{-0.5mm} \frac{1}{2}$ and $\beta \hspace{-0.5mm}=\hspace{-0.5mm} 1$, $(x_0, y_0) \hspace{-0.5mm}=\hspace{-0.5mm} (-2, 0)$ and $(x_1, y_1) \hspace{-0.5mm}=\hspace{-0.5mm} (-1,0)$,
	\item When $\alpha \hspace{-0.5mm}=\hspace{-0.5mm} \frac{1}{2}$ and $1 \hspace{-0.5mm}<\hspace{-0.5mm} \beta \hspace{-0.5mm}<\hspace{-0.5mm} \infty$, $(x_0, y_0) \hspace{-0.5mm}=\hspace{-0.5mm} (-1, 0)$ and $(x_1, y_1) \hspace{-0.5mm}=\hspace{-0.5mm} (0, 0)$,
	\item When $0 < \alpha < \frac{1}{2}$ and $1 < \beta < \infty$, $(x_0, y_0) = (-1, 0)$ and $(x_1, y_1) = (0, 0)$.
\end{enumerate}
\end{itemize}
We here prove them only in the case 4), that is, when $0 < \alpha < \frac{1}{2}$ and $1 < \beta < \infty$.
The other cases can be shown in the same way.

When $(x,y) \in \cD_1$, that is, $0 \leq -\alpha x + y < \alpha$ and $\beta < -\beta x + y$, we have
\begin{equation*}
	1 - \alpha \leq -\alpha (x+1) + (y+1) < 1 \ \text{and} \ \beta < -\beta (x+1) + (y+1),
\end{equation*}
and thus, the point $(x+1, y+1)$ is contained in $\cR_2$.
On the other hand, if $(x, y) \in \cR_2$, then $(x-1, y-1) \in \cD_2$.
Thus, there is a bijection
\vspace{-2mm}
\begin{equation*}
	\{ \e_{x,y} \mid (x, y) \in \cD_1 \} \overset{\check{A} = M_{1,1}}{\longrightarrow} \{ \e_{x,y} \mid (x, y) \in \cR_2 \}.
\end{equation*}
We next compute points $(x, y)$ in $\cR_1$ for which $(x+1, y)$ is not contained in $\cR_1$.
Such $(x,y) \in \Z^2$ satisfy $0 \leq -\alpha (x+1) + y < 1 - \alpha$ and $1 - \beta < -\beta(x+1) + y \leq 1$.
There is just one point that satisfies these inequalities, and under the assumption of 4), this point is $(-1, 0)$.
As in the case of $\cD_1$, there is a bijection
\vspace{-2mm}
\begin{equation*}
	\{ \e_{x,y} \mid (x, y) \in \cD_2 \setminus \{ (-1, 0) \} \} \overset{\check{A} = M_{1,0}}{\longrightarrow} \{ \e_{x,y} \mid (x, y) \in \cR_2 \}.
\end{equation*}
The result follows since
\begin{equation*}
	\{ \e_{x,y} \mid (x, y) \in \cD_3 \} \overset{\check{A} = \id}{\longrightarrow} \{ \e_{x,y} \mid (x, y) \in \cR_3 \},
\end{equation*}
is a bijection and $\check{A} \e_{-1,0} = \check{A} \e_{0,0} = \e_{0,0}$.

By applying the method in \cite{CD71} for our subset $\check{\Sigma}$, we obtain the following short exact sequence,
\vspace{-2mm}
\begin{equation*}
	0 \to \cCs \to \TTs \overset{\sigma \circ \check{\gamma}}{\to} C(\T^2) \to 0.
\end{equation*}
Note that the set $\check{\Sigma}$ contains the subsemigroup $\hat{\Sigma}$ of the discrete abelian group $\Z^2$.
Note also that $\hat{\Sigma}$ acts on the set $\check{\Sigma}$ and that $\hat{\Sigma}$ generates $\Z^2$.
By using this sequence, to show that $\check{A} - 1 \in \cCs = \Ker (\sigma \circ \check{\gamma})$, it is sufficient to show $(\sigma \circ \check{\gamma}) (\check{A} - 1) = 0$.
This holds since $(\sigma \circ \check{\gamma}) (\check{\cP}_{m,n}) = 1$ for any $(m,n) \in \Z^2$.
\end{proof}

By using Theorem~\ref{construction}, we here compute $K$-groups of concave corner $C^*$-algebras $\TTs$ and its commutator ideals $\cCs$.
Associated with the sequence (\ref{main}), we have the following six-term exact sequence:
\[\xymatrix{
K_1(K(\HHs)) \ar[r]& K_1(\TTs) \ar[r] & K_1(\Sab) \ar[d]^{\check{\delta}_1}
\\
K_0(\Sab) \ar[u]^{\check{\delta}_0} & K_0(\TTs) \ar[l] & K_0(K(\HHs)), \ar[l]
}\]
$K$-groups of $\Sab$ is computed in \cite{Pa90}.
The result is\footnote{In what follows, $K$-groups of $C^*$-algebras $\TTs$ and $\cCs$ are computed, and the result for $\Sab$, $\cCa$ and $\cCb$ are presented corresponding to the values of $\alpha$ and $\beta$. The case of $\alpha = -\infty$ or $\beta = + \infty$ is the same as that of rational $\alpha$ or rational $\beta$.},
$$
K_0(\Sab) \cong
\left\{
\begin{aligned}
\Z \ & \hspace{3mm} \text{if $\alpha$ and $\beta$ are both rational,}\\
\Z^2	& \hspace{3mm} \text{if either $\alpha$ or $\beta$ is rational and the other is irrational,}\\
\Z^3	& \hspace{3mm} \text{if $\alpha$ and $\beta$ are both irrational.}
\end{aligned}
\right.
$$
and $K_1(\Sab) = \Z$.
By Theorem~\ref{construction}, we can see from the above six-term exact sequence that $\check{\delta}_1 \colon K_1(\Sab) \to K_0(K(\HHs))$ is an isomorphism.
Now, we can compute $K$-groups of $\TTs$, and the result is as follows:
$$
K_0(\TTs) \cong
\left\{
\begin{aligned}
\Z \ & \hspace{3mm} \text{if $\alpha$ and $\beta$ are both rational,}\\
\Z^2	& \hspace{3mm} \text{if either $\alpha$ or $\beta$ is rational and the other is irrational,}\\
\Z^3	& \hspace{3mm} \text{if $\alpha$ and $\beta$ are both irrational.}
\end{aligned}
\right.
$$
and $K_1(\TTs) = 0$.

We next compute the $K$-group of the commutator algebra $\cCs$.
As in \cite{Ji95,Pa90}, if we restrict the sequence (\ref{exact}) to $\cCs \subset \TTs$, we obtain the following short exact sequence:
\begin{equation*}
	0 \to K(\HHs) \to \cCs \overset{\check{\gamma}'}{\to} \cCa \oplus \cCb \to 0,
\end{equation*}
where $\check{\gamma}'$ is the restriction of $\check{\gamma}$ onto $\cCs$.
Associated with this sequence, we have the following six-term exact sequence:
\[\xymatrix{
K_1(K(\HHs)) \ar[r]& K_1(\cCs) \ar[r] & K_1(\cCa) \oplus K_1(\cCb) \ar[d]
\\
K_0(\cCa) \oplus K_0(\cCb) \ar[u]& K_0(\cCs) \ar[l] & K_0(K(\HHs)). \ar[l]
}\]
$K$-groups of $\cCa$ and $\cCb$ are computed in \cite{JK88,Xia88}.
The result is
$$
K_0(\cCa) \cong K_0(\cCb) \cong
\left\{
\begin{aligned}
\Z \ & \hspace{3mm} \text{if $\alpha$ (or $\beta$) is rational},\\
\Z^2	& \hspace{3mm} \text{if $\alpha$ (or $\beta$) is irrational}.
\end{aligned}
\right.
$$
and $K_1(\cCa) \cong K_1(\cCb) \cong \Z$.
By Theorem~\ref{construction}, the operator $\check{A} - 1$ is contained in the algebra $\cCs$. Thus, the map $\check{\delta}_1 \colon K_1(\cCa) \oplus K_1(\cCb) \to K_0(K(\HHs))$ is surjective.
As in \cite{Ji95}, we can compute $K$-groups of $\TTs$ as follows:
$$
K_0(\cCs) \cong
\left\{
\begin{aligned}
\Z^2	& \hspace{3mm} \text{if $\alpha$ and $\beta$ are both rational},\\
\Z^3	& \hspace{3mm} \text{if either $\alpha$ or $\beta$ is rational, and the other is irrational},\\
\Z^4	& \hspace{3mm} \text{if $\alpha$ and $\beta$ are both irrational}.
\end{aligned}
\right.
$$
and $K_1(\cCs) = \Z$.

\subsection{A relation with the quarter-plane case and an index formula}
We next compare index theory for quarter-plane (convex corner) Toeplitz operators \cite{DH71,Ji95,Pa90} and that for concave corner Toeplitz operators.
There are group isomorphisms $K_0(\hat{\mathrm{Tr}}) \colon K_0(K(\HHab)) \to \Z$ and $K_0(\check{\mathrm{Tr}}) \colon K_0(K(\HHs))$ $\to \Z$ that map a class $[p]_0$ of a finite-rank projection $p$ to $\rank(\mathrm{Image} (p))$.
Associated with extensions (\ref{seq1}) and (\ref{exact}), there are the following two group isomorphisms from $K_1(\Sab)$ to $\Z$:
\begin{equation*}
	 K_0(\hat{\mathrm{Tr}}) \circ \hat{\delta}_1 \colon K_1(\Sab) \to \Z \ \ \ \text{and} \ \ \ K_0(\check{\mathrm{Tr}}) \circ \check{\delta}_1 \colon K_1(\Sab) \to \Z.
\end{equation*}
where $\hat{\delta}_1$ is the boundary homomorphism of the six-term exact sequence of $K$-theory for $C^*$-algebras associated with the quarter-plane Toeplitz extension (\ref{seq1}).
Let $\hat{\cP}_{m,n} := \Pab M_{m,n} \Pab M_{-m,-n} \Pab$.
Jiang considered in \cite{Ji95} the following operator:
\begin{equation*}
	\hat{A} := \hat{\cP}_{0,1} + M_{1,1}(1 - \hat{\cP}_{-1,0}) + M_{1,0}(\hat{\cP}_{-1,0} - \hat{\cP}_{0,1}).
\end{equation*}
Note that $\hat{A} \in \TTab$.
It is shown in \cite{Ji95} that $\hat{A}$ is the isometric Fredholm operator whose Fredholm index is $-1$.
By comparing the Fredholm quarter-plane Toeplitz operator constructed in \cite{Ji95} and the Fredholm concave corner Toeplitz operator constructed in Theorem~\ref{construction}, we obtain the following result:
\begin{corollary}\label{relation}
$K_0(\check{\mathrm{Tr}}) \circ \check{\delta}_1= -K_0(\hat{\mathrm{Tr}}) \circ \hat{\delta}_1$.
\end{corollary}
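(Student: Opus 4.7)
The plan is to exploit the fact that Jiang's quarter-plane Fredholm operator $\hat{A} \in \TTab$ and the concave-corner operator $\check{A} \in \TTs$ constructed in Theorem~\ref{construction} are built from formally identical algebraic expressions (only the ambient projection is different), and that the quotient maps $\hat{\gamma}$ and $\check{\gamma}$ both strip the ambient projection by replacing $\Pab$ or $\Ps$ with the pair $(\Pa,\Pb)$. Since $K_1(\Sab) \cong \Z$ and Jiang has already shown $\hat{\delta}_1$ is an isomorphism, it suffices to test the identity $K_0(\check{\mathrm{Tr}}) \circ \check{\delta}_1 = -\, K_0(\hat{\mathrm{Tr}}) \circ \hat{\delta}_1$ on the single generator $[\hat{\gamma}(\hat{A})]_1 = [\check{\gamma}(\check{A})]_1$ of $K_1(\Sab)$.

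The first step is the key algebraic observation. Rewriting $\hat{A}$ and $\check{A}$ in the dense-subalgebra normal forms \eqref{denseab} and \eqref{denses}, we have $M_{1,1}(1-\hat{\cP}_{-1,0}) = \Pab M_{1,1}\Pab(\Pab - \Pab M_{-1,0}\Pab M_{1,0}\Pab)$ and similarly for the other summands, and likewise for $\check{A}$ with $\Pab$ replaced by $\Ps$. Applying $\hat{\gamma}$ to $\hat{A}$ and $\check{\gamma}$ to $\check{A}$ replaces every occurrence of $\Pab$ (respectively $\Ps$) by $\Pa$ in the first component and by $\Pb$ in the second, so by the very definition of the two homomorphisms the pairs $\hat{\gamma}(\hat{A})$ and $\check{\gamma}(\check{A})$ are the same element of $\Sab$. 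Since $\hat{A}$ is isometric Fredholm and $\check{A}$ is surjective Fredholm, Theorem~\ref{Fredholm} (and the analogous result for convex corners in \cite{Pa90}) shows that this common element is invertible in $\Sab$, indeed unitary because its image under $\sigma$ equals $1$ (both $\hat{A}-1$ and $\check{A}-1$ lie in the respective commutator ideals).

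The second step computes each composition on this generator using the standard description of the index boundary map for a Toeplitz-type extension: if $u$ is a unitary in $\Sab$ with lift $T$ which is a partial isometry, then $\delta_1[u]_1 = [1-T^*T]_0 - [1-TT^*]_0$ under the trace identification with $\Z$. For $\hat{A}$, isometry gives $1-\hat{A}^*\hat{A}=0$ while $1-\hat{A}\hat{A}^*$ is the rank-one projection onto the cokernel, so $K_0(\hat{\mathrm{Tr}})\circ\hat{\delta}_1([\hat{\gamma}(\hat{A})]_1)=-1=\ind(\hat{A})$. For $\check{A}$, surjectivity gives $1-\check{A}\check{A}^*=0$ while $1-\check{A}^*\check{A}$ is the rank-one projection onto $\Ker\check{A}$ (explicitly identified in Theorem~\ref{construction}), so $K_0(\check{\mathrm{Tr}})\circ\check{\delta}_1([\check{\gamma}(\check{A})]_1)=+1=\ind(\check{A})$.

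Comparing the two values on the common generator yields the corollary. The only delicate point is to justify the formula $\delta_1[u]_1 = [1-T^*T]_0-[1-TT^*]_0$ in our setting, which requires observing that after passing to $2\times 2$ matrices both $\hat{A}$ and $\check{A}$ admit genuine lifts of $\hat{\gamma}(\hat{A})\oplus\hat{\gamma}(\hat{A})^{*}$ (respectively the concave analogue) to unitaries, at which point the index pairing reduces to the Fredholm index; this is the standard Atkinson-type identification and presents no real obstacle, so the essential content is just the matching of the two Fredholm operators through their common image in $\Sab$.
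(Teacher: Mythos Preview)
Your approach is essentially the same as the paper's: identify $\hat{\gamma}(\hat{A}) = \check{\gamma}(\check{A})$ as a common generator of $K_1(\Sab)\cong\Z$, and then compare the two boundary maps on this generator via the Fredholm-index identification. The paper simply cites Proposition~9.4.2 of \cite{RLL00} for the identification $(K_0(\mathrm{Tr})\circ\delta_1)[u]_1 = \ind(T)$ for any Fredholm lift $T$ of an invertible $u$, and then reads off $\ind(\check{A})=1=-\ind(\hat{A})$.

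There is one technical slip in your second paragraph that you should fix. You write that surjectivity of $\check{A}$ gives $1-\check{A}\check{A}^*=0$ and that $1-\check{A}^*\check{A}$ is the rank-one projection onto $\Ker\check{A}$. Neither statement is true: $\check{A}$ is \emph{not} a partial isometry. From the proof of Theorem~\ref{construction}, two distinct basis vectors $\e_{x_0,y_0}$ and $\e_{x_1,y_1}$ are both sent to $\e_{x_1,y_1}$, so $\check{A}^*\e_{x_1,y_1}=\e_{x_0,y_0}+\e_{x_1,y_1}$ and hence $\check{A}\check{A}^*\e_{x_1,y_1}=2\e_{x_1,y_1}$. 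The correct route is the one you yourself sketch in your final paragraph (and which the paper takes by citing \cite{RLL00}): the identity $K_0(\mathrm{Tr})\circ\delta_1=\ind$ holds for any Fredholm lift, partial isometry or not, via the standard $2\times 2$ unitary-lift computation. So delete the explicit partial-isometry calculation for $\check{A}$ and rely directly on that general fact; the rest of your argument then goes through and matches the paper.
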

\begin{proof}
By the map $\hat{\gamma}$ in the quarter-plane Toeplitz extension (\ref{seq1}), we have $\hat{\gamma}(\hat{A}) \in \Sab$.
We can check the equality $\hat{\gamma}(\hat{A}) = \check{\gamma}(\check{A})$ and that this element gives a generator $[\hat{\gamma}(\hat{A})]_1$ of the $K$-group $K_1(\Sab) \cong \Z$.
By Theorem~$1$ of \cite{Ji95}, Theorem~\ref{construction} and Proposition~$9.4.2$ of \cite{RLL00}, we have 
$(K_0(\check{\mathrm{Tr}}) \circ \check{\delta}_1)([\check{\gamma}(\check{A})]_1) = \ind(\check{A}) = 1 = - \ind(\hat{A}) = - (K_0(\hat{\mathrm{Tr}}) \circ \hat{\delta}_1)([\hat{\gamma}(\hat{A})]_1)$.
\end{proof}

We now restrict our attention to the case of $\alpha = 0$ and $\beta = \infty$.
Previous works studied quarter-plane Toeplitz operators in this case and obtained many results \cite{CDS72,DH71}.
Combined with Corollary~\ref{relation}, we obtain corresponding results for concave corner Toeplitz operators, and we state it explicitly for the later use.

In \cite{CDS72}, Coburn--Douglas--Singer obtained an index formula for Fredholm quarter-plane Toeplitz operators.
A corresponding result for concave corner Toeplitz operators is as follows.
Let $r$ be a positive integer.
The map $\hat{\gamma}$ induces a surjective $*$-homomorphism
\begin{equation*}
1 \otimes \hat{\gamma} \colon M_r(\C)  {\otimes} \check{\TT}^{0,\infty} \cong M_r(\check{\TT}^{0,\infty}) \to M_r(\C)  {\otimes} \mathcal{S}^{0,\infty} \cong M_r(\mathcal{S}^{0,\infty})
\end{equation*}
which we denote $\hat{\gamma}$, for simplicity\footnote{If $A$ is an algebra, $M_r(A)$ denotes the algebra of all $r \times r$ matrices with entries in $A$.}.
The algebra $M_r(\mathcal{S}^{0,\infty})$ is a $C^*$-subalgebra of $M_r(\mathcal{T}^0) \oplus M_r(\mathcal{T}^\infty) \cong M_r(C(\T)  {\otimes} \mathcal{T}) \oplus M_r(\mathcal{T} {\otimes} C(\T))$.
We write $(\xi, \eta)$ for valuables in $\T^2$.
$M_r(\mathcal{T}^0)$ and $M_r(\mathcal{T}^\infty)$ have valuables $\xi$ and $\eta$, respectively.

\begin{corollary}\label{concaveCDS}
If $\check{T}$ is a Fredholm operator in $M_r(\check{\TT}^{0,\infty})$ with symbol $\check{\gamma}(\check{T}) = (\check{\gamma}^0(\check{T}), \check{\gamma}^\infty(\check{T}))$ in $M_r(\mathcal{S}^{0,\infty})$.
Then, there is a path $(F_t, G_t)$ in $M_r(\mathcal{S}^{0,\infty})$ such that $F_0 = \check{\gamma}^0(\check{T})$, $G_0 =  \check{\gamma}^\infty(\check{T})$ and such that
\begin{equation*}
F_1(\xi) =
\begin{pmatrix}
\xi^m & & & \\
 & 1 &  & \\
 & 	       & \ddots & \\
 &  & & 1
\end{pmatrix}
\ \text{and} \ \
G_1(\eta) =
\begin{pmatrix}
\eta^n & & & \\
 & 1 &  & \\
 & 	       & \ddots & \\
 &  & & 1
\end{pmatrix}
\end{equation*}
for some $(m,n)$ in $\Z^2$.
The Fredholm index of $\check{T}$ is given by $\ind(\check{T}) = m+n$.
\end{corollary}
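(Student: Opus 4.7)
The plan is to deduce this result from the classical Coburn--Douglas--Singer formula for the quarter-plane (\cite{CDS72}, see also \cite{DH71}) by transferring through Corollary~\ref{relation}. The strategy proceeds in three steps.

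\emph{Step 1: Lift the symbol to a Fredholm quarter-plane operator.} Given a Fredholm $\check{T} \in M_r(\check{\TT}^{0,\infty})$, the matrix-algebra version of Theorem~\ref{Fredholm} shows that $s := \check{\gamma}(\check{T}) = (\check{\gamma}^0(\check{T}), \check{\gamma}^\infty(\check{T}))$ is invertible in $M_r(\mathcal{S}^{0,\infty})$. Since the quarter-plane symbol map $\hat{\gamma} \colon M_r(\hat{\TT}^{0,\infty}) \to M_r(\mathcal{S}^{0,\infty})$ from \cite{Pa90} is surjective, I can choose a preimage $\hat{T} \in M_r(\hat{\TT}^{0,\infty})$ with $\hat{\gamma}(\hat{T}) = s$. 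Park's Fredholm criterion for quarter-plane Toeplitz operators (the convex-corner analogue of Theorem~\ref{Fredholm}) then forces $\hat{T}$ to be Fredholm as well.

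\emph{Step 2: Apply the Coburn--Douglas--Singer deformation to $\hat{T}$.} The quarter-plane CDS theorem, applied to the lifted Fredholm operator $\hat{T}$, produces the continuous path $(F_t, G_t)$ of invertibles in $M_r(\mathcal{S}^{0,\infty})$ with $(F_0, G_0) = s$ and $(F_1, G_1)$ of the stated canonical diagonal form, determined by integers $(m,n)$, together with an index formula for $\ind(\hat{T})$ in terms of $(m,n)$. Normalizing against Jiang's operator $\hat{A}$ from \cite{Ji95}---whose symbol generates $K_1(\mathcal{S}^{0,\infty}) \cong \Z$ and whose Fredholm index is $-1$---pins down the sign so that $\ind(\hat{T}) = -(m+n)$.

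\emph{Step 3: Transfer to the concave corner.} Since $\hat{\gamma}(\hat{T}) = s = \check{\gamma}(\check{T})$, the $K_1$-classes $[\hat{\gamma}(\hat{T})]_1 = [\check{\gamma}(\check{T})]_1$ agree, and the same path $(F_t, G_t)$ deforms the symbol of $\check{T}$ to the canonical form. Applying Corollary~\ref{relation} to this common class yields
\[
\ind(\check{T}) = (K_0(\check{\mathrm{Tr}}) \circ \check{\delta}_1)([s]_1) = -(K_0(\hat{\mathrm{Tr}}) \circ \hat{\delta}_1)([s]_1) = -\ind(\hat{T}) = m+n,
\]
as claimed.

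The main subtlety will be bookkeeping of signs: one must check that the normalization of the quarter-plane CDS formula (so that $\ind(\hat{A}) = -1$) combines correctly with the minus sign in Corollary~\ref{relation} to give $+(m+n)$ rather than $-(m+n)$ for the concave-corner index. Once Step 2 is properly normalized via the test case $\hat{A} \leftrightarrow \check{A}$, the remaining arguments are routine applications of Theorem~\ref{main}, Corollary~\ref{relation}, and the cited quarter-plane results.
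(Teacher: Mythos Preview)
Your proposal is correct and follows essentially the same route as the paper: lift the symbol to a Fredholm quarter-plane operator via surjectivity of $\hat{\gamma}$, invoke the Coburn--Douglas--Singer theorem to get the path and $\ind(\hat{T}) = -(m+n)$, then apply Corollary~\ref{relation} to flip the sign. The only difference is that the paper reads the sign $\ind(\hat{T}) = -(m+n)$ directly off the CDS theorem in \cite{CDS72} rather than renormalizing via Jiang's $\hat{A}$, but this is a matter of exposition, not substance.
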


\begin{proof}
Since $\hat{\gamma} \colon M_r(\hat{\TT}^{0,\infty}) \to M_r(\mathcal{S}^{0,\infty})$ is surjective, there is $\hat{T} \in M_r(\hat{\TT}^{0,\infty})$ satisfying $\hat{\gamma}(\hat{T}) = \check{\gamma}(\check{T})$.
Since $\check{T}$ is Fredholm, $\check{\gamma}(\check{T})$ is invertible in $\mathcal{S}^{0,\infty}$, and thus, $\hat{T}$ is a Fredholm quarter-plane Toeplitz operator (see Theorem $2.6$ of \cite{Pa90}).
By Theorem in p$589$ of \cite{CDS72}, such a path $(F_t, G_t)$ exists, and we have $\ind(\hat{T}) = -(m+n)$.
By Corollary~\ref{relation}, we have  $-(m+n) = \ind(\hat{T}) = (K_0(\hat{\mathrm{Tr}}) \circ \hat{\delta}_1) ([\hat{\gamma}(\hat{T})]_1) = -K_0(\hat{\mathrm{Tr}}) \circ \hat{\delta}_1 ([\check{\gamma}(\check{T})]_1) = - \ind(\check{T})$.
\end{proof}

\begin{remark}\label{notunique}
According to \cite{CDS72}, the path $(F_t, G_t)$ is not unique and each $m$ and $n$ are not uniquely determined in general.
\end{remark}

We next see that when a Fredholm concave corner Toeplitz operators is of a special form, its Fredholm index is zero.
The corresponding result for quarter-plane Toeplitz operators is obtained in \cite{DH71}.
For a continuous function, $\varphi \colon \T^2 \to \C$, the multiplication operator generated by $\varphi$ defines a bounded linear operator on $L^2(\T^2)$.
Through the Fourier transform, this multiplication operator defines a bounded linear operator $M_\varphi$ on $l^2(\Z^2)$.
Then, $\check{P}^{0,\infty} M_\varphi \check{P}^{0,\infty}$ is a concave corner Toeplitz operator.
For an operator of this form, we have the following result.

\begin{corollary}\label{zero}
Let $\varphi \colon \T^2 \to \C$ be a continuous function.
If the concave corner Toeplitz operator $\check{P}^{0,\infty} M_\varphi \check{P}^{0,\infty}$ is Fredholm, then its Fredholm index is zero.
\end{corollary}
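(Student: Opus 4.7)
The plan is to reduce the statement to the classical Douglas--Howe theorem for quarter-plane Toeplitz operators by transferring the index through Corollary~\ref{relation}. The key observation is that the concave corner operator $\check{P}^{0,\infty} M_\varphi \check{P}^{0,\infty}$ and the quarter-plane operator $\hat{P}^{0,\infty} M_\varphi \hat{P}^{0,\infty}$ share the same symbol in $\mathcal{S}^{0,\infty}$, so one can transport the known vanishing result for the convex case to the concave case.

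First, I would verify that $\check{T} := \check{P}^{0,\infty} M_\varphi \check{P}^{0,\infty}$ actually lies in $\check{\TT}^{0,\infty}$. Since trigonometric polynomials are uniformly dense in $C(\T^2)$, the operator $M_\varphi$ on $\HH$ is a norm-limit of finite combinations $\sum c_{m,n} M_{m,n}$, and consequently $\check{T}$ is a norm-limit of operators of the form $\sum c_{m,n} \check{P}^{0,\infty} M_{m,n} \check{P}^{0,\infty} \in \check{\TT}^{0,\infty}$. Defining the quarter-plane counterpart $\hat{T} := \hat{P}^{0,\infty} M_\varphi \hat{P}^{0,\infty} \in \hat{\TT}^{0,\infty}$ analogously, one checks on trigonometric polynomials and extends by continuity of the $*$-homomorphisms $\check{\gamma}$ and $\hat{\gamma}$ that
\begin{equation*}
\check{\gamma}(\check{T}) \;=\; (P^0 M_\varphi P^0,\, P^\infty M_\varphi P^\infty) \;=\; \hat{\gamma}(\hat{T})
\end{equation*}
as elements of $\mathcal{S}^{0,\infty}$.

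Since $\check{T}$ is Fredholm, Theorem~\ref{Fredholm} tells us that $\check{\gamma}(\check{T})$ is invertible in $\mathcal{S}^{0,\infty}$, hence so is $\hat{\gamma}(\hat{T})$, and the quarter-plane Fredholm criterion (Theorem~2.6 of \cite{Pa90}) implies that $\hat{T}$ is Fredholm. At this point I would invoke the Douglas--Howe theorem \cite{DH71}: a Fredholm quarter-plane Toeplitz operator whose symbol is (the restriction of) a continuous function on $\T^2$ has vanishing Fredholm index, so $\ind(\hat{T}) = 0$. Applying Corollary~\ref{relation} to the common class $[\check{\gamma}(\check{T})]_1 = [\hat{\gamma}(\hat{T})]_1 \in K_1(\mathcal{S}^{0,\infty})$ gives
\begin{equation*}
\ind(\check{T}) \;=\; (K_0(\check{\mathrm{Tr}}) \circ \check{\delta}_1)([\check{\gamma}(\check{T})]_1) \;=\; -(K_0(\hat{\mathrm{Tr}}) \circ \hat{\delta}_1)([\hat{\gamma}(\hat{T})]_1) \;=\; -\ind(\hat{T}) \;=\; 0.
\end{equation*}

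The only genuine external input is the Douglas--Howe vanishing theorem; everything else is a bookkeeping exercise in comparing symbols and applying the $K$-theoretic identification already established. As a sanity check, the same conclusion can also be read off from Corollary~\ref{concaveCDS} applied with $r=1$: the constraint that $(F_1, G_1)$ lie in $\mathcal{S}^{0,\infty}$ forces $\sigma^0(F_1) = \sigma^\infty(G_1)$ in $C(\T^2)$, i.e.\ $\xi^m = \eta^n$ as functions on $\T^2$, which gives $m = n = 0$ and hence $\ind(\check{T}) = m+n = 0$. I expect the reduction via Corollary~\ref{relation} to be cleaner to write, since it emphasizes that this is simply the concave-corner reflection of a classical fact.
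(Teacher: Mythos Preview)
Your main argument is correct and is exactly the paper's own proof: identify the common symbol $\check{\gamma}(\check{T}) = \hat{\gamma}(\hat{T}) = (P^0 M_\varphi P^0,\, P^\infty M_\varphi P^\infty)$, deduce that $\hat{T}$ is Fredholm, invoke the Douglas--Howe vanishing result to get $\ind(\hat{T})=0$, and transfer via Corollary~\ref{relation}. The extra verification that $\check{T}\in\check{\TT}^{0,\infty}$ is a harmless addition that the paper leaves implicit.

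The sanity check through Corollary~\ref{concaveCDS}, however, should be discarded. As you have written it, the argument never uses the hypothesis that $\check{T}$ is the compression of a multiplication operator: the constraint ``$(F_1,G_1)\in\mathcal{S}^{0,\infty}$ forces $\xi^m=\eta^n$, hence $m=n=0$'' would apply verbatim to \emph{every} Fredholm element of $\check{\TT}^{0,\infty}$ and yield index zero in all cases. That is false: transporting the operator $\check{A}$ of Theorem~\ref{construction} to the case $\alpha=0$, $\beta=\infty$ via the $SL(2,\Z)$ action produces a scalar Fredholm concave corner Toeplitz operator of index one. So your reading of the endpoint $(F_1,G_1)$ in Corollary~\ref{concaveCDS} cannot be the intended one (cf.\ Remark~\ref{notunique}, which already signals that $m$ and $n$ are not individually pinned down). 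Drop this paragraph; the main argument already does the job.
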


\begin{proof}
By our assumption, $\check{\gamma}(\check{P}^{0,\infty} M_\varphi \check{P}^{0,\infty}) = (P^{0} M_\varphi P^{0}, P^{\infty} M_\varphi P^{\infty})$ is an invertible element in $\mathcal{S}^{0,\infty}$.
Thus, $\hat{\gamma}(\hat{P}^{0,\infty} M_\varphi \hat{P}^{0,\infty}) = (P^{0} M_\varphi P^{0}, P^{\infty} M_\varphi P^{\infty})$ also is invertible, and the quarter-plane Toeplitz operator $\hat{P}^{0,\infty} M_\varphi \hat{P}^{0,\infty}$ is Fredholm.
By Corollary in p$208$ of \cite{DH71}, the Fredholm index of $\hat{P}^{0,\infty} M_\varphi \hat{P}^{0,\infty}$ is zero.
By Corollary~\ref{relation}, the Fredholm index of $\check{P}^{0,\infty} M_\varphi \check{P}^{0,\infty}$ is also zero.
\end{proof}

\section{Topological invariants and topologically protected corner states}
In \cite{Hayashi2}, 3-D class A systems with codimension-two convex corners are studied, and a topological invariant is defined for a gapped bulk-edges Hamiltonian as an element of some $K$-group.
Its relation with gapless corner states is also proved.
Key ingredients are index theory for quarter-plane Toeplitz operators \cite{DH71,Ji95,Pa90}.
A nontrivial example is obtained in \cite{Hayashi2} by constructing Hamiltonians from Hamiltonians of 2-D class A and 1-D class AIII (conventional) topological insulators.
For such Hamiltonians, if we consider the convex corner of the special shape ($\alpha = 0$ and $\beta = +\infty$), corner topological invariants are defined, and the numerical corner invariant is equal to the product of two topological numbers of two (conventional) topological insulators (called the product formula).
The study in \cite{Hayashi2} is based on previous works \cite{DH71,Ji95,Pa90} and is restricted to convex corners.

The results in Sect.~$2$ and Sect.~$3$ of this paper enable us to examine systems with concave corners.
In this section, we define topological invariants for some Hamiltonians on 2-D class AIII systems (Sect.~$4.1$).
Moreover, we introduce the gapless corner topological invariant especially for concave corners and show the relation between gapped and gapless invariants. 
Correspondingly, for Hamiltonians that are gapped on two edges, we can define two corner invariants corresponding to these two corners.
We show that there is a relation between these two corner invariants.
We further formulate a product formula as in \cite{Hayashi2}.
By using this, we obtain explicit examples of gapped bulk-edges Hamiltonians of nontrivial convex and concave corner invariants.
They differ by the multiplication of $-1$,
which clarifies that these corner invariants depend on the shape of the system.
Since 3-D class A systems with convex corners are studied in \cite{Hayashi2}, we mainly consider other cases.

We here collect the notations used in this subsection.
Let $V$ be a finite rank Hermitian vector space and denote the complex dimension of $V$ by $N$.
We write $\HH_V$, $\HHa_V$, $\HHb_V$, $\HHab_V$ and $\HHs_V$ for $\HH  {\otimes} V$, $\HHa  {\otimes} V$, $\HHb  {\otimes} V$, $\HHab  {\otimes} V$ and $\HHs  {\otimes} V$, respectively.
If there is an endomorphism $\Pi$ on $V$, we extend $\Pi$ onto $\HH_V$, $\HHa_V$, $\HHb_V$, $\HHab_V$ and $\HHs_V$ by the pointwise operation, i.e., $1  {\otimes} \Pi$, and denote $\Pi$, also.
Similarly, we write $\Pa$, $\Pb$ $\Pab$ and $\Ps$ for the orthogonal projections onto $\HHa_V$, $\HHb_V$, $\HHab_V$ and $\HHs_V$ defined by $\Pa  {\otimes} 1$, $\Pb  {\otimes} 1$, $\Pab  {\otimes} 1$ and $\Ps  {\otimes} 1$, for simplicity.

\subsection{2-D class AIII system}
In this subsection, we consider 2-D class AIII systems with a codimension-two corner.
We rather focus on the study of systems with concave corners,
but we briefly study convex corners and show a relation between corner invariants defined on systems with these two types of corners.

In this subsection, we assume that the vector space $V$ has a $\Z_2$-grading given by $\Pi$.
Specifically, $\Pi \colon V \to V$ is a complex linear map that satisfies $\Pi^2 = 1$.
We consider a continuous map $\T^2 \to \End_\C(V)$, $(\xi, \eta) \mapsto H(\xi, \eta)$, where, for each $(\xi,\eta) \in \T^2$, $H(\xi,\eta)$ is Hermitian.
Moreover, we assume that $H(\xi, \eta)$ preserves chiral symmetry, that is, for any $(\xi,\eta) \in \T^2$, $H(\xi,\eta)$ anti-commutes with $\Pi$.
Note that in this case, $N$ is necessarily an even number.
Through the Fourier transform, the multiplication operator on $L^2(\T^2; V)$ generated by $H(\xi, \eta)$ defines a bounded linear self-adjoint operator $H$ on the Hilbert space $\HH_V$.
We call $H$ the {\em bulk Hamiltonian}.
Let $\alpha < \beta$ be real numbers (possibly $\alpha = -\infty$ or $\beta = +\infty$, but not both).
By using them, we consider the following half-plane Toeplitz operators,
\begin{equation*}
	H^\alpha := \Pa H \Pa \colon \HHa_V \to \HHa_V, \ \ H^\beta := \Pb H \Pb \colon \HHb_V \to \HHb_V.
\end{equation*}
and call them {\em edge Hamiltonians}.
We also consider the following convex and concave corner Toeplitz operators:
\begin{equation*}
	\hat{H}^{\alpha,\beta} := \Pab H \Pab \colon \HHab_V \to \HHab_V, \ \ \check{H}^{\alpha,\beta} := \Ps H \Ps \colon \HHs_V \to \HHs_V,
\end{equation*}
and call them {\em corner Hamiltonians}.

Note that $H^\alpha$, $H^\beta$, $\hat{H}^{\alpha,\beta}$ and $\check{H}^{\alpha,\beta}$ anti-commutes with $\Pi$.
The following is our assumption in this subsection.

\vspace{1mm}
\noindent
{\bf Assumption (Spectral gap condition)}
We assume that our edge Hamiltonians have a common spectral gap at the Fermi level $0$, i.e.,
$0$ is not contained in either $\mathrm{sp}(H^\alpha)$ or $\mathrm{sp}(H^\beta)$.
We refer to this condition as the {\em spectral gap condition}.
\vspace{1mm}

\noindent
Note that under this assumption, our bulk Hamiltonian is also gapped at zero \cite{Hayashi2}. 
By using chiral symmetry, we have following decomposition:
$H =
\begin{pmatrix}
0 & h^*\\
h & 0
\end{pmatrix}$.
We now fix an orthonormal basis of $V$ and identify it with $\C^N$.
By our spectral gap condition, the operators $\Pa h \Pa$ and $\Pb h \Pb$ are both invertible.
Let $u^\alpha := \Pa h \Pa /|\Pa h \Pa|$ and $u^\beta = \Pb h \Pb/|\Pb h \Pb|$.\footnote{$\Pa h \Pa /|\Pa h \Pa|$ is defined by the continuous functional calculous by the continuous function $\C \setminus \{ 0\} \to \C$ given by $z \mapsto z/|z|$.}
The pair $(u^\alpha, u^\beta)$ defines a unitary element in $M_{N/2}(\Sab)$ and so defines an element of the $K_1$-group $K_1(\Sab)$.\footnote{This element does not depend on the choice of the identification $V \cong \C^N$.}
\begin{definition}\label{gappedinvAIII}
We define the gapped topological invariant as follows:
\begin{equation*}
	\I_\BE^{2d, \AIII}(H) :=  [(u^\alpha, u^\beta)]_1 \in K_1(\Sab).
\end{equation*}
\end{definition}
We next consider a system with a concave corner and introduce a corner topological invariant.
By the spectral gap condition, $\check{\gamma}(\check{H}^{\alpha,\beta}) = (H^\alpha, H^\beta)$ and $\check{\gamma}(\Ps h \Ps) = (\Pa h \Pa, \Pb h \Pb)$ are invertible elements.
Thus, by Theorem~\ref{Fredholm}, the operators $\check{H}^{\alpha,\beta}$ and $\Ps h \Ps$ are Fredholm.
By the polar decomposition, there is a unique partial isometry $v \in M_{N/2}(\TTs)$ such that $\Ps h \Ps = v |\Ps h \Ps|$.
By using this, we define the corner invariant.
\begin{definition}\label{gaplessinvAIII}
We define the {\em gapless corner invariant} of our system as follows:
\begin{equation*}
	\check{\I}_\Corner^{2d, \AIII}(H) := [1-v^*v]_0 - [1- vv^*]_0 \in K_0(K(\HHs)).
\end{equation*}
\end{definition}

The following is the bulk-edge and corner correspondence for our system.
\begin{theorem}\label{BECCAIII}
The map $\check{\delta}_1 \colon K_1(\Sab) \to K_0(K(\HHs))$ maps the gapped topological invariant $\I_\BE^{2d, \AIII}(H)$ to the gapless corner invariant $\check{\I}_\Corner^{2d, \AIII}(H)$.
\end{theorem}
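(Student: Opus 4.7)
The plan is to apply the standard formula for the boundary (index) map on a partial isometry lift, and to show that the partial isometry $v$ from the polar decomposition $\Ps h \Ps = v|\Ps h \Ps|$ is itself such a lift of $(u^\alpha,u^\beta)$ inside $M_{N/2}(\TTs)$. Once this is established, the Rørdam--Larsen--Laustsen formula (Proposition~9.2.4 of \cite{RLL00}, applied to the extension (\ref{exact}) tensored with $M_{N/2}(\C)$) gives
\[
\check{\delta}_1([(u^\alpha,u^\beta)]_1) = [1-v^*v]_0 - [1-vv^*]_0,
\]
which is the desired equality $\check{\delta}_1(\I_\BE^{2d,\AIII}(H)) = \check{\I}_\Corner^{2d,\AIII}(H)$.

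The first step is to verify that $v$ actually lies in $M_{N/2}(\TTs)$, not merely in $B(\HHs_V)$. The spectral gap condition says $\check{\gamma}(\Ps h \Ps)=(\Pa h\Pa,\Pb h\Pb)$ is invertible in $M_{N/2}(\Sab)$, so Theorem~\ref{Fredholm} implies $\Ps h \Ps$ is Fredholm; hence the kernel projection $q$ is finite rank and, by Proposition~\ref{contain}, belongs to $M_{N/2}(K(\HHs))\subset M_{N/2}(\TTs)$. Consequently $|\Ps h \Ps|+q \in M_{N/2}(\TTs)$ has spectrum bounded away from $0$, so by continuous functional calculus its inverse also lies in $M_{N/2}(\TTs)$. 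Since $(\Ps h\Ps)q=0$, the product $(\Ps h\Ps)(|\Ps h\Ps|+q)^{-1}$ agrees with the polar partial isometry $v$, which therefore sits in $M_{N/2}(\TTs)$.

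The second step is to identify the image $\check{\gamma}(v)$. Because $\check{\gamma}$ is a $*$-homomorphism, $\check{\gamma}(q)=0$ and $\check{\gamma}(|\Ps h\Ps|)=|(\Pa h\Pa,\Pb h\Pb)|$, which is invertible in $M_{N/2}(\Sab)$. Applying $\check{\gamma}$ to the formula for $v$ yields
\[
\check{\gamma}(v)=(\Pa h\Pa,\Pb h\Pb)\cdot|(\Pa h\Pa,\Pb h\Pb)|^{-1}=(u^\alpha,u^\beta),
\]
so $v$ is a partial isometry lift of the gapped invariant. Moreover, $1-v^*v$ and $1-vv^*$ are the orthogonal projections onto $\ker(\Ps h\Ps)$ and $\ker((\Ps h\Ps)^*)$ respectively; both are finite rank and lie in $M_{N/2}(K(\HHs))$, and their $K_0$-difference is by definition $\check{\I}_\Corner^{2d,\AIII}(H)$. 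Combining with the boundary-map formula above completes the argument.

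The only nontrivial ingredient is placing the polar partial isometry inside the concave corner Toeplitz algebra $\TTs$; this rests critically on Proposition~\ref{contain}, which allows us to absorb the finite-rank kernel projection, together with the Fredholm criterion of Theorem~\ref{Fredholm}. Once these two results of the preceding sections are in hand, the remainder is formal $K$-theoretic bookkeeping.
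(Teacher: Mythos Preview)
Your proof is correct and follows the same approach as the paper: verify that the polar partial isometry $v$ lifts $(u^\alpha,u^\beta)$ under $\check{\gamma}$ and then invoke Proposition~9.2.4 of \cite{RLL00}. The paper's proof is a single line (``Since $\check{\gamma}(v)=(u^\alpha,u^\beta)$, this follows from Proposition~9.2.4 of \cite{RLL00}''), whereas you additionally supply the justification that $v\in M_{N/2}(\TTs)$ via the kernel-projection trick---a point the paper simply asserts in Definition~\ref{gaplessinvAIII} without argument.
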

\begin{proof}
Since $\hat{\gamma}(v) = (u^\alpha, u^\beta)$, this follows from Proposition~$9.2.4$ of \cite{RLL00}.
\end{proof}

By using the isomorphism $K_0(\check{\mathrm{Tr}}) \colon K_0(K(\HHs)) \to \Z$, we obtain an integer, i.e., the {\em numerical} corner invariant.
We here write it down explicitly.
Since $\check{H}^{\alpha,\beta}$ is Fredholm, $\Ker \check{H}^{\alpha,\beta}$ is of finite rank.
Since $\Pi$ anti-commutes with $\check{H}^{\alpha,\beta}$, $\Pi$ acts on $\Ker \check{H}^{\alpha,\beta}$.
Moreover, since $\Pi^2 = 1$, the space $\Ker \check{H}^{\alpha,\beta}$ decomposes into the direct sum of $+1$ eigenspace $W^+$ and $-1$ eigenspace $W^-$ of $\Pi |_{\Ker \check{H}^{\alpha,\beta}}$.
We define its {\em signature} $\mathrm{sign}( \Pi |_{\Ker \check{H}^{\alpha,\beta}})$ as the difference of the rank of these spaces, that is,
\begin{equation*}
	\mathrm{sign}(\Pi |_{\Ker \check{H}^{\alpha,\beta}}) := \rank W^+ - \rank W^-.
\end{equation*}
Note that the signature is used to define edge indices of 1-D class AIII topological insulators (see \cite{PS16}, for example).
By using this, the {\em numerical corner invariant} of our 2-D class AIII system with a codimension-two concave corner is expressed as follows:
\begin{equation*}
	K_0(\check{\mathrm{Tr}})(\check{\I}_\Corner^{2d, \AIII}(H)) =  \ind(\Ps h \Ps) = \mathrm{sign}( \Pi |_{\Ker \check{H}^{\alpha,\beta}}) \in \Z.
\end{equation*}

By using the extension (\ref{seq1}) instead of (\ref{exact}), we can also treat convex corners in the same way.
By using the convex corner Hamiltonian $\hat{H}^{\alpha,\beta}$, the corner topological invariant $\hat{\I}_\Corner^{2d, \AIII}(H)$ for a 2-D class AIII system with codimension-two convex corner is defined as an element of the $K$-group $K_0(K(\HHab))$.
Its numerical corner invariant is given by $K_0(\hat{\mathrm{Tr}})(\hat{\I}_\Corner^{2d, \AIII}(H)) = \mathrm{sign}( \Pi |_{\Ker \hat{H}^{\alpha,\beta}})$.
Moreover, the bulk-edge and corner correspondence holds; that is, 
\begin{equation}\label{BECCAIII2}
	\hat{\delta}_1(\I_\BE^{2d, \AIII}(H)) = \hat{\I}_\Corner^{2d, \AIII}(H).
\end{equation}

The following is a relation between numerical corner invariants for convex and concave corners.
\begin{theorem}\label{minusAIII}
$K_0(\check{\mathrm{Tr}})(\check{\I}_\Corner^{2d, \AIII}(H)) = - K_0(\hat{\mathrm{Tr}})(\hat{\I}_\Corner^{2d, \AIII}(H))$
\end{theorem}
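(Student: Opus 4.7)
The plan is to observe that this identity is essentially a direct consequence of Corollary~\ref{relation} together with the two bulk-edge and corner correspondences already established in this subsection. First, I would apply Theorem~\ref{BECCAIII}, which identifies the concave numerical corner invariant as
\[
K_0(\check{\mathrm{Tr}})(\check{\I}_\Corner^{2d, \AIII}(H)) = (K_0(\check{\mathrm{Tr}}) \circ \check{\delta}_1)(\I_\BE^{2d, \AIII}(H)),
\]
and similarly equation (\ref{BECCAIII2}) to write
\[
K_0(\hat{\mathrm{Tr}})(\hat{\I}_\Corner^{2d, \AIII}(H)) = (K_0(\hat{\mathrm{Tr}}) \circ \hat{\delta}_1)(\I_\BE^{2d, \AIII}(H)).
\]
Both quantities are thus expressed as the image of the common gapped bulk-edges invariant $\I_\BE^{2d, \AIII}(H) = [(u^\alpha, u^\beta)]_1 \in K_1(\Sab)$ under two different group homomorphisms $K_1(\Sab) \to \Z$.

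Next, I would invoke Corollary~\ref{relation}, which gives the identity of these two homomorphisms up to sign, $K_0(\check{\mathrm{Tr}}) \circ \check{\delta}_1 = - K_0(\hat{\mathrm{Tr}}) \circ \hat{\delta}_1$. Applying both sides to $\I_\BE^{2d, \AIII}(H)$ yields the theorem at once. The only point that needs a brief justification is that Corollary~\ref{relation} was stated at the level of the scalar algebras $\TTab$ and $\TTs$, while the invariant $\I_\BE^{2d, \AIII}(H)$ lives naturally in $K_1(M_{N/2}(\Sab))$; however, matrix stabilization is implicit in the definition of $K_1$, so the corresponding boundary maps and the trace-induced isomorphisms on compacts are the same homomorphisms $K_1(\Sab) \to \Z$ in both scalar and matrix settings, and the sign identity persists.

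There is no real obstacle: the theorem is a compact packaging of Corollary~\ref{relation} in the language of Hamiltonians. The one place I would double-check carefully is the compatibility of the polar decomposition used to define $\check{\I}_\Corner^{2d, \AIII}(H)$ with the standard description of $\check{\delta}_1$ via lifts of unitaries (as in Proposition~$9.2.4$ of \cite{RLL00}), since this is precisely what powers the proof of Theorem~\ref{BECCAIII}; the analogous compatibility for the convex case giving (\ref{BECCAIII2}) is entirely parallel. Once these two identifications are in hand, the sign identity of Corollary~\ref{relation} immediately delivers the relation between the convex and concave numerical corner invariants.
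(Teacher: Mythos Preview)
Your proposal is correct and follows exactly the same approach as the paper: the paper's one-line proof simply cites Corollary~\ref{relation}, Theorem~\ref{BECCAIII}, and equation~(\ref{BECCAIII2}), which is precisely the combination you spell out. Your remarks on matrix stabilization and the polar-decomposition compatibility are accurate side observations but not additional content beyond what the paper implicitly assumes.
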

\begin{proof}
This follows from Corollary~\ref{relation}, Theorem~\ref{BECCAIII} and (\ref{BECCAIII2}).
\end{proof}

We now compare our gapped topological invariant $\I_\BE^{2d, \AIII}(H)$ with {\em bulk} topological invariants for 2-D class AIII topological insulators.
Let $u := h/|h|$.
Through the Fourier transform, $u$ defines a unitary element in $M_{N/2}(C(\T^2))$ and thus defines an element $[u]_1$ in $K_1(C(\T^2))$.
We have $K_1(C(\T^2)) \cong \Z \oplus \Z$ and topological invariants for the bulk Hamiltonian corresponding to these two $\Z$ components are called {\em weak invariants}.
\begin{proposition}\label{weakAIII}
	For Hamiltonians satisfying our spectral gap condition, these two weak invariants are zero.
\end{proposition}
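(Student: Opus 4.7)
The plan is to combine the invertibility of the edge Hamiltonians with the six-term $K$-theoretic exact sequences for the half-plane Toeplitz extensions
\begin{equation*}
0 \to \cCa \to \TTa \overset{\sigmaa}{\to} C(\T^2) \to 0,
\end{equation*}
and its $\beta$-counterpart, and then to identify the two boundary maps as linear functionals of the weak invariants determined by the slopes $\alpha$ and $\beta$.

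First, I would observe that the spectral gap condition makes $\Pa h \Pa$ invertible in $M_{N/2}(\TTa)$: it lies in $M_{N/2}(\TTa)$ by the Fourier expansion of $h$, it is invertible in $B(\HHa_V)$ by assumption, and a $C^*$-subalgebra is closed under inversion (apply continuous functional calculus to $(\Pa h \Pa)^* \Pa h \Pa$). Hence $u^\alpha$ is a genuine unitary in $M_{N/2}(\TTa)$ with $\sigmaa(u^\alpha) = u$, and similarly $u^\beta$ is a unitary in $M_{N/2}(\TTb)$ with $\sigmab(u^\beta) = u$. Since $[u]_1$ lies in the image of both $\sigmaa_*$ and $\sigmab_*$, exactness of the six-term sequence immediately yields $\partial^\alpha([u]_1) = 0$ in $K_0(\cCa)$ and $\partial^\beta([u]_1) = 0$ in $K_0(\cCb)$.

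Next, I would identify these boundary maps with linear functionals on the weak invariants $(w_1, w_2) \in \Z^2 \cong K_1(C(\T^2))$, where $w_1, w_2$ are the winding numbers of $\det u$ along the two standard circles in $\T^2$. Combining the $SL(2,\Z)$-reduction already used in Section 3 with the $K$-theoretic descriptions of the half-plane commutator ideals from \cite{JK88, Xia88}, one checks that, after pairing with a canonical trace, the boundary map computes $q w_2 - p w_1 \in \Z$ when $\alpha = p/q$ is rational in lowest terms, $w_2 - \alpha w_1 \in \Z + \alpha \Z \subset \R$ when $\alpha$ is irrational, and $\pm w_1$ in the limiting cases $\alpha = -\infty$ or $\beta = +\infty$.

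Finally, the two vanishings $\partial^\alpha([u]_1) = 0 = \partial^\beta([u]_1)$ supply two linear relations in $(w_1, w_2)$. If at least one of $\alpha, \beta$ is irrational, say $\alpha$, then $w_2 - \alpha w_1 = 0$ with $w_1, w_2 \in \Z$ already forces $w_1 = w_2 = 0$ by irrationality of $\alpha$. If both slopes are rational and distinct, the linear system has determinant $q p' - p q' \neq 0$, again forcing $w_1 = w_2 = 0$; the limiting cases are analogous. The hard part is the middle step, particularly in the irrational case where $K_0(\cCa) \cong \Z^2$ carries the trace of the irrational rotation algebra with image $\Z + \alpha \Z$: one must argue that this trace is injective on the relevant image, so that the $K$-theoretic vanishing truly converts into a real-valued relation between $w_1$ and $w_2$.
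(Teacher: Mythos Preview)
Your approach is correct but differs from the paper's. The paper observes that $[u]_1 = \sigma_*([(u^\alpha,u^\beta)]_1)$ and then shows, via the Mayer--Vietoris sequence for the pullback square defining $\Sab$, that $\sigma_* \colon K_1(\Sab) \to K_1(C(\T^2))$ vanishes identically; this uses the known $K$-theory of $\Sab$ and the half-plane algebras in one stroke and treats all slopes uniformly (indeed, one sees it directly on the explicit generator $[\hat{\gamma}(\hat{A})]_1$ of $K_1(\Sab)$, since $\hat{A}-1$ lies in the commutator ideal and hence $\sigma(\hat{\gamma}(\hat{A}))=1$). You instead work through the two half-plane Toeplitz extensions separately: lifting $u$ to unitaries in $M_{N/2}(\TTa)$ and $M_{N/2}(\TTb)$, you deduce that both boundary classes $\partial^\alpha([u]_1)$ and $\partial^\beta([u]_1)$ vanish, and then identify these boundary maps concretely as linear forms in $(w_1,w_2)$ determined by the slopes. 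Your route buys an explicit, geometric explanation for the vanishing tied to the edge directions, but at the cost of a rational/irrational case split and some nontrivial auxiliary input (the gap-labelling-type formula for $\tau \circ \partial^\alpha$ and the injectivity of the canonical trace on $K_0(\cCa)$ in the irrational case), all of which the paper's uniform Mayer--Vietoris argument sidesteps.
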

\begin{proof}
The algebra $\Sab$ is defined as a pullback.
By calculating the Mayer-Vietoris exact sequence for the pull-back diagram (\ref{Sab}), we can check that $\sigma_* \colon K_1(\Sab)$ $\to K_1(C(\T^2))$ is the zero map.
Since $\sigma_*([(u^\alpha, u^\beta)]_1) = [u]_1$, we have $[u]_1 = 0$, which means that these two weak invariants are both zero.
\end{proof}

We next restrict our attention to the case of $\alpha = 0$ and $\beta = \infty$ and consider an explicit example.
We first see the following constraint.
\begin{proposition}\label{remrank}
When $N=\rank V$ is $2$, the corner invariants $\hat{\I}_\Corner^{2d, \AIII}(H)$ and $\check{\I}_\Corner^{2d, \AIII}(H)$ for convex and concave corners are both zero.
\end{proposition}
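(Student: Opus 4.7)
The plan is to reduce the problem to the scalar symbol case and then invoke the vanishing results already established for scalar corner Toeplitz operators with continuous symbols. When $N=\rank V = 2$, the chiral grading $\Pi$ splits $V$ into one-dimensional $\pm 1$ eigenspaces $V_\pm$, and the anti-commutation $\Pi H + H \Pi = 0$ forces the off-diagonal block $h$ appearing in the paper's decomposition of $H$ to map $V_+$ to $V_-$. After fixing unit vectors in $V_\pm$, $h$ is therefore a single scalar-valued continuous function $\T^2 \to \C$, and the bulk/edge/corner Hamiltonians are, up to the chiral block decomposition, scalar Toeplitz operators of the form considered in Corollary~\ref{zero}.

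With this identification, the concave corner Hamiltonian takes the form
\begin{equation*}
\check{H}^{0,\infty} = \begin{pmatrix} 0 & (\Ps M_h \Ps)^* \\ \Ps M_h \Ps & 0 \end{pmatrix}
\end{equation*}
on $\HHs \oplus \HHs$, from which $\sign(\Pi |_{\Ker \check{H}^{0,\infty}}) = \dim \Ker(\Ps M_h \Ps) - \dim \Ker((\Ps M_h \Ps)^*) = \ind(\Ps M_h \Ps)$. Under the spectral gap condition, $P^0 h P^0$ and $P^\infty h P^\infty$ are invertible, so Theorem~\ref{Fredholm} makes $\Ps M_h \Ps$ Fredholm, and Corollary~\ref{zero} immediately gives $\ind(\Ps M_h \Ps) = 0$. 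Through the isomorphism $K_0(\check{\mathrm{Tr}}) \colon K_0(K(\HHs)) \to \Z$ this yields $\check{\I}_\Corner^{2d, \AIII}(H) = 0$.

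For the convex corner one can argue identically using Douglas--Howe's vanishing theorem (the Corollary on p.~$208$ of \cite{DH71}) that any Fredholm quarter-plane Toeplitz operator with continuous scalar symbol has index zero, or more economically invoke Theorem~\ref{minusAIII}, which equates the two numerical corner invariants up to sign; either way $\hat{\I}_\Corner^{2d, \AIII}(H) = 0$ in $K_0(K(\HHab)) \cong \Z$. I do not foresee a genuine obstacle: the reduction to scalar symbols is immediate from the rank hypothesis, and the required vanishing theorems have already been established in this paper (Corollary~\ref{zero}) and in \cite{DH71}.
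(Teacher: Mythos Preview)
Your proof is correct and follows essentially the same route as the paper: both arguments observe that $N=2$ makes $h$ a scalar continuous symbol, then invoke the Douglas--Howe vanishing result for the convex corner and Corollary~\ref{zero} for the concave corner (you treat the concave case first and offer Theorem~\ref{minusAIII} as an alternative for the convex case, but the substance is identical).
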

\begin{proof}
We first consider the case of convex corners.
Since $K_0(\hat{\mathrm{Tr}})$ is an isomorphism, it is sufficient to show that $K_0(\hat{\mathrm{Tr}})(\hat{\I}_\Corner^{2d, \AIII}(H))$ is zero.
Note that $K_0(\hat{\mathrm{Tr}})(\hat{\I}_\Corner^{2d, \AIII}(H)) = \ind(\hat{P}^{0,\infty} h \hat{P}^{0,\infty})$.
When $N = 2$, $h$ is a Fourier transform of a multiplication operator on $L^2(\T^2)$ generated by a continuous function $\T^2 \to \C$.
Then, the results follow from Corollary in p$208$ of \cite{DH71}.
The result for concave cases follows from Corollary~\ref{zero}.
\end{proof}
Thus, to find 2-D class AIII Hamiltonians of nontrivial corner invariants, $N$ must be greater than or equal to $4$ since $N$ is an even integer.

We now give a construction of nontrivial examples.
For $j=1,2$, let $V_j$ be $\Z_2$-graded finite rank Hermitian vector spaces whose $\Z_2$-gradings are given by complex linear maps $\Pi_j \colon V_j \to V_j$ that satisfy $\Pi_j^2 = 1$ $(j=1,2)$.
Let $H_j$ be multiplication operators on $l^2(\Z; V_j)$ generated by continuous maps $\T \to \End(V_j)$, $t \mapsto H_j(t)$.
We assume that $H_j$ is self-adjoint invertible and satisfies the relation $\Pi_j H_j \Pi_j^* = -H_j$ $(j=1,2)$.
$H_1$ and $H_2$ are Hamiltonians of 1-D class AIII (conventional) topological insulators.
Let $\I^{1d, \AIII}(H_1)$ and $\I^{1d, \AIII}(H_2)$ be their topological invariants, which are defined as follows\footnote{We here give the definition of edge topological invariants for 1-D class AIII topological insulators. By the bulk-edge correspondence, this coincides with the bulk topological invariant which is defined as the winding number of the determinant of its symbol, that is $\mathrm{Wind}(\{\det h_j(t)\}_{t \in \T})$ where $H_j =
\begin{pmatrix}
0 & h_j^*\\
h_j & 0
\end{pmatrix}$. (see \cite{PS16}, for example).}.
Let $\Ker H_1 = W_1^+ \oplus W_1^-$ be the eigenspace decomposition with respect to $\Pi_1$, where the action of $\Pi_1$ on $W_1^\pm$ is $\pm 1$, respectively.
Let $w_1^+ = \rank W_1^+$ and $w_1^- = \rank W_1^-$. Then, we have $\I^{1d, \AIII}(H_1) = -w_1^+ + w_1^-$.
We also take the eigenspace decomposition $\Ker H_2 = W_2^+ \oplus W_2^-$ with respect to $\Pi_2$ and let $w_2^+ = \rank W_2^+$ and $w_2^- = \rank W_2^-$.
Then, we have $\I^{1d, \AIII}(H_2) = -w_2^+ + w_2^-$.
We consider the following operator on $l^2(\Z^2; V_1  {\otimes} V_2)$:
\begin{equation*}
	H = H_1  {\otimes} \Pi_2 + 1  {\otimes} H_2.
\end{equation*}
which has a chiral symmetry given by $\Pi := \Pi_1  {\otimes} \Pi_2$.
Then, the bulk and two edge Hamiltonians $H$, $H^0$ and $H^\infty$ are all invertible, i.e., gapped at zero (see Theorem $4$  (1) of \cite{Hayashi2}).
Moreover, the following formulae hold:
\begin{theorem}\label{prodthmAIII}
\begin{enumerate}
\renewcommand{\labelenumi}{(\arabic{enumi})}
	\item $K_0(\hat{\mathrm{Tr}})(\hat{\I}_\Corner^{2d,\AIII}(H)) = \I^{1d, \AIII}(H_1) \cdot \I^{1d, \AIII}(H_2)$,
	\item $K_0(\check{\mathrm{Tr}})(\check{\I}_\Corner^{2d,\AIII}(H)) = - \I^{1d, \AIII}(H_1) \cdot \I^{1d, \AIII}(H_2)$.
\end{enumerate}
\end{theorem}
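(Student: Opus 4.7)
The plan is to first reduce part (2) to part (1) via Theorem~\ref{minusAIII}, and then prove part (1) by computing $\Ker \hat{H}^{0,\infty}$ explicitly using the tensor product structure of $H$.

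With $\alpha = 0$ and $\beta = \infty$ the convex corner is the first quadrant $\Z_{\geq 0} \times \Z_{\geq 0}$, so the corner Hilbert space $\hat{\HH}^{0,\infty} \otimes V_1 \otimes V_2$ factors as $(l^2(\Z_{\geq 0}) \otimes V_1) \otimes (l^2(\Z_{\geq 0}) \otimes V_2)$. Under this identification the corner Hamiltonian decomposes cleanly as $\hat{H}^{0,\infty} = H_1^+ \otimes \Pi_2 + 1 \otimes H_2^+$, where $H_j^+$ denotes the half-line Toeplitz compression of $H_j$. Since $\Pi_2^2 = 1$ and $\{\Pi_2, H_2^+\} = 0$ (inherited from $\{\Pi_2, H_2\} = 0$, as $\Pi_2$ acts only on internal indices and commutes with any spatial projection), the cross terms cancel and
\[
(\hat{H}^{0,\infty})^2 = (H_1^+)^2 \otimes 1 + 1 \otimes (H_2^+)^2.
\]
Both summands are self-adjoint and positive semi-definite, so the kernel of the sum equals the intersection of the individual kernels, giving $\Ker \hat{H}^{0,\infty} = \Ker H_1^+ \otimes \Ker H_2^+$. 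Finite-dimensionality of this tensor product, i.e.\ Fredholmness of each $H_j^+$, is guaranteed by the spectral gap condition on the edge Hamiltonians, which is already established for this product construction by Theorem~$4(1)$ of \cite{Hayashi2}.

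To finish part (1), I would decompose $\Ker H_j^+ = W_j^+ \oplus W_j^-$ into $\Pi_j$-eigenspaces in accordance with the paper's 1-D AIII setup. The chiral operator $\Pi = \Pi_1 \otimes \Pi_2$ has eigenvalue $(\pm)(\pm)$ on $W_1^\pm \otimes W_2^\pm$, so a direct count yields
\[
\mathrm{sign}(\Pi|_{\Ker \hat{H}^{0,\infty}}) = (w_1^+ - w_1^-)(w_2^+ - w_2^-).
\]
Since $\I^{1d, \AIII}(H_j) = -w_j^+ + w_j^-$, the two sign factors cancel in the product, and I conclude $K_0(\hat{\mathrm{Tr}})(\hat{\I}_\Corner^{2d,\AIII}(H)) = \I^{1d,\AIII}(H_1) \cdot \I^{1d,\AIII}(H_2)$. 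Part (2) then follows immediately from Theorem~\ref{minusAIII}.

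The main obstacle is arranging the clean factorization $\hat{H}^{0,\infty} = H_1^+ \otimes \Pi_2 + 1 \otimes H_2^+$: it is available only because the special shape $\alpha = 0$, $\beta = \infty$ makes the corner projector a tensor product of one-dimensional half-line projectors, and for a generic slope this decoupling would fail. Once the factorization is in hand, the supersymmetry-type cancellation producing $(\hat{H}^{0,\infty})^2 = (H_1^+)^2 \otimes 1 + 1 \otimes (H_2^+)^2$ and the eigenspace bookkeeping for $\Pi_1 \otimes \Pi_2$ are routine, and matching the sign conventions of the 1-D invariant with the 2-D signature is the only remaining bookkeeping task.
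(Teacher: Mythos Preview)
Your proof is correct and follows essentially the same route as the paper's: both identify $\Ker \hat{H}^{0,\infty}$ with $\Ker H_1^+ \otimes \Ker H_2^+$, decompose into $\Pi_j$-eigenspaces, and read off the signature as the product $(w_1^+ - w_1^-)(w_2^+ - w_2^-)$, then invoke Theorem~\ref{minusAIII} for part~(2). The only difference is that the paper imports the kernel identity from Theorem~4 of \cite{Hayashi2}, whereas you supply the short supersymmetry-style argument $(\hat{H}^{0,\infty})^2 = (H_1^+)^2 \otimes 1 + 1 \otimes (H_2^+)^2$ directly; one minor remark is that the Fredholmness of each $H_j^+$ follows already from the assumed invertibility of the bulk $H_j$ via ordinary Toeplitz theory, so you need not appeal to the 2-D edge gap condition for that step.
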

\begin{proof}
As in Theorem~$4$ of \cite{Hayashi2}, $\Ker \hat{H}^{0,\infty} = \Ker H_1  {\otimes} \Ker H_2$ holds.
We have
\begin{eqnarray*}
	\Ker \hat{H}^{0,\infty} 	&=&	(W_1^+ \oplus W_1^-)  {\otimes} (W_2^+ \oplus W_2^-)\\
			&=& (W_1^+  {\otimes} W_2^+) \oplus (W_1^-  {\otimes} W_2^+) \oplus (W_1^+  {\otimes} W_2^-) \oplus (W_1^-  {\otimes} W_2^-).
\end{eqnarray*}
The operator $\Pi$ acts on this space, and we have
\begin{align*}
	&K_0(\hat{\mathrm{Tr}})(\hat{\I}_\Corner^{2d,\AIII}(H)) = \sign \Pi|_{\Ker \hat{H}^{0,\infty}}\\
		&=	\rank(W_1^+  {\otimes} W_2^+) - \rank(W_1^-  {\otimes} W_2^+) - \rank(W_1^+  {\otimes} W_2^-) + \rank(W_1^-  {\otimes} W_2^-)\\
		&= w_1^+ w_2^+ - w_1^- w_2^+ - w_1^+ w_2^- + w_1^-w_2^-
		= (w_1^+ - w_1^-)(w_2^+ - w_2^-)\\
		&= \I^{1d, \AIII}(H_1) \cdot \I^{1d, \AIII}(H_2).
\end{align*}
This proves (1).
(2) follows from (1) and Theorem~\ref{minusAIII}.
\end{proof}
Note that to find $H_1$ and $H_2$ of nontrivial topological invariants, the rank of $V_1$ and $V_2$ must be greater than or equal to $2$.
Thus, to find an example of a nontrivial corner invariant in this way, the rank of $V_1  {\otimes} V_2$ must be greater than or equal to $4$.
This is consistent with Proposition~\ref{remrank}, and an example contained in Sect.~\ref{sectexam} provides an example of $N=4$.

\begin{remark}\label{CDScorner}
Numerical corner invariants for convex and concave corners are given by Fredholm indices of convex and concave corner Toeplitz operators, respectively.
When $\alpha = 0$ and $\beta =\infty$, the Coburn--Douglas--Singer index formula \cite{CDS72} and its concave analog (Corollary~\ref{concaveCDS}) give a topological method to compute them by using gapped Hamiltonians.
However, to find a necessary path in the algebra $M_{N/2}(\mathcal{S}^{0,\infty})$ is not necessarily easy in general \cite{CDS72,Pa90}.
\end{remark}

\begin{remark}
Since we defined topological invariants ($\I_\BE^{2d, \AIII}(H)$ in Definition~\ref{gappedinvAIII} and $\check{\I}_\Corner^{2d, \AIII}(H)$ in Definition~\ref{gaplessinvAIII}) and stated their relation (Theorem~\ref{BECCAIII}) in a $K$-theoretic way, a generalization to the higher-dimensional case is straightforward, as in Remark~$5$ of \cite{Hayashi2}.
For a $(n+2)$-D class AIII system with codimension-two concave corner, a topological invariant for gapped bulk-edges Hamiltonians is defined as an element of $K_1(\Sab  {\otimes} \T^{n})$, and a gapless corner invariant is defined as that of $K_0(K(\HHs)  {\otimes} \T^{n})$.
Let $\check{\delta}_1 \colon K_1(\Sab  {\otimes} \T^{n})$ $\to$ $K_0(K(\HHs)  {\otimes} \T^{n})$ be a boundary homomorphism associated with a short exact sequence obtained by taking a tensor product of the sequence (\ref{exact}) and $C(\T^n)$.
Then, $\check{\delta}_1$ maps the gapped topological invariant to the gapless corner invariant.
Its definition and proof are parallel with the one in this subsection.
\end{remark}

\subsection{3-D class A system}
In this subsection, we consider 3-D class A systems with codimension-two concave corners.
The contents of this section are almost parallel with \cite{Hayashi2}, but we here use the sequence (\ref{exact}) instead of the quarter-plane Toeplitz extension (\ref{seq1}) used in \cite{Hayashi2}.

We consider a continuous map $\T^3 \to \End_\C(V)$, $(\xi, \eta, t) \mapsto H(\xi, \eta, t)$, where, for each $(\xi,\eta,t) \in \T^3$, $H(\xi,\eta,t)$ is Hermitian.
The multiplication operator generated by $H(\xi, \eta, t)$ defines a bounded linear operator on $L^2(\T^3;V)$.
Through the Fourier transform, we obtain a bounded linear self-adjoint operator $H$ on $l^2(\Z^3;V)$ and
We call $H$ the {\em bulk Hamiltonian}.
By the Fourier transform in the last $\Z$ component, we obtain a continuous family of bounded linear self-adjoint operators $\{ H(t) \colon \HH_V \to \HH_V \}_{t \in \T}$.
By taking their compressions onto $\HHa_V$ and $\HHb_V$, we obtain one-parameter families of half-plane Toeplitz operators,
 \begin{equation*}
	\{ H^\alpha(t) := \Pa H(t) \Pa \}_{t \in \T}, \ \
	\{ H^\beta(t) := \Pb H(t) \Pb \}_{t \in \T},
\end{equation*}
and we call them {\em edge Hamiltonians}.
We also consider the compression onto $\HHs_V$ and obtain the following family of concave corner Toeplitz operators:
\begin{equation*}
	\{ \check{H}^{\alpha,\beta}(t) := \Ps H(t) \Ps\}_{t \in \T},
\end{equation*}
We call them the {\em corner Hamiltonian}.
The following is our assumption in this subsection.

\vspace{1mm}
\noindent
{\bf Assumption (Spectral gap condition)}
We assume that our edge Hamiltonians have a common spectral gap at the Fermi level $\mu \in \R$ for any $t$ in $\T$, i.e.,
$\mu$ is not contained in either $\mathrm{sp}(H^\alpha(t))$ or $\mathrm{sp}(H^\beta(t))$.
We refer to this condition as a {\em spectral gap condition}.
\vspace{1mm}

\noindent
In what follows, we assume $\mu = 0$ without loss of generality.
Under the spectral gap condition, the gapped topological invariant is defined as an element of a $K$-group, that is,
$\I_\BE^{3d, \A}(H) \in K_0(\mathcal{S}^{\alpha,\beta}  {\otimes} C(\T))$ (defined at Definition~$1$ of \cite{Hayashi2} and denoted $\I_\BE(H)$ there).
We here consider a concave corner that appears as a union of two half-planes and defines the corner invariant.

\begin{definition}\label{concavecornerinv}
By the spectral gap condition and Theorem~\ref{main}, we have a continuous family $\{ \check{H}^{\alpha,\beta}(t) \}_{t \in \T}$ of bounded linear self-adjoint Fredholm operators.
This family defines an element $\check{\I}_\Corner^{3d, \A}(H)$ of the $K$-group $K_1(C(\T))$.
We call $\check{\I}_\Corner^{3d, \A}(H)$ the {\em gapless corner invariant}.
\end{definition}
Its numerical corner invariant is given by using spectral flow\footnote{We here regard $\T$ as the unit circle in the complex plane and fix the counter-clockwise orientation.} $\mathrm{sf} \colon$ $K_1(C(\T)) \to \Z$, that is, $\mathrm{sf}(\check{\I}_\Corner^{3d, \A}(H)) \in \Z$.
The following is the bulk-edge and corner correspondence for our system.
\begin{theorem}\label{BECCA}
The map $\check{\delta_0} \colon K_0(\Sab  {\otimes} C(\T)) \to K_1(C(\T))$ maps $\I_\BE^{3d, \A}(H)$ to the gapless corner invariant.
That is,
$\check{\delta}_0(\I_\BE^{3d, \A}(H)) =\check{\I}_\Corner^{3d, \A}(H)$.
\end{theorem}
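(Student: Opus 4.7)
The plan is to mirror the proof given in \cite{Hayashi2} for convex corners, but using the concave Toeplitz extension established in Theorem~\ref{main} in place of (\ref{seq1}). Tensoring (\ref{exact}) with $C(\T)$ yields the short exact sequence
\begin{equation*}
0 \to K(\HHs) \otimes C(\T) \to \TTs \otimes C(\T) \to \Sab \otimes C(\T) \to 0,
\end{equation*}
whose exponential boundary homomorphism is precisely $\check{\delta}_0 \colon K_0(\Sab \otimes C(\T)) \to K_1(K(\HHs) \otimes C(\T)) \cong K_1(C(\T))$. The first step is therefore to represent both invariants by concrete operators in these algebras, and the second step is to connect them via the standard formula for the exponential map.

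First I would fix a continuous function $\chi \colon \R \to [0,1]$ which vanishes on a neighborhood of $\bigcup_{t \in \T} \mathrm{sp}(H^\alpha(t))_{<0} \cup \bigcup_{t \in \T} \mathrm{sp}(H^\beta(t))_{<0}$ and equals $1$ on a neighborhood of the positive parts; the spectral gap condition together with compactness of $\T$ ensures such a $\chi$ exists. The continuous functional calculus produces projections $\chi(H^\alpha(\cdot)) \in \TTa \otimes C(\T)$ and $\chi(H^\beta(\cdot)) \in \TTb \otimes C(\T)$ whose symbols agree, so the pair defines a projection $\mathbf{p} \in \Sab \otimes C(\T)$. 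By the definition of $\I_\BE^{3d,\A}(H)$ recalled from \cite{Hayashi2}, the class $[\mathbf{p}]_0$ equals $\I_\BE^{3d,\A}(H)$, independently of the choice of $\chi$.

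Next I would observe that $\chi(\check{H}^{\alpha,\beta}(\cdot)) \in \TTs \otimes C(\T)$ is a self-adjoint lift of $\mathbf{p}$: since $\check{\gamma}$ is a $*$-homomorphism and $\check{\gamma}(\check{H}^{\alpha,\beta}(t)) = (H^\alpha(t), H^\beta(t))$, one has $\check{\gamma}(\chi(\check{H}^{\alpha,\beta}(t))) = (\chi(H^\alpha(t)), \chi(H^\beta(t))) = \mathbf{p}(t)$. Applying the standard exponential-map formula (Proposition~$12.2.2$ of \cite{RLL00}) yields
\begin{equation*}
\check{\delta}_0(\I_\BE^{3d,\A}(H)) = \bigl[\exp\bigl(-2\pi i\, \chi(\check{H}^{\alpha,\beta}(\cdot))\bigr)\bigr]_1 \in K_1(K(\HHs) \otimes C(\T)).
\end{equation*}

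The remaining step, which I expect to be the main obstacle, is to identify this right-hand class with the gapless invariant $\check{\I}_\Corner^{3d,\A}(H)$ under the canonical isomorphism $K_1(K(\HHs) \otimes C(\T)) \cong K_1(C(\T))$. The gapless invariant is the $K_1$-class of the family $\{\check{H}^{\alpha,\beta}(t)\}_{t \in \T}$ of self-adjoint Fredholm operators, and its numerical value is the spectral flow around $\T$. Under the standard spectral-flow picture of $K_1(C(\T))$, this class is represented by the unitary $t \mapsto \exp(-2\pi i\, \chi(\check{H}^{\alpha,\beta}(t)))$, which is exactly the representative produced by $\check{\delta}_0$ above. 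This last identification is the direct concave analogue of the verification carried out in \cite{Hayashi2} for convex corners, and the argument transfers once (\ref{exact}) is in place.
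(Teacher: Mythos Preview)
Your proposal is correct and matches the paper's approach: the paper does not give a detailed proof but simply states that Theorem~\ref{BECCA} is parallel to Theorem~3 of \cite{Hayashi2} and omits the details, and what you have written is precisely that parallel argument carried out with the concave extension (\ref{exact}) replacing (\ref{seq1}).
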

Definition~\ref{concavecornerinv} and Theorem~\ref{BECCA} are parallel with Definition~$2$ and Theorem~$3$ of \cite{Hayashi2}, and we omit the detail.
In our setting, we can define convex and concave corner invariants $\hat{\I}_\Corner^{3d, \A}(H)$ and $\check{\I}_\Corner^{3d, \A}(H)$ for convex and concave corners, respectively (the convex corner invariant $\hat{\I}_\Corner^{3d, \A}(H)$ is defined in Definition $2$ of \cite{Hayashi2} and denoted as $\I_\Corner(H)$).
There is the following relation between these two.
\begin{theorem}\label{minusA}
$\mathrm{sf}(\check{\I}_\Corner^{3d, \A}(H)) = - \mathrm{sf}(\hat{\I}_\Corner^{3d, \A}(H))$.
\end{theorem}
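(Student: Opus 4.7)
The plan is to reduce Theorem~\ref{minusA} to Corollary~\ref{relation} through the bulk-edge and corner correspondences for both corner types.

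By Theorem~\ref{BECCA} and its convex counterpart (Theorem~$3$ of \cite{Hayashi2}), both corner invariants $\hat{\I}_\Corner^{3d,\A}(H)$ and $\check{\I}_\Corner^{3d,\A}(H)$ arise as the images of the common gapped bulk-edges invariant $\I_\BE^{3d,\A}(H) \in K_0(\Sab \otimes C(\T))$ under the boundary homomorphisms $\hat{\delta}_0$ and $\check{\delta}_0$ of the six-term exact sequences associated with the $C(\T)$-tensorings of the Toeplitz extensions (\ref{seq1}) and (\ref{exact}), respectively. Thus it suffices to show $\mathrm{sf} \circ \check{\delta}_0 = -\mathrm{sf} \circ \hat{\delta}_0$ on this single class.

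To verify this, I would invoke naturality of the $K$-theory boundary map under the functor $-\otimes C(\T)$, together with the Künneth decomposition $K_0(\Sab \otimes C(\T)) \cong K_0(\Sab) \oplus K_1(\Sab)$. Since $K_1(K(\HHab)) = K_1(K(\HHs)) = 0$, both $\hat{\delta}_0$ and $\check{\delta}_0$ annihilate the $K_0(\Sab)$ summand; on the $K_1(\Sab)$ summand, the composition with spectral flow identifies canonically, via Bott periodicity and the trace isomorphisms $K_0(\hat{\mathrm{Tr}})$ and $K_0(\check{\mathrm{Tr}})$, with the compositions $K_0(\hat{\mathrm{Tr}}) \circ \hat{\delta}_1$ and $K_0(\check{\mathrm{Tr}}) \circ \check{\delta}_1$, respectively. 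Letting $\kappa \in K_1(\Sab)$ denote the $K_1(\Sab)$-component of $\I_\BE^{3d,\A}(H)$, we obtain
\begin{align*}
\mathrm{sf}(\hat{\I}_\Corner^{3d,\A}(H)) &= (K_0(\hat{\mathrm{Tr}}) \circ \hat{\delta}_1)(\kappa), \\
\mathrm{sf}(\check{\I}_\Corner^{3d,\A}(H)) &= (K_0(\check{\mathrm{Tr}}) \circ \check{\delta}_1)(\kappa),
\end{align*}
and Corollary~\ref{relation} then yields the claimed identity.

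The principal obstacle is the bookkeeping for the Künneth and suspension identifications: one must verify that, under the canonical isomorphism $K_1(K(\HHab) \otimes C(\T)) \cong K_0(K(\HHab)) \otimes K_1(C(\T))$ (and its concave analog), the spectral flow $\mathrm{sf}$ coincides with the trace isomorphism $K_0(\hat{\mathrm{Tr}})$ (resp. $K_0(\check{\mathrm{Tr}})$), and that the boundary map of the $C(\T)$-tensored sequence restricted to the $K_1(\Sab)$ summand reduces to $\hat{\delta}_1$ (resp. $\check{\delta}_1$). These are standard consequences of Bott periodicity and the naturality of the six-term exact sequence, but signs must be tracked with care; once they are in place, the required sign flip is supplied entirely by Corollary~\ref{relation}.
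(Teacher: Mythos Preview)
Your proposal is correct and follows essentially the same route as the paper's own proof. The paper makes the same reduction: it splits $K_0(\Sab \otimes C(\T)) \cong K_0(\Sab) \oplus K_0(\Sab \otimes C_0((0,1)))$, projects to the second summand (your observation that the $K_0(\Sab)$ part is annihilated), identifies that summand with $K_1(\Sab)$ via the suspension isomorphism, relates $\mathrm{sf}$ to the trace via the Bott map, and then applies Corollary~\ref{relation}; your ``K\"unneth decomposition'' and Bott-periodicity bookkeeping are exactly these steps phrased in slightly different language.
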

\begin{proof}
Let fix a base point of $\T$. We have the isomorphism $K_0(\Sab  {\otimes} C(\T))$ $\cong K_0(\Sab) \oplus K_0(\Sab  {\otimes} C_0((0,1)))$.
The projection onto the second component gives a homomorphism $p \colon K_0(\Sab  {\otimes} C(\T)) \to K_0(\Sab  {\otimes} C_0((0,1)))$.
Let $\theta \colon K_1(\Sab) \to K_0(\Sab  {\otimes} C_0((0,1)))$ be the suspension isomorphism, and let $\beta \colon K_0(K(\HHs)) \to K_1(K(\HHs)  {\otimes} C_0((0,1)))$ be the Bott isomorphism.
Then, by Corollary~\ref{relation}, we have
\begin{gather*}
\mathrm{sf}(\check{\I}_\Corner^{3d, \A}(H)) = \mathrm{sf} \circ \check{\delta}_0 (\I_\BE^{3d, \A}(H))
	= \mathrm{sf} \circ \check{\delta}_0 \circ p (\I_\BE^{3d, \A}(H))\\
	= K_0(\check{\mathrm{Tr}}) \circ \beta^{-1} \circ \check{\delta}_0 \circ p (\I_\BE^{3d, \A}(H))
	= K_0(\check{\mathrm{Tr}}) \circ \check{\delta}_1 \circ \theta^{-1} \circ p (\I_\BE^{3d, \A}(H))\\
	= - K_0(\hat{\mathrm{Tr}}) \circ \hat{\delta}_1 \circ \theta^{-1} \circ p (\I_\BE^{3d, \A}(H))
	= - \mathrm{sf}(\hat{\I}_\Corner^{3d, \A}(H)),
\end{gather*}
where the last equality follows by the repetition of the previous equalities for convex corners.
\end{proof}

We next consider the case of $\alpha = 0$ and $\beta = +\infty$ (we assume $\mu = 0$) and give a construction of an explicit example.
Let $V_3$ be a finite-rank Hermitian vector space.
Let $H_3$ be a multiplication operator on $l^2(\Z^2; V_3)$ generated by a continuous map $\T^2 \to \End(V_3)$.
We assume that $H_3$ is self-adjoint and invertible (Hamiltonian of a 2-D class A  (conventional) topological insulator).
Let $\I^{2d, \mathrm{A}}(H_3)$ be the topological number of $H_3$.
Let $H_2$ be a bounded linear operator $l^2(\Z; V_2)$ introduced in Sect.~$4.1$ (Hamiltonian of a 1-D class AIII (conventional) topological insulator whose chiral symmetry is implemented by $\Pi_2$).
Using these operators, let us consider the following bounded linear self-adjoint operator $H$ on the Hilbert space $l^2(\Z^3; V_3  {\otimes} V_2)$,
\begin{equation}\label{prodHam}
	H = H_3  {\otimes} \Pi_2 + 1  {\otimes} H_2.
\end{equation}
Its partial Fourier transform gives a family of bounded linear self-adjoint operators $\{ H(t) = H_3(t)  {\otimes} \Pi_2 + 1  {\otimes} H_2 \}_{t \in \T}$ on the Hilbert space $l^2(\Z^2; V_3  {\otimes} V_2)$.
\begin{theorem}\label{productA}
We have $\mathrm{sf}(\check{\I}_\Corner^{3d, \A}(H)) = - \I^{2d, \mathrm{A}}(H_3) \cdot \I^{1d, \AIII}(H_2)$, where the right-hand side is the product of two integers.
\end{theorem}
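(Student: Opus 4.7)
The plan is to reduce the claim to the convex corner case, which was already established in \cite{Hayashi2}, and then apply the sign flip provided by Theorem~\ref{minusA}. Specifically, I would argue in two steps: first invoke the product formula for \emph{convex} corners proved in \cite{Hayashi2} for Hamiltonians of the form (\ref{prodHam}), which states
\[
	\mathrm{sf}(\hat{\I}_\Corner^{3d, \A}(H)) = \I^{2d, \mathrm{A}}(H_3) \cdot \I^{1d, \AIII}(H_2);
\]
second, apply Theorem~\ref{minusA} of the present paper to convert this equality into one for the concave corner invariant with the opposite sign. This gives
\[
	\mathrm{sf}(\check{\I}_\Corner^{3d, \A}(H)) = - \mathrm{sf}(\hat{\I}_\Corner^{3d, \A}(H)) = - \I^{2d, \mathrm{A}}(H_3) \cdot \I^{1d, \AIII}(H_2),
\]
as desired.

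Before citing the convex formula it is prudent to verify that the Hamiltonian $H$ of (\ref{prodHam}) satisfies our standing spectral gap assumption on both edges. This follows just as in Theorem~$4$ of \cite{Hayashi2}: since $\Pi_2$ anti-commutes with $H_2$ one has $H(t)^2 = H_3(t)^2 \otimes 1 + 1 \otimes H_2^2$ (after a brief calculation using $\Pi_2^2 = 1$), so the compressions $H^0(t)$ and $H^\infty(t)$ inherit invertibility from the invertibility of $H_3(t)$ compressed to a half-plane and of $H_2$. Hence Definition~\ref{concavecornerinv} applies and $\check{\I}_\Corner^{3d, \A}(H)$ is well-defined; similarly $\hat{\I}_\Corner^{3d, \A}(H)$ is well-defined.

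The main (and essentially only) obstacle is therefore bookkeeping: confirming that the convex product formula from \cite{Hayashi2} is stated in the same convention as ours, so that the sign from Theorem~\ref{minusA} lands on the correct side. An alternative direct route, should one wish not to rely on \cite{Hayashi2}, would be to compute $\mathrm{sf}(\check{\I}_\Corner^{3d, \A}(H))$ as a signed count of zero-crossings of $\{ \check{H}^{0,\infty}(t) \}_{t \in \T}$, using the decomposition $\Ker \check{H}^{0,\infty}(t) $ analogous to the one in the proof of Theorem~\ref{prodthmAIII}; the spectral flow would then decompose as a product of the spectral flow of the family $\{H_3(t)\}$ compressed to a quarter-plane (which by the bulk-edge correspondence gives $\I^{2d,\mathrm{A}}(H_3)$) with the 1-D AIII index of $H_2$, with the overall sign determined by Corollary~\ref{relation}. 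However, the first route via Theorem~\ref{minusA} is both shorter and keeps this paper consistent with the conventions of \cite{Hayashi2}, so that is the one I would adopt.
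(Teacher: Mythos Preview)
Your proposal is correct and follows essentially the same route as the paper: verify the spectral gap condition for $H^0(t)$ and $H^\infty(t)$ via Theorem~4(1) of \cite{Hayashi2}, invoke the convex product formula of Theorem~4(2) of \cite{Hayashi2}, and then apply Theorem~\ref{minusA} to flip the sign. The only minor difference is that you sketch the $H(t)^2$ computation behind the gap verification, whereas the paper simply cites \cite{Hayashi2} for that step.
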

\begin{proof}
By Theorem~$4$ (1) of \cite{Hayashi2}, for our Hamiltonian $H$ of the form (\ref{prodHam}), the edge Hamiltonians $H^0(t)$ and $H^\infty(t)$ are invertible, and thus, the corner invariant is defined.
By Theorem~$4$ (2) of \cite{Hayashi2}, we have $\mathrm{sf}(\hat{\I}_\Corner^{3d, \A}(H)) = \I^{2d, \mathrm{A}}(H_3) \cdot \I^{1d, \AIII}(H_2)$.
Then, the results follow by Theorem~\ref{minusA}.
\end{proof}

By using these results, we provide an explicit example of a bulk Hamiltonian $H$ such that $H^0$ and $H^\infty$ are both gapped and its corner invariant for the concave corner is nontrivial.
\begin{example}\label{examA}
Let $H'_3$ be the following bounded linear self-adjoint operator on $l^2(\Z^2)  {\otimes} \C^2 \cong l^2(\Z^2; \C^2)$:
\vspace{-2mm}
\begin{equation*}
	H'_3 = \frac{1}{2 i} \sum_{j=1,2} (S_j - S_j^*)  {\otimes} \sigma_j + \bigl(-1 + \frac{1}{2} \sum_{j=1,2}(S_j + S_j^*) \bigl)  {\otimes} \sigma_3,
\end{equation*}
where $S_1 = M_{1,0}$ and $S_2 = M_{0,1}$ are translation operators.
$H'_3$ is an example of a $2$-D type A (conventional) topological insulator.
Its topological invariant is calculated in \cite{PS16} and is $\I^{2d, \mathrm{A}}(H'_1) = -1$.
Let $H'_2$ and $\Pi'$ be following self-adjoint operators on the Hilbert space $l^2(\Z)  {\otimes} \C^2 \cong l^2(\Z, \C^2)$:
\begin{equation*}\tiny
	H'_2 = \frac{1}{2} S  {\otimes} (\sigma_1 + i \sigma_2) + \frac{1}{2} S^*  {\otimes} (\sigma_1 - i \sigma_2), \
	\Pi' = 1  {\otimes} \sigma_3.
\end{equation*}
where $\sigma_1$, $\sigma_2$ and $\sigma_3$ are Pauli matrices\footnote{
$
\sigma_1 =
\begin{pmatrix}
0 & 1\\
1 & 0
\end{pmatrix}, \
\sigma_2 =
\begin{pmatrix}
0 & -i\\
i & 0
\end{pmatrix}, \
\sigma_3 =
\begin{pmatrix}
1 & 0\\
0 & -1
\end{pmatrix}.
$}
and $S$ is the translation operator given by $(S \varphi)(n) = \varphi(n-1)$.
Then, we have $\Pi' H'_2 (\Pi')^* = - H'_2$. This is an example of 1-D class AIII (conventional) topological insulator.
Its topological number is $\I^{1d, \AIII}(H'_2) = -1$ (see \cite{PS16}).
By using them, we consider the following bounded linear self-adjoint operator on $l^2(\Z^3;\C^4)$:
\begin{equation*}
	H = H'_3  {\otimes} \Pi' + 1  {\otimes} H'_2.
\end{equation*}
By Theorem~\ref{productA}, its numerical corner invariant for the concave corner is computed as $\mathrm{sf}(\check{\I}_\Corner^{3d,\A}(H)) = - \I^{2d, \mathrm{A}}(H'_1) \cdot \I^{1d,\AIII}(H'_2) = -(-1) \cdot (-1) = -1$.
Note that by Theorem~\ref{minusA}, the numerical corner invariant for convex corner is $\mathrm{sf}(\hat{\I}_\Corner^{3d,\A}(H)) = 1$
(see also Example~$1$ of \cite{Hayashi2}).
\end{example}

\section{Example and 2-D BBH model}\label{sectexam}

In this section, we introduce an explicit example of 2-D class AIII Hamiltonians whose corner invariant is nontrivial on a system with a codimension-two (convex and concave) corner.
Comparing with this example, we discuss Benalcazar--Bernevig--Hughes' 2-D Hamiltonian \cite{BBH17a} from our viewpoint.

We first study the following 1-D class AIII Hamiltonian;
\begin{equation*}
	H_{\AIII}(k; \gamma_1, \gamma_2, \lambda_1, \lambda_2) = H_{\AIII}(k) := \gamma_1 \sigma_1 + \gamma_2 \sigma_2 + \lambda_1 \cos(k) \sigma_1 + \lambda_2 \sin(k) \sigma_2
\end{equation*}
where $k \in \R/2\pi\Z \cong \T$.
Its chiral symmetry is given by $\sigma_3$.
By the Fourier transform, we obtain a bounded linear self-adjoint operator $H_{\AIII}$ on $l^2(\Z, \C^2)$.
For simplicity, we assume $\lambda_1 \neq 0$ and $\lambda_2 \neq 0$ .
Since
\begin{equation*}
H_{\AIII}(k) \hspace{-0.3mm} = \hspace{-0.3mm}
 \begin{pmatrix}
0 & \gamma_1 \hspace{-0.3mm} - \hspace{-0.3mm} i \gamma_2 \hspace{-0.3mm} + \hspace{-0.3mm} \lambda_1 \hspace{-0.3mm} \cos(k) \hspace{-0.3mm} - \hspace{-0.3mm} i \lambda_2 \sin(k) \hspace{-0.3mm} \\
\gamma_1 \hspace{-0.3mm} + \hspace{-0.3mm} i \gamma_2 \hspace{-0.3mm} + \hspace{-0.3mm} \lambda_1 \hspace{-0.3mm} \cos(k) + i \lambda_2 \hspace{-0.3mm} \sin(k) \hspace{-3mm} & 0 \\
\end{pmatrix},
\end{equation*}
the (bulk) Hamiltonian is invertible (i.e. gapped at zero) when $\left| \gamma_1 / \lambda_1\right|^2 + \left| \gamma_2 / \lambda_2\right|^2  \neq 1$.
This is a model of a 1-D class AIII (conventional) topological insulator, and its topological number, which is  the winding number of $\gamma_1 + i \gamma_2 + \lambda_1 \cos(k) + i \lambda_2 \sin(k)$ around zero, is the following.
\vspace{-1mm}
$$
\I^{1d, \AIII}(H_{\AIII}) =
\left\{
\begin{aligned}
1, & \hspace{3mm} \text{if} \ \left| \gamma_1/\lambda_1 \right|^2 + \left| \gamma_2/\lambda_2 \right|^2 < 1,\\
0, & \hspace{3mm} \text{if} \ \left| \gamma_1/\lambda_1 \right|^2 + \left| \gamma_2/\lambda_2 \right|^2 > 1.
\end{aligned}
\right.
$$

Let $\gamma_{x,1}$\hspace{-0.3mm}, $\gamma_{x,2}$\hspace{-0.3mm}, $\gamma_{y,1}$\hspace{-0.3mm}, $\gamma_{y,2}$\hspace{-0.3mm}, $\lambda_{x,1}$\hspace{-0.3mm}, $\lambda_{x,2}$\hspace{-0.3mm}, $\lambda_{y,1}$ and $\lambda_{y,2}$ be real numbers.
By using $H_\AIII$, we consider the following 2-D Hamiltonian,
\vspace{-1mm}
\begin{eqnarray*}
	&H(k_x, k_y; \gamma_{x,1}, \gamma_{x,2}, \lambda_{x,1}, \lambda_{x,2}, \gamma_{y,1}, \gamma_{y,2}, \lambda_{y,1}, \lambda_{y,2}) := \\
		&H_{\AIII}(k_x; \gamma_{x,1}, \gamma_{x,2}, \lambda_{x,1}, \lambda_{x,2}) \otimes 1
				+ \sigma_3 \otimes H_{\AIII}(k_y; \gamma_{y,1}, \gamma_{y,2}, \lambda_{y,1}, \lambda_{y,2})\\
	&= \gamma_{x,1} \sigma_1 \otimes 1 + \gamma_{x,2} \sigma_2 \otimes 1 + \lambda_{x,1} \cos(k_x) \sigma_1 \otimes 1 + \lambda_{x,2} \sin(k_x) \sigma_2 \otimes 1\\
			& + \gamma_{y,1} \sigma_3 \otimes \sigma_1 + \gamma_{y,2} \sigma_3 \otimes \sigma_2 + \lambda_{y,1} \cos(k_y) \sigma_3 \otimes \sigma_1  + \lambda_{y,2} \sin(k_y) \sigma_3 \otimes \sigma_2. 
\end{eqnarray*}
where $k_x, k_y \in \R/2\pi\Z$.
Just for simplicity, we assume that $\lambda_{x,1}$, $\lambda_{x,2}$, $\lambda_{y,1}$ and $\lambda_{y,2}$ are non-zero.
This Hamiltonian preserves the chiral symmetry given by $\Pi = \sigma_3 \otimes \sigma_3$.
Through the Fourier transform, we obtain a bounded linear self-adjoint operator $H$ on $l^2(\Z^2, \C^4)$.
We now take $\alpha = 0$ and $\beta = \infty$ and introduce two edge Hamiltonians $H^0$, $H^\infty$ and the corner Hamiltonian $H^{0,\infty}$.
When $\left| \gamma_{x,1} / \lambda_{x,1}\right|^2 + \left| \gamma_{x,2} / \lambda_{x,2} \right|^2  \neq 1$ and $\left| \gamma_{y,1} / \lambda_{y,1} \right|^2 + \left| \gamma_{y,2} / \lambda_{y,2} \right|^2  \neq 1$, the (bulk) Hamiltonians $H_{\AIII}(k_x)$ and $H_{\AIII}(k_y)$ of 1-D class AIII (conventional) topological insulators are invertible.
Thus, by Sect.~$4.1$ (or Theorem $4$  (1) of \cite{Hayashi2}), the bulk and two edge Hamiltonians ($H$, $H^0$ and $H^\infty$) are invertible and the numerical corner invariant for the convex corner is defined.
Moreover, by Theorem~\ref{prodthmAIII}, its value is the product of topological numbers of two 1-D class AIII (conventional) topological insulators and is computed as follows.
\begin{gather*}
K_0(\hat{\mathrm{Tr}})\hspace{-0.3mm}(\hat{\I}_\Corner^{2d,\AIII}(H)\hspace{-0.3mm}) \hspace{-0.3mm}=\hspace{-0.3mm} K_0(\hat{\mathrm{Tr}})\hspace{-0.3mm}(\hat{\I}_\Corner^{2d,\AIII}(H(\gamma_{x,\hspace{-0.3mm}1}\hspace{-0.3mm},\hspace{-0.3mm} \gamma_{x,\hspace{-0.3mm}2}\hspace{-0.3mm},\hspace{-0.3mm} \lambda_{x,\hspace{-0.3mm}1}\hspace{-0.3mm},\hspace{-0.3mm} \lambda_{x,\hspace{-0.3mm}2}\hspace{-0.3mm},\hspace{-0.3mm} \gamma_{y,\hspace{-0.3mm}1}\hspace{-0.3mm},\hspace{-0.3mm} \gamma_{y,\hspace{-0.3mm}2}\hspace{-0.3mm},\hspace{-0.3mm} \lambda_{y,\hspace{-0.3mm}1}\hspace{-0.3mm}, \hspace{-0.3mm} \lambda_{y,\hspace{-0.3mm}2})\hspace{-0.2mm})\hspace{-0.2mm})\\
 = \I^{1d, \AIII}(H_{\AIII}(k_x; \gamma_{x,1},\hspace{-0.3mm} \gamma_{x,2},\hspace{-0.3mm} \lambda_{x,1},\hspace{-0.3mm} \lambda_{x,2})) \cdot \I^{1d, \AIII}(H_{\AIII}(k_y; \gamma_{y,1},\hspace{-0.3mm} \gamma_{y,2},\hspace{-0.3mm} \lambda_{y,1},\hspace{-0.3mm} \lambda_{y,2}))\\
 = \left\{
\begin{aligned}
1, & \hspace{3mm} \text{if} \ \ \left| \frac{\gamma_{x,1}}{\lambda_{x,1}} \right|^2 + \left| \frac{\gamma_{x,2}}{\lambda_{x,2}} \right|^2 < 1, \ \text{and} \ \left| \frac{\gamma_{y,1}}{\lambda_{y,1}} \right|^2 + \left| \frac{\gamma_{y,2}}{\lambda_{y,2}} \right|^2 < 1\\
0, & \hspace{3mm} \text{otherwise}.
\end{aligned}
\right.\vspace{-1mm}
\end{gather*}
By Theorem~\ref{minusAIII}, the numerical corner invariant $K_0(\check{\mathrm{Tr}})(\check{\I}_\Corner^{2d,\AIII}(H))$ for the concave corner is also (defined and) computed which is their negative.
Thus, when parameters are taken as  $\left| \gamma_{x,1} / \lambda_{x,1}\right|^2 + \left| \gamma_{x,2} / \lambda_{x,2} \right|^2  < 1$ and $\left| \gamma_{y,1} / \lambda_{y,1} \right|^2 + \left| \gamma_{y,2} / \lambda_{y,2} \right|^2  < 1$, there exist topologically protected corner states both for concave and concave corners associated with $\alpha = 0$ and $\beta = \infty$.

\vspace{-1mm}
\begin{remark}\label{shape}
If we change $\alpha$ or $\beta$, the shape/angle of the corner changes.
The previous results \cite{DH71,Pa90,Ji95} and results of Sect.~$2$ and $3$ enables us to treat corners of angles less than $\pi$ and bigger than $\pi$, respectively.
If we fix the bulk Hamiltonian and change $\alpha$ and $\beta$, a natural question is whether numerical corner invariants changes correspondingly.
Example~\ref{examA} and the above one clarify that numerical corner invariants change depending on the shape of the corner.
More precisely, as in Theorem~\ref{minusAIII} and Theorem~\ref{minusA}, numerical corner invariants for concave and convex corners for fixed $\alpha$ and $\beta$ differ by the factor $-1$.
\end{remark}

\begin{remark}\label{BBH}
Let $U$, $r_4$ and $\Theta$ be following transformations on $\C^4$
;\vspace{-1mm}
\begin{equation*}
U :=
 \begin{pmatrix}
0 & 0 & 0 & -1 \\
1 & 0 & 0 & 0 \\
0 & -1 & 0 & 0 \\
0 & 0 & 1 & 0 
\end{pmatrix},
\ \
r_4 :=
 \begin{pmatrix}
0 & 0 & 1 & 0 \\
1 & 0 & 0 & 0 \\
0 & 0 & 0 & -1 \\
0 & 1 & 0 & 0 
\end{pmatrix}.
\ \
\Theta :=
 \begin{pmatrix}
c & 0 & 0 & 0 \\
0 & c & 0 & 0 \\
0 & 0 & c & 0 \\
0 & 0 & 0 & c 
\end{pmatrix}.
\end{equation*}
where\footnote{We here employ the following identification:
\vspace{-1mm}
$
	\begin{pmatrix}
	a & b\\
	c & d
	\end{pmatrix}
	\otimes A
	=
	\begin{pmatrix}
	aA & bA\\
	cA & dA
	\end{pmatrix}.
$} $c$ is the complex conjugation on $\C$.
Matrices $U$ and $r_4$ are unitary transformations and $\Theta$ is an anti-unitary transformation.
If $\gamma_{x,2} = \gamma_{y,2} = 0$ is satisfied, our Hamiltonian preserves two anti-commuting reflection symmetries.
Specifically, let $m_x := -\sigma_1 \otimes \sigma_3$ and $m_y := - 1 \otimes \sigma_1$, then we have,
\begin{equation*}
	m_x H(k_x, k_y)m_x^* = H(-k_x, k_y), \ \ m_y H(k_x, k_y)m_y^* = H(k_x, -k_y).
\end{equation*}
Further, if $\gamma_x = \gamma_y$ and $\gamma_{x,1} = \gamma_{y,1}$ is satisfied, our Hamiltonian preserves time-reversal, particle-hole and $C_4$-symmetries
\begin{gather*}
	\Theta H(k_x, k_y) \Theta^* = H(-k_x, -k_y), \ \
	\Xi H(k_x, k_y) \Xi^* = -H(-k_x, -k_y),\\
	r_4 H(k_x, k_y) r_4^* = H(k_y, -k_x),
\end{gather*}
where $\Xi = \Theta \circ \Pi$.
In other words, we can see that if $\gamma_{x,2} \neq 0$ and $\gamma_{y,2} \neq 0$, two anti-commuting reflection symmetries, the time-reversal symmetry (TRS) and the particle-hole symmetry (PHS) are broken.
If $\lambda_{x,1} \neq \lambda_{y,1}$ or $\lambda_{x,2} \neq \lambda_{y,2}$, the $C_4$-symmetry is broken\footnote{
Note that $r_4 (\sigma_2 \otimes 1) r_4^* = - \sigma_3 \otimes \sigma_2$, $r_4(\sigma_1 \otimes 1)r_4^* = \sigma_3 \otimes \sigma_1$, $ r_4(\sigma_3 \otimes \sigma_1)r_4^* = \sigma_1 \otimes 1$ and $r_4 (\sigma_3 \otimes \sigma_2) r_4^* = \sigma_2 \otimes 1$ holds.
}.

Let us consider the unitary transformation induced by $U$, specifically, consider the following 2-D Hamiltonian\footnote{Note that we have
	$U(\sigma_1 \otimes 1) U^* = \sigma_1 \otimes 1$,
		$U(\sigma_2 \otimes 1) U^* = - \sigma_2 \otimes \sigma_3$,
			$U(\sigma_3 \otimes \sigma_1) U^* = - \sigma_2 \otimes \sigma_2$ and
				$U(\sigma_3 \otimes \sigma_2) U^* = - \sigma_2 \otimes \sigma_1$};
\begin{eqnarray*}
	& U H(k_x, k_y; \gamma_{x,1}, \gamma_{x,2}, \lambda_{x,1}, \lambda_{x,2}, \gamma_{y,1}, \gamma_{y,2}, \lambda_{y,1}, \lambda_{y,2})U^* := \\
	&= \gamma_{x,1} \sigma_1 \otimes 1 - \gamma_{x,2} \sigma_2 \otimes \sigma_3 + \lambda_{x,1} \cos(k_x) \sigma_1 \otimes 1 - \lambda_{x,2} \sin(k_x) \sigma_2 \otimes \sigma_3 \\
			& - \gamma_{y,1} \sigma_2 \otimes \sigma_2 + \gamma_{y,2} \sigma_2 \otimes \sigma_1 + \lambda_{y,1} \cos(k_y) \sigma_2 \otimes \sigma_2  + \lambda_{y,2} \sin(k_y) \sigma_2 \otimes \sigma_1. 
\end{eqnarray*}
When $\gamma_{x,1} = \gamma_{y,1}$, $\gamma_{x,2} = \gamma_{y,2} = 0$ and $\lambda_{x,1} = \lambda_{x,2} = \lambda_{y,1} = \lambda_{y,2}$, this 2-D model is discussed by Benalcazar--Bernevig--Hughes (Equation (6) of \cite{BBH17a}).
In this case, this model preserves TRS, PHS, the chiral symmetry, two anti-commuting reflection symmetries and $C_4$-symmetry specified by the unitary transform of the above operators\footnote{Specifically, they are $U \Theta U^* = \Theta$, $U \Xi U^* = (\sigma_3 \otimes \sigma_1) \circ \Theta$, $U \Pi U^* = \sigma_3 \otimes \sigma_1$, $U m_x U^* = \sigma_1 \otimes \sigma_3$, $U m_y U^* = \sigma_1 \otimes \sigma_1$ and $U r_4 U^* = 
\begin{pmatrix}
	0 & 1 \\
	-i\sigma_2 & 0
	\end{pmatrix}$, respectively.} \cite{BBH17a}.
For this model, they find the quadrupole phase which hosts topologically protected corner states where they stressed the role of reflection symmetries.
Since the unitary transform does not change these topological invariants, as long as we keep track of its chiral symmetry, the above computation also computes the numerical corner invariant of 2-D BBH model both for convex and concave corners associated with $\alpha = 0$ and $\beta = \infty$.
For such a special choice of parameters (as in \cite{BBH17a}), our result about the existence of topologically protected corner states is consistent with that of Benalcazar--Bernevig--Hughes' and gives another explanation for that.
Note that our results states that there exists topologically protected corner states even if we break TRS, PHS, two anti-commuting reflection symmetries and the $C_4$-symmetry.
\end{remark}

\begin{remark}\label{experiment}
After the work of \cite{BBH17a}, corner states are reported to have been observed experimentally in metamaterials \cite{PBHG18,Gracia18}.
\end{remark}

\vspace{-2mm}
\appendix
\section{Some variants}

As in Remark~\ref{othercases}, most results in this paper also hold in the cases in which the corner (or edges) do not necessarily include lattice points on lines $y=\alpha x$ and $y=\beta x$.
In this appendix, we make this statement precise by fixing the setups and clarifying the corresponding results.
Although the proofs of the corresponding results are parallel with those contained in the main body of this paper, some parts of the discussions are based on the explicit construction of an example, especially the constructions of rank-one projections (Lemma~\ref{contain}) and that of the Fredholm concave corner Toeplitz operator of index one (Theorem~\ref{construction}).
For these reasons, we collect the corresponding results in this appendix.
The corresponding results for quarter-plane Toeplitz operators, briefly mentioned in \cite{Ji95}, are also included for completeness.

Since we consider two edges, corresponding to whether the edge includes lattice points on boundaries, we can consider four cases.
Each case corresponds to the case in which closed subspaces $\HHa$ and $\HHb$ of $\HH$ are spanned by the following sets:

\begin{center}
\noindent
Case~$1$ : $\{ {  e_{m,n}} \mid -\alpha m + n \geq 0 \}$ and $\{ {  e_{m,n}} \mid -\beta m + n \leq 0 \}$, respectively.

\noindent
Case~$2$ : $\{ {  e_{m,n}} \mid -\alpha m + n > 0 \}$ and $\{ {  e_{m,n}} \mid -\beta m + n \leq 0 \}$, respectively.

\noindent
Case~$3$ : $\{ {  e_{m,n}} \mid -\alpha m + n \geq 0 \}$ and $\{ {  e_{m,n}} \mid -\beta m + n < 0 \}$, respectively.

\noindent
Case~$4$ : $\{ {  e_{m,n}} \mid -\alpha m + n > 0 \}$ and $\{ {  e_{m,n}} \mid -\beta m + n < 0 \}$, respectively.\
\end{center}

For these cases, we associate concave corners and define concave corner $C^*$-algebras $\TTs$ in the same way as in Sect.~$2$.
Note that Case~$1$ is already treated in the main body of this paper.
In the following, we assume the condition ($\dagger$) for $\alpha$ and $\beta$.

We first collect constructios of rank-one projections in Cases~$2\sim4$.
They correspond to Lemma~\ref{contain} in Case~$1$.
As in Lemma~\ref{contain}, we take $N \in \{ 2,3, \cdots \}$ such that $\frac{1}{N+1} < \alpha \leq \frac{1}{N}$.

\begin{lemma}
In Case~$2 \sim 4$, some $\tcP_k$ is a rank-one projection.
Explicitly, we have the following results.

\noindent
In Case~$2$,
$\begin{cases}
	\text{when} \ \frac{1}{N+1} < \alpha \leq \frac{1}{N} \ \text{and} \ \beta = 1, \text{we have} \ \tcP_{N-1} = p_{-N-1,-1}.\\
	\text{when} \ \frac{1}{N+1} < \alpha \leq \frac{1}{N} \ \text{and} \ 1 < \beta < \infty, \text{we have} \ \tcP_N = p_{-N-1,-1}.
\end{cases}$

\noindent
In Case~$3$,
$\begin{cases}
	\text{when} \ \alpha = \frac{1}{N} \ \text{and} \ 1 < \beta \leq \infty, \ \text{we have} \ \tcP_{N-1} = p_{-N,-1}.\\
	\text{when} \ \frac{1}{N+1} < \alpha < \frac{1}{N} \ \text{and} \ 1 < \beta \leq \infty, \text{we have} \ \tcP_N = p_{-N-1,-1}.
\end{cases}$

\noindent
In Case~$4$,
$\begin{cases}
\text{when} \ \alpha = \frac{1}{N} \ \text{and} \ \beta = 1, \ \text{we have} \ \tcP_{1} = p_{-1,0}.\\
\text{in the other cases (under $(\dagger)$), we have} \ \tcP_{N} = p_{-N-1,-1}.
\end{cases}$
\end{lemma}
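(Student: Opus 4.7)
The plan is to adapt the proof of Lemma~\ref{Ngeneral} case by case. In each of Cases~$2$, $3$, and $4$ the operators $\check{\cP}_{m,n}$ and $\tcP_k$ are defined by the same formulas as in the main body, and $\tcP_k$ remains the orthogonal projection of $\HHs$ onto the span of $\{\e_{x,y} \mid (x,y) \in B_k\}$. Here $B_k$ is the natural analogue of the set used in the proof of Lemma~\ref{Ngeneral}, but with the boundary inequalities adjusted according to whether $\HHa$ and $\HHb$ include lattice points on the lines $y = \alpha x$ and $y = \beta x$. Once this is noted, the lemma is reduced, in each listed subcase, to verifying that the corresponding $B_k$ is precisely the singleton $\{(x,y)\}$ appearing on the right-hand side.

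First I would write down $B_k$ explicitly in each of the three cases. For instance, in Case~$2$ the strip condition near $y = \alpha x$ becomes $0 < -\alpha x + y \leq \alpha$ (the line is no longer included), while in Case~$3$ it is $0 \leq -\alpha x + y < \alpha$ but the $\beta$-strip is $k\beta \leq -\beta x + y < (k+1)\beta$; in Case~$4$ both strips are of the half-open form dual to Case~$1$. The inequality analysis then proceeds exactly as in the proof of case~$5$ of Lemma~\ref{Ngeneral}: combining the two strip conditions yields bounds of the shape $c_1 \frac{\alpha\beta}{\beta-\alpha} \leq y < c_2 \frac{\alpha\beta}{\beta-\alpha}$ (with strictness dictated by the case), and under $(\dagger)$ the length of this interval is small enough that $y$ is forced to a single integer (typically $-1$, and $0$ in the degenerate Case~$4$ subcase where $\alpha = \tfrac{1}{N}$, $\beta = 1$). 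Back-substituting and using the precise interval $\tfrac{1}{N+1} < \alpha \leq \tfrac{1}{N}$ then pins down $x$ uniquely to the value listed.

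The main point requiring care, and the principal source of the differences between cases, is the bookkeeping of which boundary of each strip is open and which is closed; this is what decides whether a lattice point lying on one of the defining lines belongs to $B_k$, and hence explains why the selected $k$ and $(x,y)$ shift by one between cases (e.g.\ why $\tcP_{N-1}$ appears in Case~$3$ with $\alpha = 1/N$ while $\tcP_N$ appears when $\frac{1}{N+1} < \alpha < \frac{1}{N}$). Some attention is also needed for the small-$N$ degeneracies, in particular $N=2$ in Case~$4$ with $\beta=1$, which is the reason that case is stated separately. Beyond this combinatorial bookkeeping, the computation is routine, and no new ideas beyond those already used for Lemma~\ref{Ngeneral} are required. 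Once the rank-one projections are in hand, the analogues of Proposition~\ref{contain} and Theorem~\ref{construction} in each case follow by verbatim repetition of the arguments in Sections~$2.3$ and~$3.1$.
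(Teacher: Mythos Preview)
Your proposal is correct and follows exactly the approach the paper intends: the appendix does not give a separate proof of this lemma but explicitly states that the arguments are parallel to those of Lemma~\ref{Ngeneral}, and you have correctly identified that the only modification is in the strict/non-strict boundary conventions defining $B_k$, after which the same inequality manipulation pins down the unique lattice point. Your explicit description of the adjusted strips (e.g.\ $0 < -\alpha x + y \leq \alpha$ in Case~$2$, $k\beta \leq -\beta x + y < (k+1)\beta$ in Case~$3$) is accurate, and your remark about the $N=2$ degeneracy in Case~$4$ correctly anticipates why that subcase is singled out.
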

\vspace{-2mm}

We here write down the result of computing the Fredholm index of the following operator in Cases $2\sim4$ which corresponds to Theorem~\ref{construction} in Case~$1$.
\vspace{-1mm}
\begin{equation*}
	\check{A} := \check{\cP}_{0,1} + M_{1,1}(1 - \check{\cP}_{-1,0}) + M_{1,0}(\check{\cP}_{-1,0} - \check{\cP}_{0,1}).
\end{equation*}
\vspace{-6mm}
\begin{proposition}
In Cases~$2\sim4$, $\check{A}$ is a surjective Fredholm operator whose Fredholm index is $1$.
We also have $\check{A} - 1 \in \cCs$.
Its kernel is given as follows:

\noindent
In Case~$2$,
$\begin{cases}
\text{when} \ 0 < \alpha \leq \frac{1}{2} \ \text{and} \ \beta = 1, \Ker \check{A} = \C (\e_{-2, 0} - \e_{-1,0}).\\
\text{when} \ 0 < \alpha \leq \frac{1}{2} \ \text{and} \ 1 < \beta < \infty, \Ker \check{A} = \C (\e_{-1,0} - \e_{0,0}).
\end{cases}$

\noindent
In Case~$3$, under the assumption $(\dagger)$, we have $\Ker \check{A} = \C (\e_{-1,0} - \e_{0,0})$.

\noindent
In Case~$4$, under the assumption $(\dagger)$, we have $\Ker \check{A} = \C (\e_{0,1} - \e_{1,1})$.
\end{proposition}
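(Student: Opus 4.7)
The plan is to mirror the proof of Theorem~\ref{construction} case by case, with the same three-region decomposition of the domain and range of $\check{A}$, only updating the strict/non-strict inequalities in the defining conditions to reflect which boundary lattice points are included in $\check{\Sigma}$. Concretely, in each of Cases~$2\sim 4$ I would define
\[
	\cD_1 = \{(x,y) \in \check{\Sigma} \mid \text{``row'' zero and ``column'' $> 1$}\}, \quad
	\cD_2 = \{(x,y) \in \check{\Sigma} \mid \text{``row'' one and ``column'' $> \beta$}\},
\]
and $\cD_3 = \check{\Sigma} \setminus (\cD_1 \sqcup \cD_2)$, with the corresponding $\cR_1, \cR_2, \cR_3$, where ``row'' and ``column'' mean the values of $-\alpha x + y$ and $-\beta x + y$ respectively, and the strict/non-strict inequalities bounding these values are adapted from Case~$1$ by swapping $\geq 0$ for $>0$ along the line $y=\alpha x$ (for Cases~$2,4$) and $\leq 0$ for $<0$ along $y=\beta x$ (for Cases~$3,4$). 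A direct computation of $\check{A}\e_{x,y}$ as in Theorem~\ref{construction} then shows $\check{A}$ acts as $M_{1,1}$, $M_{1,0}$, and $\id$ on $\cD_1$, $\cD_2$, and $\cD_3$ respectively, provided one is careful that the relevant $\check{\cP}_{0,1}$ or $\check{\cP}_{-1,0}$ now acts as $0$ rather than $1$ on certain boundary lattice points (this is where the four cases genuinely differ).

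Surjectivity is then established by exhibiting preimages on $\cR_1, \cR_2, \cR_3$ exactly as in the main text. Injectivity fails on a single one-dimensional subspace, and identifying it reduces to computing the finite set of points $(x,y) \in \cR_1$ for which $(x+1, y)$ fails to belong to $\cR_1$; under $(\dagger)$ and the modified $\check{\Sigma}$, this set is a single point $(x_1, y_1)$ with unique second preimage $(x_0, y_0) \in \cD_2$ satisfying $\check{A}\e_{x_0,y_0} = \check{A}\e_{x_1,y_1} = \e_{x_1,y_1}$, yielding $\Ker\check{A} = \C(\e_{x_0,y_0} - \e_{x_1,y_1})$. The explicit values listed in the proposition come out of a short arithmetic check using $0 < \alpha \leq 1/2 \leq \beta$; for example, in Case~$4$ the point $(x_1, y_1) = (1,1)$ arises because $(0,0) \notin \check{\Sigma}$ forces $\check{\cP}_{0,1}\e_{0,1} = 0$, so that $\check{A}\e_{0,1} = M_{1,0}\e_{0,1} = \e_{1,1}$ while $\check{A}\e_{1,1} = \e_{1,1}$.

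For $\check{A} - 1 \in \cCs$, I would reuse the Coburn--Douglas construction \cite{CD71}: the convex corner subsemigroup $\hat{\Sigma}$ (with suitable strict inequalities in Cases~$2\sim 4$, or a small translate thereof) still lies in $\check{\Sigma}$, still acts on $\check{\Sigma}$, and still generates $\Z^2$, so the short exact sequence $0 \to \cCs \to \TTs \overset{\sigma \circ \check{\gamma}}{\to} C(\T^2) \to 0$ exists in every case. Then $(\sigma \circ \check{\gamma})(\check{\cP}_{m,n}) = 1$ for every $(m,n) \in \Z^2$, so $(\sigma \circ \check{\gamma})(\check{A}) = 1$, giving $\check{A} - 1 \in \Ker(\sigma \circ \check{\gamma}) = \cCs$. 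Fredholmness with index $1$ then follows from surjectivity together with $\dim \Ker \check{A} = 1$.

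The main obstacle is bookkeeping: one must consistently track which boundary lattice points belong to $\check{\Sigma}$ in each of the three (further subdivided by the sub-cases $\alpha = 1/N$ vs.\ $1/(N+1) < \alpha < 1/N$ and $\beta = 1$ vs.\ $\beta > 1$) situations, because this controls both the precise walls of $\cD_1, \cD_2, \cD_3$ and the value of $\check{\cP}_{0,1}\e_{x,y}$, $\check{\cP}_{-1,0}\e_{x,y}$ at the collision point. The arguments for surjectivity, for the rank-one kernel, and for $\check{A}-1 \in \cCs$ are structurally identical to Case~$1$; only the arithmetic identifying $(x_0, y_0)$ and $(x_1, y_1)$ requires a separate, routine verification in each listed sub-case.
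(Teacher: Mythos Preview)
Your proposal is correct and is exactly the approach the paper intends: the appendix explicitly states that the proofs in Cases~$2\sim4$ are parallel with Theorem~\ref{construction}, and you have spelled out precisely that parallel argument, including the decomposition into $\cD_1,\cD_2,\cD_3$ and $\cR_1,\cR_2,\cR_3$ with the boundary inequalities adjusted, the identification of the collision point $(x_0,y_0),(x_1,y_1)$, and the Coburn--Douglas sequence for $\check{A}-1\in\cCs$. Your worked Case~$4$ example and the remark that the Case~$1$ subsemigroup $\hat{\Sigma}$ still acts on the modified $\check{\Sigma}$ are both correct and fill in exactly the bookkeeping the paper leaves to the reader.
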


We next consider the following quarter-plane Toeplitz operator in Cases~$1\sim4$.
\begin{equation*}
	\hat{A} := \hat{\cP}_{0,1} + M_{1,1}(1 - \hat{\cP}_{-1,0}) + M_{1,0}(\hat{\cP}_{-1,0} - \hat{\cP}_{0,1}).
\end{equation*}
Note that $\hat{A} \in \TTab$.
Jiang shows in \cite{Ji95} that, under the assumption ($\dagger$), this is an isometric Fredholm operator and compute its Fredholm index mainly in the Case~$1$.
The other cases are briefly mentioned (Remark (1) in p2828 of \cite{Ji95}), though their Fredholm indices are stated as $\pm 1$.
We here need to fix its sign in order to obtain the corresponding result for Corollary~\ref{relation} especially in Cases~$2\sim4$.
For this reason, we (re)state necessary results in the following.
Its proof is totally parallel with that of Jiang \cite{Ji95}.

\begin{proposition}[Jiang \cite{Ji95}]
In Cases~$1\sim4$, $\hat{A}$ is an isometric Fredholm operator whose Fredholm index is $-1$.
Its cokernel is given as follows:

\noindent
In Case~$1$, under the assumption $(\dagger)$, we have $\Coker \hat{A} = \C \e_{0,0}$.

\noindent
In Case~$2$, under the assumption $(\dagger)$, we have $\Coker \hat{A} = \C \e_{1,1}$.

\noindent
In Case~$3$,
$\begin{cases}
\text{when} \ 0 < \alpha \leq \frac{1}{2} \ \text{and} \ \beta = 1, \Coker \hat{A} = \C \e_{2,1}.\\
\text{when} \ 0 < \alpha \leq \frac{1}{2} \ \text{and} \ 1 < \beta < \infty, \Coker \hat{A} = \C \e_{1,1}.
\end{cases}$

\noindent
In Case~$4$,
$\begin{cases}
\text{when} \ \alpha = \frac{1}{2} \ \text{and} \ \beta = 1, \Coker \hat{A} = \C \e_{3,2}.\\
\text{when} \ 0 < \alpha < \frac{1}{2} \ \text{and} \ \beta = 1, \Coker \hat{A} = \C \e_{2,1}.\\
\text{when} \ 0 < \alpha \leq \frac{1}{2} \ \text{and} \ 1 < \beta < \infty, \Coker \hat{A} = \C \e_{1,1}.
\end{cases}$
\end{proposition}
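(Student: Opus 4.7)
The plan is to run Jiang's argument \cite{Ji95} in direct parallel with the proof of Theorem~\ref{construction}, adapted to the convex-corner set $\hat{\Sigma}$ in each of the four cases. I would exhibit $\hat{A}$ as a unilateral shift on $\hat{\Sigma}$ by partitioning the index set into three pieces on which $\hat{A}$ acts as one of three explicit translations, verify that each image lies again in $\hat{\Sigma}$ and that the map is injective (so that $\hat{A}$ is isometric with $\Ker \hat{A} = 0$, whence $\ind \hat{A} = -\dim \Coker \hat{A}$), and then locate the lattice points of $\hat{\Sigma}$ missed by the range.

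Concretely, I would set
\[
\cD_1 := \{(x,y) \in \hat{\Sigma} \mid 0 \leq -\alpha x + y < \alpha\},\ \cD_2 := \{(x,y) \in \hat{\Sigma} \mid \alpha \leq -\alpha x + y < 1\},\ \cD_3 := \hat{\Sigma} \setminus (\cD_1 \cup \cD_2),
\]
with the $\leq$/$<$ signs adjusted to the boundary conventions of Cases~$2$--$4$. On $\hat{\Sigma}$ the projections $\hat{\cP}_{0,1}$ and $\hat{\cP}_{-1,0}$ are diagonal in the basis $\{\e_{x,y}\}$ with supports cut out by $-\alpha x + y \geq 1$ and $-\alpha x + y \geq \alpha$ respectively, so a direct computation yields $\hat{A}\e_{x,y} = \e_{x+1,y+1}$ on $\cD_1$, $\hat{A}\e_{x,y} = \e_{x+1,y}$ on $\cD_2$, and $\hat{A}\e_{x,y} = \e_{x,y}$ on $\cD_3$. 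Under $(\dagger)$ each image still satisfies $-\beta x + y \leq 0$ (because $\beta \geq 1$ and $-\beta x + y \leq 0$ on $\hat{\Sigma}$), so $\hat{A}$ sends the orthonormal basis of $\HHab$ injectively to an orthonormal subset, which gives the isometry property and $\Ker \hat{A} = 0$.

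To compute the cokernel I would introduce the dual partition $\hat{\Sigma} = \cR_1 \sqcup \cR_2 \sqcup \cR_3$ with thresholds $0$, $1-\alpha$, $1$ on $-\alpha x + y$, observe that the three maps land inside $\cR_2$, $\cR_1$ and $\cR_3$ respectively, and isolate the missing lattice points. Those are precisely the $(u,v) \in \cR_1$ (respectively $\cR_2$) whose would-be preimage $(u-1,v)$ (respectively $(u-1,v-1)$) violates the $-\beta$-constraint, namely those with $-\beta u + v \in (-\beta, 0]$ (respectively $(1-\beta, 0]$), with endpoint open/closed adjustments dictated by the boundary conventions of each case. The main obstacle will be this case analysis: I expect to split into subcases according to whether $\beta = 1$ or $1 < \beta < \infty$ (the former collapses the ``$\cR_2$-strip'' to an empty interval) and, in parallel, according to whether $\alpha$ takes the boundary value $1/N$ or lies strictly in between, in direct analogy with Lemma~\ref{Ngeneral}. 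In each case, an elementary counting argument should yield exactly one lattice point $(u_0, v_0)$ in the appropriate strip, giving $\Coker \hat{A} = \C\e_{u_0,v_0}$ and confirming $\ind \hat{A} = -1$ together with the specific basis vector listed in the statement.
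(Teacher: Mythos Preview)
Your proposal is correct and follows exactly the route the paper indicates: the paper gives no independent proof here, stating only that ``its proof is totally parallel with that of Jiang \cite{Ji95}'', and your outline is precisely that parallel argument---partitioning $\hat{\Sigma}$ by the $-\alpha x + y$ thresholds $0,\alpha,1$, reading off $\hat{A}$ as $M_{1,1}$, $M_{1,0}$, $\id$ on the three strips, checking the images stay in $\hat{\Sigma}$ under $(\dagger)$ to get the isometry, and then locating the single lattice point missed in the dual $\cR$-partition via the $-\beta u + v$ window. The only cosmetic remark is that the case split in the statement is governed solely by whether $\beta = 1$ or $\beta > 1$ (and, in Case~4 with $\beta = 1$, by whether $\alpha = \tfrac12$ or $\alpha < \tfrac12$), not by the full $1/N$ stratification of Lemma~\ref{Ngeneral}; your counting of the missing point will simplify accordingly.
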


\subsection*{Acknowledgments}
The author would like to thank Takeshi Nakanishi and Yukinori Yoshimura for showing him the result of a numerical calculation, which convinced him about the content of this paper.
He also would like to thank Ken-Ichiro Imura  and Ryo Okugawa for many discussions concerning \cite{BBH17a} and Max Lein for sharing the information regarding \cite{Gracia18}.
The author acknowledge the support of the Erwin Schr{\"o}dinger Institute where part of this work was conducted.
He would like to thank organizers of the workshop ``Bivariant K-theory in Geometry and Physics'' for their hospitability.
This work was supported by JSPS KAKENHI Grant Number JP17H06461 and JP19K14545.


\end{document}